\theoremstyle{definition}
\newtheorem{example}{Example}
\newtheorem{remark}{Remark}
\newtheorem{definition}{Definition}
\newtheorem{proposition}{Proposition}
\DeclareMathOperator*{\plim}{p-lim}
\begin{document}

\title{\bf Robust semiparametric inference with missing data}

\author{Eva Cantoni$^{1}$ and Xavier de Luna$^{2}$ \\
\\
$^{1}$ Research Center for Statistics and \\ Geneva School of Economics and 
Management, \\University of Geneva, Geneva 4, 1211, Switzerland.\\
$^{2}$ Department of Statistics, \\Ume{\aa} School of Business, Economics and Statistics, \\
 Ume{\aa} University, Ume{\aa}, SE-90187, Sweden.}
\providecommand{\keywords}[1]{\textbf{\textit{Keywords: }} #1}
 
\maketitle

\begin{abstract}
 Classical semiparametric inference with missing outcome data is not robust to contamination of the observed data and a single observation can have arbitrarily large influence on estimation of a parameter of interest. This sensitivity is exacerbated when inverse probability weighting methods are used, which may overweight contaminated observations. We introduce inverse probability weighted, double robust and outcome regression estimators of location and scale parameters, which are robust to contamination in the sense that their influence function is
bounded. We give asymptotic properties
and study finite sample behaviour.  Our simulated experiments show that contamination can be more serious a threat to the quality of inference than model misspecification. An interesting aspect of our results is that the auxiliary outcome model used to adjust for ignorable missingness by some of the estimators, is also useful to protect against contamination. We also illustrate through a case study how both adjustment to
ignorable missingness and protection against
contamination are achieved through weighting schemes, which can be contrasted to gain further insights.
\end{abstract}

\keywords{doubly robust estimator; influence function; inverse probability weighting; outcome regression.}

\section{Introduction}

Many data analyses are concerned with drawing inference on a parameter $\beta$ partially characterising a distribution law of interest from which data is assumed to be a random sample. However, most often the observed data deviates from this ideal random sample scenario, for instance such that some of the observations are contaminated, i.e. drawn from a nuisance distribution law. Another common deviation is that the random sample is incomplete: some data are missing, due to dropout in follow up studies, non-response in surveys, etc. Such corrupted random samples are indeed the rule rather than the exception in applications, and we give a telling example in Section \ref{bmi.sec}, where we study BMI change in a ten year follow up study. While methods are available to deal with these two different problems separately as described below, it is essential to have inferential methods able to deal with situations where both types of corruption (missingness and contamination) arise simultaneously. Indeed, while it is well known that many estimating procedures, including OLS, ML, and method of moments, lack robustness to contamination \citep[a single observation can have arbitrarily large influence, e.g.,][]{Hamp:Ronc:Rous:Stah:1986,Heri:Cant:Copt:Vict:2009}, it is seldom acknowledged that this sensitivity to contamination can be exacerbated with estimators able to deal with missing data; see, however, \citet[][]{Hulliger:1995} and \citet{Beaumont:2013}, where this increased sensitivity has been pointed out in the context of surveys of finite populations. The potential increase sensitivity to contamination arises in particular for estimators overweighting some of the observations (those representing part of the data which are missing), and if the overweighted observations are by chance contaminated, this will have large negative impact on the inference. Thus, while robust methods are important in general, they are even more so when missing data needs to be accounted for.


In this paper we focus on situations where, while observations may be missing for the response, 
there is a set of background variables (covariates) which are observed for all units, and we can assume that outcomes are independent of missingness given the observed covariates (ignorable missingness assumption). Under the latter assumption, auxiliary (nuisance) models explaining the missingness mechanism and the outcome given the covariates can be combined in different ways to
obtain semiparametric estimators of $\beta$. Classical examples include inverse probability weighted estimators (IPW), using the missingness mechanism model as weights \citep{Hor:Thom:1952}, and augmented inverse probability weighted estimators \citep[AIPW,][]{robins1994estimation} using both auxiliary models. AIPW estimators are then robust to the misspecification of one of these two auxiliary models at a time (thus the name doubly robust estimator often used); see, e.g., \cite{tsiatis2006}, \cite{RotnitzkyStijn:15}. Finally, outcome regression imputation (OR) estimators using only the model for the outcome may also be used, thereby avoiding weighting \citep[][]{kang2007,Tan:2007}.

 
Within this context of ignorable missing data in the outcome, we introduce and study
estimators that are able to deal with situations where most of
 the units in the sample are randomly drawn from the distribution of interest
 while a smaller number of units is possibly drawn from another nuisance
 distribution. An estimator is considered robust to such contamination if it has
 bounded influence function, see \cite{Hamp:Ronc:Rous:Stah:1986}. This is because the influence function measures the asymptotic bias due to an infinitesimal contamination. A single observation can thus yield arbitrarily large bias if the influence function of the estimator is not bounded.  
 Classical IPW, AIPW and OR estimators have unbounded influence
 function. They are not robust in this sense, even though AIPW has a robustness property, but merely to misspecification of one of the auxiliary models used.  
 In a full data and finite parametric context, bounded influence function estimators are most  
 naturally introduced as M-estimators \citep{Huber:1964,Hamp:1974}. Here we take
 advantage of the fact that IPW, AIPW and OR estimators are partial M-estimators
 \citep{newey:mcfadden:94,Stef:Boos:2002,Zhel:Gent:Ronc:2012} to propose bounded influence function estimators. 
An interesting result of the introduced estimators is that the auxiliary outcome regression model used by AIPW to improve on efficiency compared to IPW, happens to also be useful in improving on the robustness properties of AIPW and OR. Robustness to contamination is typically obtained at the price of a loss in efficiency, although the latter can be controlled and set to say approximately 5\% under some conditions. On the other hand, our simulated experiments show that moderate contamination seriously affects the quality of classical semiparametric inference, more so than model misspecification. Our
 approach is general and we fully spell out the case where $\beta$ is the two dimensional
 location-scale parameter. 
  
 The paper is organized as follows. Section 2 presents formally the context, and introduces robust estimators for missing data situations, together with their asymptotic properties. Section 3 studies finite sample properties through simulation designs previously used by \cite{Lunceford:Davidian:2004}, to which we have added several contamination schemes. This allows us to study robustness due both to model misspecification and to contamination.       
 In Section 4 a longitudinal study of BMI based on electronic record linkage data is used to illustrate, e.g., how the robustification introduced can be seen as a weighting scheme which can be compared to the weighting used to correct for ignorable missingness. 
 The paper is concluded with a discussion in Section 5. Regularity conditions, proofs, implementation details and exhaustive results from the  simulations are relegated to the Appendix.

\section{Theory and method}\label{theory.sec}
\subsection{Notation and context}
Let a vector variable $Z$ be partitioned as $(Z_{2}',Z_1')'$, and consider the ideal situation when $(Z_{2i},Z_{1i}$, $i=1,\ldots,n)$ are independently drawn from a
probability law with density $p(Z_{2i},Z_{1i};\beta,\eta)=p(Z_{2i};\beta,\eta)p(Z_{1i}\mid Z_{2i};\eta)$ 
for unknown values $\beta=\beta_0$ and $\eta=\eta_0$, where $\beta$, of finite dimension, is the
parameter of interest describing some aspects of the distribution, $\eta$ is a nuisance parameter possibly of infinite
dimension, and $\beta$ and $\eta$ are variationally independent
\cite[semiparametric model; see][Chap.~4]{tsiatis2006}. We consider simultaneously two types of deviation from the above ideal random sample setting.

First, situations where atypical observations can occur in $Z_{2i}$ (and possibly $Z_{1i}$), i.e. where the majority of the data is generated as described above, but some of the observations may be issued from a different, but unknown, distribution. The final goal is to draw inference about $\beta$, even in the presence of a small fraction of spurious data
points. 

Further, we also want to allow for incomplete data situations, where we observe only
$(R_iZ_{2i},Z_{1i},R_{i}$, $i=1,\ldots,n)$, with $R_i$ a binary variable
indicating the observation status of $Z_{2i}$: $R_i=1$ if observed and $R_i=0$
if missing. 
We make throughout the missing at
random assumption (also called ignorable missingness), i.e. $\Pr(R_i=1\mid Z_{2i},Z_{1i})=\pi(Z_{1i})$, with $\pi(Z_{1i})>0$ on the support of $Z_{1i}$. 
The missing assignment mechanism is modelled up
to a parameter $\gamma$, $ \pi(Z_{1i};\gamma)$, and we distinguish cases where this model
is correctly specified, i.e. $\Pr(R_i=1\mid
Z_{1i})=\pi(Z_{1i};\gamma_0)$ for a given but unknown $\gamma_0$, and cases
where it is misspecified, i.e. an incorrect model for $\Pr(R_i=1\mid
Z_{1i})$ is used. 

\subsection{Full data case: robust M-estimators}
Let us first consider an estimating function $m(Z_2;\beta)$, which would be used if we had no missing data ($R_i=1$ for all $i$):
\begin{equation}\label{mestimator.eq}
\sum_{i=1}^n m(Z_{2i};\beta)=0.
\end{equation}
The choice of $m(Z_{2i};\beta)$ may be done based on desired properties for the
resulting M-estimator for $\beta$ (in the complete data case);
e.g., such that $E(m(Z_{2i};\beta_0))=0$ for consistency. The study of robustness properties to contamination 
was formalised in Hampel (1974). The influence function plays a central role because it can be interpreted as measuring the asymptotic bias due to an infinitesimal contamination. Here, the influence function for the resulting estimator $\hat\beta$ solution of (\ref{mestimator.eq}) is
\begin{align}
E\left( - \frac{\partial m(Z_{2i};\beta)}{\partial \beta} \right)^{-1} m(Z_{2i};\beta)
\end{align}
 under suitable regularity conditions \citep{Stef:Boos:2002}.

In the sequel we focus on the location-scale parameter 
$\beta=(\mu=E(Z_{2i}),\sigma^2=Var(Z_{2i}))'$. A commonly used choice is 
$m(Z_{2i};\beta)=(Z_{2i}-\mu,(Z_{2i}-\mu)^2-\sigma^2)'$, because the resulting estimator is efficient in the Gaussian case.
For this choice of $m$ estimating function, the influence function will not be bounded in $Z_{2i}$ and therefore not robust to contamination; see, e.g., \citet[Chap. 2]{Maro:Mart:Yoha:2006}. A general class of
M-estimators for $\mu$ and $\sigma^2$ are solution of (\ref{mestimator.eq}) for 
\begin{align}\label{psi.eq}
m_\psi(Z_{2i};\beta)=& 
\left(\begin{array}{c}
 \psi_{c_{\mu}} \left( \frac{Z_{2i}- \mu}{\sigma}\right)-A  \\
 \psi_{c_{\sigma}}^2\left( \frac{Z_{2i}- \mu}{\sigma}
\right)- B 
\end{array}\right),
\end{align}
where $\psi_{c}(\cdot)$ is an odd function, and where
$A= E\left\{\psi_{c_{\mu}}\left( \sigma_0^{-1} (Z_{2i}- \mu_0) \right)\right\}$ 
and 
$B= E\left\{\psi_{c_{\sigma}}^2\left(\sigma_0^{-1} (Z_{2i}- \mu_0)\right)\right\}$ in order
to ensure that $E(m_\psi(Z_{2i};\beta))=0$ at $\beta_0=(\mu_0,\sigma_0)$, the
true unknown value for $(\mu,\sigma)$.  
Bounded influence function estimators are obtained by using bounded $\psi_{c}(\cdot)$ functions, e.g., the  
Huber function $\psi_c(t) = \min\{ c,\max\{ t,-c\}\} $, and the Tukey biweight
function  
$\psi_c(t) = ((t/c)^2-1)^2 t $ if $|t|<c$ and $0$ otherwise, see
\cite{Heri:Cant:Copt:Vict:2009} for further details. The value for $c$ can be chosen appropriately
to control efficiency under the non-contaminated Gaussian
case.
Equations (\ref{mestimator.eq}) using (\ref{psi.eq}) need to be solved simultaneously for $\mu$ and $\sigma$. 

\subsection{Robust estimation with missing data}
Semiparametric estimation with missing data has been reviewed for instance in \citet{tsiatis2006}. We introduce below novel bounded influence function estimators. 

Let $ \pi(Z_{1i};\gamma)$ be a well specified parametric model, i.e. such that for 
$\Pr(R_i=1\mid Z_{1i})=\pi(Z_{1i};\gamma_0)$ for an unknown value $\gamma_0$.
Assume that we have an estimator 
$\hat\gamma$ of $\gamma$ solution of estimating equations
\begin{align}
\label{estim.gamma}
&\sum_{i=1}^n m_\gamma (R_i,Z_{1i};\gamma)=0,
\end{align}
 such that $\plim_{n\rightarrow\infty}\hat\gamma = \gamma_0$. 

\begin{definition}
A robust inverse probability weighted (RIPW) estimator \\
$(\hat\mu_{RIPW},\hat\sigma_{RIPW})'$ of $(\mu,\sigma)'$ is solution of the estimating
equation: 
\begin{align}\label{robustmu.ipw.eq}
\sum_{i=1}^n \varphi_{RIPW}(Z_{i},R_i;\beta,\hat\gamma)=0,
\end{align}
where
\begin{align}\nonumber
\varphi_{RIPW}(Z_{i},R_i;\beta,\gamma)=\left(
\begin{array}{c}
\frac{R_i \big( \psi_{c_{\mu}}\big( \sigma^{-1}
(Z_{2i}- \mu)\big)-A\big)}{\pi(Z_{1i};\gamma)} \\
 \frac{R_i \big( \psi_{c_{\sigma}}^2\big( \sigma^{-1}
(Z_{2i}- \mu)\big)-B\big)}{\pi(Z_{1i};\gamma)} 
\end{array}\right),
\end{align}
with  $A=E\left\{ \psi_{c_{\mu}}\big( \sigma_0^{-1}
(Z_{2i}- \mu_0)\big)\right\}$ and $
B = E\left\{  \psi_{c_{\sigma}}^2\big( \sigma_0^{-1}
(Z_{2i}- \mu_0)\big)\right\}.
$
\end{definition}

A similar estimator was proposed and studied in \cite{Hulliger:1995} in the
context, however, of finite populations and surveys. Note that letting $\psi_c(t)
= t$, the identity function, yields a classical inverse probability weighted
estimator \citep{Hor:Thom:1952}.

\begin{remark}\label{weight.rem}
	 RIPW estimation can be interpreted as a double weighting scheme estimator, where $Z_{2i}$ observations are weighted with inverse propensity scores $1/\pi(Z_{1i};\gamma)$ (i.e., observations lying on the covariate support where the probability of dropout is higher are overweighted) and with $\psi$ weights $\psi_{c_{\mu}}\big( \sigma^{-1} (Z_{2i}- \mu)\big) / (\sigma^{-1}
(Z_{2i}- \mu))$ (i.e., outlying observations are downweighted).  
These weights as well as the compound weights $1/\pi(Z_{1i};\gamma)  \times \psi_{c_{\mu}}\big( \sigma^{-1}
(Z_{2i}- \mu)\big) / (\sigma^{-1}
(Z_{2i}- \mu))$ may be looked at in applications to gain insight in how the two weighting schemes interact; see Section \ref{bmi.sec} for an illustration.
\end{remark}

\begin{proposition}\label{ripw.prop}
Let $\pi(Z_{1i};\gamma)$ be correctly specified with (\ref{estim.gamma}) such
that
$\plim_{n\rightarrow\infty}\hat\gamma= \gamma_0$. Then, under regularity
conditions given in Appendix~\ref{prop1and3.sec}, $(\hat\mu_{RIPW},\hat\sigma_{RIPW})'$ is consistent for $(\mu_0,\sigma_0)'$ and has the following asymptotic multivariate normal distribution as $n\rightarrow\infty$
$$\sqrt n \Big( (\hat\mu_{RIPW},\hat\sigma_{RIPW})'-(\mu_0,\sigma_0)'\Big )\overset{d}\rightarrow N\big(0,E\big\{IF_{RIPW}(IF_{RIPW})'\big\}\big),$$
where  $IF_{RIPW}$
is the influence function:
\begin{align}
\lefteqn{IF_{RIPW}(Z_{i},R_i; \beta)= -\left\{ E \left[ \frac{\partial m_\psi(Z_{2i};\beta)}{\partial \beta'} \right] \right\}^{-1} \Bigg\{ \varphi_{RIPW}(Z_{i},R_i;\beta,\gamma_0)} \nonumber \\
&- E\left[ \frac{\partial \varphi_{RIPW}(Z_{i},R_i; \beta,\gamma_0) }{\partial \gamma'}\right]\left\{E\left[ \frac{\partial m_\gamma (R_i,Z_{1i};\gamma_0)}{\partial\gamma'}\right]\right\}^{-1}m_\gamma (R_i,Z_{1i};\gamma_0) \Bigg\}. \label{ifipw.eq}
\end{align}
\end{proposition}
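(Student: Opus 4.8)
The plan is to view $(\hat\mu_{RIPW},\hat\sigma_{RIPW})'$ jointly with $\hat\gamma$ as the solution of a stacked system of estimating equations and then apply standard M-estimation (Z-estimation) theory, exploiting the recursive (block-triangular) structure of the system to extract the influence function. Set $\theta=(\beta',\gamma')'$, $\theta_0=(\beta_0',\gamma_0')'$, and
\begin{align*}
\Psi(Z_i,R_i;\theta)=\begin{pmatrix}\varphi_{RIPW}(Z_i,R_i;\beta,\gamma)\\ m_\gamma(R_i,Z_{1i};\gamma)\end{pmatrix},
\end{align*}
so that $(\hat\beta',\hat\gamma')'$ solves $\sum_{i=1}^n\Psi(Z_i,R_i;\hat\theta)=0$ by (\ref{robustmu.ipw.eq}) and (\ref{estim.gamma}); note the second block does not depend on $\beta$. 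First I would check $E[\Psi(Z_i,R_i;\theta_0)]=0$: for the $\gamma$-block this follows from the assumption that (\ref{estim.gamma}) yields $\plim_{n\to\infty}\hat\gamma=\gamma_0$, while for the $\varphi_{RIPW}$-block, conditioning on $(Z_{1i},Z_{2i})$ and using the missing at random assumption $E[R_i\mid Z_{1i},Z_{2i}]=\pi(Z_{1i};\gamma_0)$ makes the inverse-probability weight cancel, so that $E[\varphi_{RIPW}(Z_i,R_i;\beta_0,\gamma_0)]=E[m_\psi(Z_{2i};\beta_0)]=0$ by the definition of the centering constants $A$ and $B$ in (\ref{psi.eq}).

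For consistency, under the regularity conditions of Appendix~\ref{prop1and3.sec} (compact parameter space, identifiability of $\theta_0$ as the unique zero of $\theta\mapsto E[\Psi(Z_i,R_i;\theta)]$, continuity, and a uniform law of large numbers for $n^{-1}\sum_i\Psi(Z_i,R_i;\theta)$), the standard consistency theorem for Z-estimators \citep{newey:mcfadden:94,Stef:Boos:2002} delivers $\plim_{n\to\infty}\hat\theta=\theta_0$, and in particular $\plim_{n\to\infty}(\hat\mu_{RIPW},\hat\sigma_{RIPW})'=(\mu_0,\sigma_0)'$. The boundedness of $\psi_c$ is convenient here, since it makes the needed envelope and moment conditions hold automatically as long as $\pi(Z_{1i};\gamma)$ stays bounded away from $0$ on the support of $Z_{1i}$.

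For asymptotic normality, I would expand $0=n^{-1/2}\sum_i\Psi(Z_i,R_i;\hat\theta)$ to first order about $\theta_0$, obtaining $\sqrt n(\hat\theta-\theta_0)=-\{n^{-1}\sum_i\partial\Psi(Z_i,R_i;\bar\theta)/\partial\theta'\}^{-1}\, n^{-1/2}\sum_i\Psi(Z_i,R_i;\theta_0)+o_P(1)$ for $\bar\theta$ between $\hat\theta$ and $\theta_0$. A law of large numbers gives $n^{-1}\sum_i\partial\Psi/\partial\theta'\to M:=E[\partial\Psi(Z_i,R_i;\theta_0)/\partial\theta']$, which by the recursive structure is block lower-triangular, with upper-left block $M_{\beta\beta}=E[\partial\varphi_{RIPW}/\partial\beta']$, upper-right block $M_{\beta\gamma}=E[\partial\varphi_{RIPW}/\partial\gamma']$, lower-right block $M_{\gamma\gamma}=E[\partial m_\gamma/\partial\gamma']$, and zero lower-left block; a further application of the MAR identity shows $M_{\beta\beta}=E[\partial m_\psi(Z_{2i};\beta)/\partial\beta']$, the full-data matrix appearing in (\ref{ifipw.eq}). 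Since a central limit theorem applies to $n^{-1/2}\sum_i\Psi(Z_i,R_i;\theta_0)$ (finite variance again by boundedness of $\psi_c$ and $\pi>0$), inverting the block-triangular $M$ in closed form and reading off the $\beta$-component of $-M^{-1}\Psi(Z_i,R_i;\theta_0)$ gives exactly $IF_{RIPW}(Z_i,R_i;\beta)$ of (\ref{ifipw.eq}); Slutsky's theorem then yields $\sqrt n\big((\hat\mu_{RIPW},\hat\sigma_{RIPW})'-(\mu_0,\sigma_0)'\big)\overset{d}\rightarrow N\big(0,E\{IF_{RIPW}(IF_{RIPW})'\}\big)$.

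The routine but delicate part — the main obstacle — is discharging the regularity conditions behind the last two paragraphs: uniform convergence of the sample estimating function and of its derivative, non-singularity of $M_{\beta\beta}$ and $M_{\gamma\gamma}$, and the interchange of differentiation and expectation used to identify $M_{\beta\beta}=E[\partial m_\psi/\partial\beta']$. The last point is subtle when $\psi_c$ is only piecewise smooth (e.g. the Huber function), where one relies on the \emph{expected} estimating function being differentiable even though its integrand is not, or on stochastic equicontinuity arguments in the expansion. Finally, the centering constants $A$ and $B$ depend on $\beta_0$ and in practice would be estimated; this is either absorbed into the regularity conditions or handled by appending their (linear) defining equations to the stacked system $\Psi$, which changes no essential step of the argument.
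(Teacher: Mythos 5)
Your proposal is correct and follows essentially the same route as the paper: the paper also treats RIPW (by analogy with its RAIPW proof) via standard Newey--McFadden M-estimation theory, imposing conditions A.9)--A.13) on the stacked vector of $\varphi_{RIPW}$ and $m_\gamma$ and invoking their Theorem 6.1 for the two-step/stacked expansion that yields exactly the influence function \eqref{ifipw.eq}. The only cosmetic difference is in the consistency step, where the paper recasts the estimator as a minimum-distance (GMM-type) problem and verifies uniform convergence with a Taylor expansion in $\hat\gamma$ around $\gamma_0$ (Theorems 2.1 and 2.6 of Newey--McFadden) rather than treating $(\beta,\gamma)$ as a jointly identified Z-estimation problem, but this is the same toolkit and your moment/envelope remarks match the paper's conditions A.4), A.5), A.7).
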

Thus, from (\ref{ifipw.eq}) we see that the influence function of RIPW is
bounded in $Z_{2i}$ if the function $\psi_c(\cdot)$ is bounded. This is not the
case for the classical IPW, cor responding to $\psi_c(t) = t$.

The implementation of RIPW requires the
computation of $A$ and $B$. If the standardized quantity $\sigma_0^{-1} (Z_{2i}-
\mu_0)$ is satisfactorily approximated by a ${\mathcal N}(0,1)$ variate, then
$A=0$ (since $\psi_c$ is odd) and $B$ can be approximated by Monte Carlo
simulations.

In an attempt to improve efficiency one may consider $h(Z_{1i};\beta,\xi)$ a
working model (parametrised with $\xi$) for $E(m(Z_{2i};\beta)\mid
Z_{1i})$. This model is correctly specified for $E(m(Z_{2i};\beta)\mid Z_{1i})$
if $h(Z_{1i};\beta,\xi_0)=E(m(Z_{2i};\beta)\mid Z_{1i})$ for a 
value $\xi_0$. However, we call it working model because we will also consider situations where it is not necessarily correctly specified.
 Assume we have estimators $\hat\xi$ of $\xi$ and $\hat\gamma$ of $\gamma$, respectively solutions of (\ref{estim.gamma}) and 
\begin{align}
\label{estim.xi}
&\sum_{i=1}^n R_i m_\xi (Z_{i};\xi)=0,
\end{align}
such that
$\plim_{n\rightarrow\infty}\hat\xi = \xi^*$ and $\plim_{n\rightarrow\infty}\hat\gamma =
\gamma^*$, for some fix values $\xi^*$ and $\gamma^*$. In the correctly specified cases $\xi^*=\xi_0$ and $\gamma^*=\gamma_0$.
\begin{definition}	
A robust augmented IPW (RAIPW) estimator
$(\hat\mu_{RAIPW},\hat\sigma_{RAIPW})'$ of $(\mu,\sigma)'$ is solution of the estimating equation: 
\begin{align}
\sum_{i=1}^n \varphi_{RAIPW}(Z_{i},R_i;\beta,\hat\gamma,\hat\xi) = 0, 
\label{robusts.aipw.eq} 
\end{align}
where
\begin{align}\nonumber
\varphi_{RAIPW}(Z_{i},R_i;\beta,\gamma,\xi)=\left(
\begin{array}{c}
\frac{R_i \big( \psi_{c_{\mu}}\big( \sigma^{-1}
(Z_{2i}- \mu)\big)-A\big)}{\pi(Z_{1i};\gamma)}-\left[
\frac{R_i-\pi(Z_{1i};\gamma)}{\pi(Z_{1i};\gamma)} h_1(Z_{1i};\beta,\xi) \right] \\
 \frac{R_i \big( \psi_{c_{\sigma}}^2\big( \sigma^{-1}
(Z_{2i}- \mu)\big)-B\big)}{\pi(Z_{1i};\gamma)}- \left[
\frac{R_i-\pi(Z_{1i};\gamma)}{\pi(Z_{1i};\gamma)} h_2(Z_{1i};\beta,\xi) \right] 
\end{array}\right),
\end{align}
$h_1(Z_{1i};\beta,\xi)$ is a working model for 
$E\Big(  \psi_{c_{\mu}}(
\sigma^{-1} 
(Z_{2i}- \mu)\big) -A | Z_{1i} \Big)$ and $h_2(Z_{1i};\beta,\xi)$ for 
$E\Big(  \psi_{c_{\sigma}}^2(
\sigma^{-1} 
(Z_{2i}- \mu)\big) -B | Z_{1i} \Big)$, and  $A= E\left\{ \psi_{c_{\mu}}\big( \sigma_0^{-1}
(Z_{2i}- \mu_0)\big)\right\}$ and $
B = E\left\{  \psi_{c_{\sigma}}^2\big( \sigma_0^{-1}
(Z_{2i}- \mu_0)\big)\right\}.
$
\end{definition}

Using the identity function for $\psi_c$ yields a classical augmented inverse
probability weighting (AIPW) estimator  \citep{robins1994estimation}. 

\begin{proposition}\label{raipw.prop}
Let $\pi(Z_{1i};\gamma)$ be correctly specified with (\ref{estim.gamma}) such
that \\
$\plim_{n\rightarrow\infty}\hat\gamma=\gamma_0$ and/or let
$h(Z_{1i};\beta,\xi)=(h_1(Z_{1i};\beta,\xi),h_2(Z_{1i};\beta,\xi))'$ be
correctly specified with (\ref{estim.xi}) such that
$\plim_{n\rightarrow\infty}\hat\xi=\xi_0$. Then, under regularity conditions
given in Appendix~\ref{proof.section},	 
 $(\hat\mu_{RAIPW},\hat\sigma_{RAIPW})$ is consistent for $(\mu_0,\sigma_0)'$ and has the following asymptotic multivariate normal distribution as $n\rightarrow\infty$
$$\sqrt n \Big( (\hat\mu_{RAIPW},\hat\sigma_{RAIPW})'-(\mu_0,\sigma_0)'\Big )\overset{d}\rightarrow N\big(0,E\big\{IF_{RAIPW}(IF_{RAIPW})'\big\}\big),$$
where 
\begin{align}
\lefteqn{IF_{RAIPW}(Z_{i},R_i; \beta)= -\left\{ E \left[ \frac{\partial m_\psi(Z_{2i},\beta)}{\partial \beta'} \right] \right\}^{-1}\Bigg\{ \varphi_{RAIPW}(Z_{i},R_i; \beta,\gamma^*,\xi^*)} \nonumber \\
&- E\left[ \frac{\partial \varphi_{RAIPW}(Z_{i},R_i; \beta,\gamma^*,\xi^*)}{\partial \gamma'} \right]\left\{E\left[ \frac{\partial m_\gamma (R_i,Z_{1i};\gamma^*)}{\partial\gamma'}\right]\right\}^{-1}m_\gamma (R_i,Z_{1i};\gamma^*) \nonumber \\
&- E\left[ \frac{\partial \varphi_{RAIPW}(Z_{i},R_i;\beta,\gamma^*,\xi^*) }{\partial \xi'}\right]\left\{E\left[ \frac{\partial m_\xi (Z_{i};\xi^*)}{\partial\xi'}\right]\right\}^{-1}m_\xi (Z_{i},\xi^*)\Bigg \}. \label{ifaipw.eq}
\end{align}

\end{proposition}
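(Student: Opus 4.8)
The plan is to view $(\hat\mu_{RAIPW},\hat\sigma_{RAIPW},\hat\gamma,\hat\xi)$ as a single partial M-estimator solving the estimating equations obtained by stacking (\ref{estim.gamma}), (\ref{estim.xi}) and (\ref{robusts.aipw.eq}), and then to specialise the standard sandwich-type results for such estimators \citep{newey:mcfadden:94,Stef:Boos:2002,Zhel:Gent:Ronc:2012} to this particular structure, exactly along the lines of the proof of Proposition~\ref{ripw.prop}. Two features drive the argument: (i) the stacked limiting Jacobian is block lower triangular, because $m_\gamma$ depends only on $\gamma$, $m_\xi$ only on $\xi$, and $\varphi_{RAIPW}$ on $(\beta,\gamma,\xi)$; and (ii) a double-robustness identity holds for the augmented estimating function at $\beta_0$.

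First I would establish consistency. Writing $\varphi_{RAIPW}=\frac{R_i}{\pi(Z_{1i};\gamma)}\,m_\psi(Z_{2i};\beta)-\frac{R_i-\pi(Z_{1i};\gamma)}{\pi(Z_{1i};\gamma)}\,h(Z_{1i};\beta,\xi)$ with $h=(h_1,h_2)'$, the key step is to show that $E[\varphi_{RAIPW}(Z_i,R_i;\beta_0,\gamma^*,\xi^*)]=0$ whenever $\gamma^*=\gamma_0$ or $\xi^*=\xi_0$. Conditioning on $(Z_{1i},Z_{2i})$ and using MAR, so that $E(R_i\mid Z_{1i},Z_{2i})=E(R_i\mid Z_{1i})$ (which equals $\pi(Z_{1i};\gamma_0)$ when the propensity model is correct): if $\pi$ is correct the augmentation term has conditional mean zero given $(Z_{1i},Z_{2i})$ and the remaining term is $E[m_\psi(Z_{2i};\beta_0)]=0$ by the definition of $A$ and $B$; if instead $h$ is correct, rearranging as $\frac{R_i}{\pi^*}(m_\psi-h)+h$ shows that the first piece has conditional mean zero given $Z_{1i}$ while $E[h(Z_{1i};\beta_0,\xi_0)]=E[m_\psi(Z_{2i};\beta_0)]=0$. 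With this identification condition at $\beta_0$, together with the assumed convergences $\hat\gamma\overset{p}\rightarrow\gamma^*$, $\hat\xi\overset{p}\rightarrow\xi^*$ and the regularity conditions of Appendix~\ref{proof.section} (uniform law of large numbers, local uniqueness of the zero), a standard argument yields $(\hat\mu_{RAIPW},\hat\sigma_{RAIPW})'\overset{p}\rightarrow(\mu_0,\sigma_0)'$.

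For the asymptotic distribution I would carry out a joint first-order Taylor expansion of the stacked equations about $(\beta_0,\gamma^*,\xi^*)$, apply the multivariate CLT to the i.i.d.\ summands evaluated there, and invert the block lower triangular limiting Jacobian; back-substitution through the triangular system produces precisely the three terms in (\ref{ifaipw.eq}) — the $\varphi_{RAIPW}$ term plus the two corrections for estimating $\gamma$ and $\xi$, each premultiplied by $E[\partial\varphi_{RAIPW}/\partial\gamma']$ (resp.\ $E[\partial\varphi_{RAIPW}/\partial\xi']$) and the inverse of $E[\partial m_\gamma/\partial\gamma']$ (resp.\ $E[\partial m_\xi/\partial\xi']$). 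The only non-routine algebraic point is the $\beta$-block: one must verify $E[\partial\varphi_{RAIPW}(Z_i,R_i;\beta_0,\gamma^*,\xi^*)/\partial\beta']=E[\partial m_\psi(Z_{2i};\beta_0)/\partial\beta']$ under the same ``or'' hypothesis, which uses MAR when $\pi$ is correct, and the correctness of $h$ together with the interchange $\partial h/\partial\beta'=E[\partial m_\psi/\partial\beta'\mid Z_{1i}]$ when $h$ is correct; this is what makes the leading matrix $\{E[\partial m_\psi/\partial\beta']\}^{-1}$, mirroring (\ref{ifipw.eq}). The main obstacle is exactly this double-robustness bookkeeping: both the identification equality $E[\varphi_{RAIPW}]=0$ at $\beta_0$ and the Jacobian identity for the $\beta$-block have to be checked separately in the two regimes ($\gamma^*=\gamma_0$ with $h$ arbitrary; $\xi^*=\xi_0$ with $\pi$ arbitrary), bearing in mind that $A$ and $B$ are fixed population constants defined at $\beta_0$ rather than estimated, and that under misspecification $\gamma^*$, $\xi^*$ are only defined as probability limits. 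The remaining ingredients — differentiability and domination for interchanging differentiation and expectation, nonsingularity of the diagonal Jacobian blocks, and the moment conditions ensuring the CLT applies — are collected in Appendix~\ref{proof.section} and verified in the usual way.
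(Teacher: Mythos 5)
Your proposal is correct and takes essentially the same route as the paper: both treat $(\hat\beta,\hat\gamma,\hat\xi)$ within the stacked (partial) M-estimation framework, establish consistency via the double-robustness moment identity $E[\varphi_{RAIPW}(Z_i,R_i;\beta_0,\gamma^*,\xi^*)]=0$ plus identification and uniform-convergence regularity conditions (the paper does this by recasting the estimator as a minimum-distance problem, Taylor-expanding in $(\hat\gamma,\hat\xi)$ and invoking Newey--McFadden Theorems 2.1 and 2.6 with explicit domination bounds), and obtain asymptotic normality from the standard two-step sandwich result (the paper simply cites Newey--McFadden Theorem 6.1 for the stacked system). Your explicit verification of the $\beta$-block Jacobian identity $E[\partial\varphi_{RAIPW}/\partial\beta']=E[\partial m_\psi/\partial\beta']$ under the ``or'' hypothesis is a detail the paper leaves implicit in that citation, and it is indeed what makes the leading matrix in \eqref{ifaipw.eq} equal to $\{E[\partial m_\psi/\partial\beta']\}^{-1}$.
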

Thus,  RAIPW is as AIPW doubly robust in the sense that only one of the two
auxiliary models used must be correctly specified in order to obtain consistent
and asymptotic normal estimators. Moreover, the influence function of RAIPW is
bounded in $Z_{2i}$ if the function $\psi_c(\cdot)$ is bounded (assuming the
estimating
equation \eqref{estim.xi} of the auxiliary model has also bounded influence
function in $Z_{2i}$; see Exemple \ref{example.sls}),  while this is not the
case for the classical AIPW.

\begin{example}[RAIPW estimator of location and scale]\label{example.sls}
Let us specify a working model parametrized by $\xi=(\xi_1',\xi_2)'$ as 
\begin{align}\label{auxi.mod}
	Z_{2i}=\tilde h(Z_{1i};\xi_1)+\xi_2\nu,
\end{align}
with $\nu\sim N(0,1$).
Note that this does not constrain $Z_{2i}$ to have a symmetric distribution as was the case for RIPW. 
The corresponding working model for $E(m(Z_{2i};\beta)\mid Z_{1i})$ is such that $h_1(Z_{1i};\beta,\xi)=\tilde h(Z_{1i};\xi_1)-\mu$ and $h_2(Z_{1i};\beta,\xi)=(\tilde h(Z_{1i};\xi_1)-\mu)^2+\xi_2^2-\sigma^2$.
Estimators of $\xi$ with bounded influence function in this context are described in Appendix~\ref{robreg.section}.
Then, 
\begin{align}
	h_1(Z_{1i};\beta,\xi)&=E\Big( \psi_{c_{\mu}}(
\sigma^{-1} 
(\tilde h(Z_{1i};\xi_1)+\xi_2\nu- \mu)\big) | Z_{1i} \Big)-A, \label{workmu.eq}  \\
h_2(Z_{1i};\beta,\xi)&=E\Big(  \psi_{c_{\sigma}}^2(
\sigma^{-1} 
(\tilde h(Z_{1i};\xi_1)+\xi_2\nu- \mu)\big) | Z_{1i} \Big)-B, \label{works.eq}
\end{align}
may be computed using numerical integration for the conditional expectations, where 
$E(\cdot\mid Z_{1i})$ is the expectation under model~\eqref{auxi.mod}. Under the latter model, 
$A=0$ and Monte Carlo simulations can be used to obtain B. 
Both (\ref{workmu.eq}) and (\ref{works.eq}) can be used to obtain RAIPW estimators through \eqref{robusts.aipw.eq}. See Appendix~\ref{implementation.section} for more implementation details.   
\end{example}

Finally, when the outcome model is correctly specified yet another robust estimator can be introduced.
\begin{definition}
A robust outcome regression estimator (ROR) $(\hat\mu_{ROR},\hat\sigma_{ROR})'$ of ($\mu,\sigma$) is solution of 
 \begin{align}\label{imp.eq}
&\sum_{i=1}^n  h(Z_{1i};\beta,\hat\xi)  =\sum_{i=1}^n  \varphi_{ROR}(Z_{1i};\beta,\hat\xi)=0
\end{align}
by using a correctly specified working model $h(Z_{1i};\beta,\xi_0)=E(m(Z_{2i};\beta)\mid Z_{1i})$ together with $\hat\xi$,
an M-estimator (\ref{estim.xi}) of $\xi$ with bounded influence function.
\end{definition}

\begin{proposition}\label{ror.prop}
Let $h(Z_{1i};\beta,\xi)=(h_1(Z_{1i};\beta,\xi),h_2(Z_{1i};\beta,\xi))'$ be correctly specified with (\ref{estim.xi}) such that $\plim_{n\rightarrow\infty}\hat\xi=\xi_0$. Then, under regularity conditions given in Appendix~\ref{prop1and3.sec},	
 $(\hat\mu_{ROR},\hat\sigma_{ROR})'$ is consistent for $(\mu_0,\sigma_0)'$ and has the following asymptotic multivariate normal distribution as $n\rightarrow\infty$
$$\sqrt n \Big( (\hat\mu_{ROR},\hat\sigma_{ROR})'-(\mu_0,\sigma_0)'\Big )\overset{d}\rightarrow N(0,IF_{ROR}(IF_{ROR})'),$$
where 
\begin{align}
IF_{ROR}(Z_{i},R_i; \beta)=&-\left\{ E \left[ \frac{\partial \varphi_{ROR}(Z_{1i}; \beta,\xi_0)}{\partial \beta'} \right] \right\}^{-1}\Bigg\{ \varphi_{ROR}(Z_{1i}; \beta,\xi_0) \nonumber \\
&- E\left[ \frac{\partial \varphi_{ROR}(Z_{1i};\beta,\xi_0) }{\partial \xi'}\right]\left\{E\left[ \frac{\partial m_\xi (Z_{i};\xi_0)}{\partial\xi'}\right]\right\}^{-1}m_\xi (Z_{i},\xi_0)\Bigg \}. \label{ifor.eq}
\end{align}
\end{proposition}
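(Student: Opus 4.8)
The plan is to treat the pair $(\hat\beta',\hat\xi')'$ as a single stacked M-estimator and invoke the standard large-sample theory for such estimators \citep{newey:mcfadden:94,Stef:Boos:2002}, following the same route as the proof of Proposition~\ref{ripw.prop} but with the estimating function that uses no missingness model. Writing $\theta=(\beta',\xi')'$ and $\theta_0=(\beta_0',\xi_0')'$, I would stack the estimating functions as
\[
G(Z_i,R_i;\theta)=\bigl(\varphi_{ROR}(Z_{1i};\beta,\xi)',\; R_i\,m_\xi(Z_i;\xi)'\bigr)',
\]
so that $\hat\theta=(\hat\beta',\hat\xi')'$ solves $\sum_{i=1}^n G(Z_i,R_i;\hat\theta)=0$. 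First I would establish consistency. By assumption $\hat\xi$ converges in probability to $\xi_0$; and since the working model is correctly specified, $E[\varphi_{ROR}(Z_{1i};\beta,\xi_0)]=E[h(Z_{1i};\beta,\xi_0)]=E\bigl[E(m_\psi(Z_{2i};\beta)\mid Z_{1i})\bigr]=E[m_\psi(Z_{2i};\beta)]$, which (because of the definition of $A$ and $B$) vanishes exactly at $\beta_0$. Under the identification and uniform-convergence hypotheses collected in Appendix~\ref{prop1and3.sec}, $\beta_0=(\mu_0,\sigma_0)'$ is then the unique zero of the limiting equation and $\hat\beta$ is consistent for it.

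Next I would carry out the usual mean-value expansion of $n^{-1}\sum_i G(Z_i,R_i;\hat\theta)=0$ about $\theta_0$, obtaining
\[
\sqrt n(\hat\theta-\theta_0)=-\left[\,n^{-1}\sum_{i=1}^n \frac{\partial G(Z_i,R_i;\bar\theta)}{\partial\theta'}\right]^{-1} n^{-1/2}\sum_{i=1}^n G(Z_i,R_i;\theta_0)
\]
for some $\bar\theta$ on the segment joining $\hat\theta$ to $\theta_0$. By consistency and a uniform law of large numbers the bracketed Jacobian converges in probability to $M=E[\partial G(Z_i,R_i;\theta_0)/\partial\theta']$, which is block upper triangular because $m_\xi$ does not depend on $\beta$; inverting the diagonal blocks $E[\partial\varphi_{ROR}/\partial\beta']$ and $E[\partial m_\xi/\partial\xi']$ and reading off the $\beta$-component gives
\[
\sqrt n(\hat\beta-\beta_0)=n^{-1/2}\sum_{i=1}^n IF_{ROR}(Z_i,R_i;\beta_0)+o_p(1),
\]
with $IF_{ROR}$ exactly as in (\ref{ifor.eq}); the off-diagonal block $E[\partial\varphi_{ROR}/\partial\xi']$ is precisely the term that propagates the sampling uncertainty in $\hat\xi$ into $\hat\beta$, i.e.\ what one would miss by treating $\hat\xi$ as fixed. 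Finally, the summands $IF_{ROR}(Z_i,R_i;\beta_0)$ are i.i.d.\ with mean zero --- $E[\varphi_{ROR}(Z_{1i};\beta_0,\xi_0)]=0$ from the previous step and $E[R_i m_\xi(Z_i;\xi_0)]=0$ by the definition of $\xi_0$ --- and have finite second moment under the Appendix conditions, so the multivariate Lindeberg--L\'evy central limit theorem together with Slutsky's lemma yields $\sqrt n(\hat\beta-\beta_0)\overset{d}{\to}N\bigl(0,E[IF_{ROR}(IF_{ROR})']\bigr)$.

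The main obstacle, as for Proposition~\ref{ripw.prop}, is not the algebra above but the verification of the regularity conditions that legitimise the expansion: differentiability of $h(Z_{1i};\beta,\xi)$ in both arguments with integrable dominating envelopes (so that differentiation and expectation can be interchanged and a uniform law of large numbers applies to the Jacobian on a neighbourhood of $\theta_0$), invertibility of $E[\partial\varphi_{ROR}/\partial\beta']$ and of the $\xi$-block, and existence of the relevant second moments; all of these are assembled in Appendix~\ref{prop1and3.sec} and coincide essentially with those already imposed for RIPW. The feature that makes ROR if anything lighter to treat is that $\varphi_{ROR}$ depends on the data only through the always-observed $Z_{1i}$ and on $\xi$, so the first block of the stacked system involves no missingness model at all; and although it is not part of the present statement, it is worth noting that $Z_{2i}$ enters $IF_{ROR}$ only through $m_\xi(Z_i;\xi_0)$, so $IF_{ROR}$ is bounded in $Z_{2i}$ as soon as the M-estimator \eqref{estim.xi} has bounded influence function.
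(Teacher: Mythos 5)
Your argument is correct and follows essentially the same route as the paper: the paper proves Proposition~\ref{raipw.prop} in detail by stacking the estimating functions and invoking the consistency and asymptotic-normality theorems of \cite{newey:mcfadden:94} (Theorems 2.1/2.6 and 6.1), and states that the proof of Proposition~\ref{ror.prop} is the analogous stacked-M-estimator argument with $\theta=(\mu,\sigma,\xi')'$ and the conditions A.1)ii), A.2)ii), A.3), A.6), A.8) and A.9)--A.13). Your explicit mean-value expansion and block-triangular Jacobian computation is exactly the content of that cited machinery, so no genuinely different idea is involved.
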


\begin{example}[ROR estimator of location and scale]\label{example.ror}
Within the context of Example~\ref{example.sls}, assume that model (\ref{auxi.mod}) holds. Then, 
$h(Z_{1i};\beta,\xi)=(\tilde h(Z_{1i};\beta,\xi)-\mu,(\tilde h(Z_{1i};\xi_1)-\mu)^2+\xi_2^2-\sigma^2)'$, and $\xi$ is estimated with a bounded influence function estimator; see
Appendix~\ref{robreg.section} for details.
\end{example}

Unlike for RAIPW, the regularity conditions apply to the working
model $h(Z_{1i};\beta,\xi)$ only. For
instance, to characterise the influence function 
one need to be more specific about the working model (which needs to be
correctly specified).  
On the other hand, the results of Propositions \ref{ripw.prop} and \ref{raipw.prop} for RIPW and RAIPW respectively give specifically the regularity conditions that must apply to the $\psi_c$ functions used, and the influence functions resulting.

We have focused on robustness properties to contamination in the outcome $Z_{2i}$. Contamination in the covariates $Z_{1i}$ may also happen. This is typically tackled by using the Tukey's redescending $\psi$ function which protect against high leverage points, i.e. outlying values in the design space; see, e.g., \citet[Chap. 4 and 5]{Maro:Mart:Yoha:2006} and \citet{Cant:Ronc:2001}.

\section{Simulation experiments}
\label{sim.section}
We present a large simulation exercise to assess several aspects of our
procedure for the joint estimation of location and scale:
behaviour for clean data, robustness to the presence of contamination, and sensitivity to
model misspecification.

\subsection{Simulation setting}
\label{simsetting.section}

We implement the same simulation design as \cite{Lunceford:Davidian:2004}. We consider the covariates $X = (X_1,X_2,X_3)'$
associated with both the missingness mechanism and the outcome, and
the covariates $V= (V_1,V_2,V_3)'$ which are associated only with the outcome. The variables $(X_{1},X_{2}, X_{3}, V_{1}, V_{2}, V_{3})^\prime$ are realizations of the joint distribution of $(X',V')'$ built by first taking $X_3 \sim \mbox{Bernoulli}(0.2)$. Then,
conditionally on $X_3$, $V_3$ is generated as Bernoulli with $\Pr(V_3=1 \mid X_3)
= 0.75 X_3 + 0.25 (1-X_3)$ and finally $(X_1, V_1, X_2 ,V_2)' \mid X_3$ is taken
from a multivariate normal distribution ${\mathcal N}(\tau_{X_3},\Sigma_{X_3})$,
where  
$\tau_1 = (1,1,-1,-1)'$, $\tau_0 = (-1,-1,1,1)'$ and
$$\Sigma_1 = \Sigma_0 = \left( \begin{array}{cccc}
1 & 0.5 & -0.5 & -0.5 \\
0.5 & 1 & -0.5 & -0.5\\
-0.5 & -0.5 & 1 & 0.5 \\
-0.5 & -0.5 & 0.5 & 1 
\end{array} \right).$$ 

For each individual $i=1,\ldots,n$, the missingess mechanism indicator $R_i$ is generated as a Bernoulli
variable with probability of missingness ($R_i=0$) defined by 
\begin{equation*}
\Pr(R_i =0\mid X,V)  = \frac{\exp( \gamma_{1} +\gamma_{2} X_{1i} + \gamma_{3} X_{2i} +\gamma_{4} X_{3i}) }{1+\exp( \gamma_{1} +\gamma_{2} X_{1i} + \gamma_{3} X_{2i} +\gamma_{4} X_{3i})},
\end{equation*}
which corresponds to the control group in \cite{Lunceford:Davidian:2004}.

The response $Z_{2i}$ is generated according to the model
\begin{equation}
\label{sim.model}
Z_{2i} = \xi_{10} +\xi_{11} X_{1i} + \xi_{12} X_{2i} +\xi_{13} X_{3i} + \xi_{14} V_{1i} + \xi_{15} V_{2i} + \xi_{16} V_{3i} + \epsilon_i,
\end{equation}
where $\epsilon_i \sim {\mathcal N}(0,\xi_2^2 =1)$ and in our notation $\xi_1=(\xi_{10},\xi_{11},\cdots,\xi_{16})$.

The parameter values $(\xi_{10},\xi_{11}, \xi_{12},\xi_{13} )' =
(0,-1,1,-1)'$ are kept fixed throughout, whereas different scenarios are
considered for $(\xi_{14},\xi_{15},\xi_{16})'$ and $\gamma$, namely
\begin{equation}
\label{csi.value}
(\xi_{14},\xi_{15},\xi_{16})'= \left\{ 
\begin{array}{ll}
(-1,1,1)' & \mbox{strong association}\\
(-0.5,0.5,0.5)' & \mbox{moderate association}\\
(0,0,0)' & \mbox{no association}
\end{array} \right.
\end{equation}
and
\begin{equation}
\label{gamma.value}
 \gamma = (\gamma_1,\gamma_2,\gamma_3,\gamma_4)' = \left\{ 
\begin{array}{ll}
(0,0.6,-0.6,0.6)' & \mbox{strong association}\\
(0,0.3,-0.3,0.3)' & \mbox{moderate association.}
\end{array} \right.
\end{equation}

Notice that when $(\xi_{14},\xi_{15},\xi_{16})' =(0,0,0)'$, $V$ is associated with neither the outcome nor the missingness mechanism. 
The values of $\xi$ and
$\gamma$ are such that lower response values and lower 
probabilities of missingness are obtained when $X_3=1$, and conversely when
$X_3=0$. 

We generate $1000$ realisations of size $n=1000$ and $5000$, called clean datasets, i.e. free of contamination. We present results for $n=1000$, while the larger sample size confirmed the results and are omitted. Departing from the clean datasets, we obtain corresponding contaminated datasets according to different schemes as we describe in Section~\ref{simcontam.section}.

The combination of parameters in \eqref{csi.value} and \eqref{gamma.value} gives
six designs.  
For each design, we fit a total of 20 estimators of $\beta = (\mu,
\sigma)'$. They differ in the choice of estimation strategy (IPW, AIPW, OR), whether they are in their classical or robust versions, and whether the auxiliary models are misspecified or not. Thus, we consider 
\\ \vskip-3mm
\indent IPW($X$), AIPW($X,X$), AIPW($X,XV$), OR($X$) and OR($XV$), 
\\ \vskip-3mm
\noindent and their robust versions 
\\ \vskip-3mm
\indent RIPW($X$), RAIPW($X,X$), RAIPW($X,XV$), ROR($X$) and ROR($XV$),
\\ \vskip-3mm
\noindent where the covariate sets used in the auxiliary models are given within parentheses, and, e.g., AIPW($X,XV$), means that the first set $X$ is used to explain $R_i$ and the second set $XV:= (X,V)$ is used to explain $Z_{2i}$. All these estimators use well specified auxiliary models. We, moreover, consider estimators using misspecified auxiliary models as follows: 
\\ \vskip-3mm
\indent IPW($X_{\_}$), AIPW($X_{\_},XV$), AIPW($X,X_{\_}V$), AIPW($X_{\_},X_{\_}V$), and OR($X_{\_}V$),
\\ \vskip-3mm
\noindent and their robust versions 
\\ \vskip-3mm
\indent RIPW($X_{\_}$), RAIPW($X_{\_},XV$), RAIPW($X,X_{\_}V$), RAIPW($X_{\_},X_{\_}V$), \\ \indent and ROR($X_{\_}V$), 
\\ \vskip-3mm
\noindent where
$X_{\_}:=X\setminus X_1$ and $X_{\_}V:=(X_{\_},V)$.
Auxiliary models explaining $R_i$ and $Z_{2i}$  are fitted using, respectively, logistic regression and ordinary least squares for the classical versions, and robust logistic regression and robust linear regression for the robust versions. For RIPW and RAIPW estimators Tukey's $\psi$ function is used in \eqref{robusts.aipw.eq}. Tukey's $\psi$ function is usually preferred over Huber's with asymmetric contamination.
The robust estimators are tuned to have approximately $95\%$ efficiency at the correctly specified models for clean data. The values of the corresponding tuning constants are given in Appendix~\ref{tuning.section}.
For details on the computation see Appendix~\ref{implementation.section}.

\subsection{Results for clean data}

The top half of Figure~\ref{CleanContam2XiModerateGammaModerateFig} summarises with boxplots the estimates of $\mu$  (left)
and $\sigma$ (right) for clean data, i.e.\  when the 1000 replicates are generated from the design introduced in Section~\ref{simsetting.section}, with $\gamma$ moderate and $\xi$ moderate.
The first row of panels show that for both $\mu$ and $\sigma$ all the estimators (classical and robust) except RIPW are, as expected, unbiased. The bias of RIPW is due to the correction terms ($A$ and $B$ in \eqref{robustmu.ipw.eq}) which are in this setting badly approximated based on the assumption that $Z_{2i}$ is normally distributed. This is improved for RAIPW, because the use of the outcome model allows for a better approximation of the correction terms. Also as expected, (R)IPW is more variable than (R)AIPW. 

\begin{figure}[h!]
\centering
\makebox{\includegraphics[width=\textwidth]{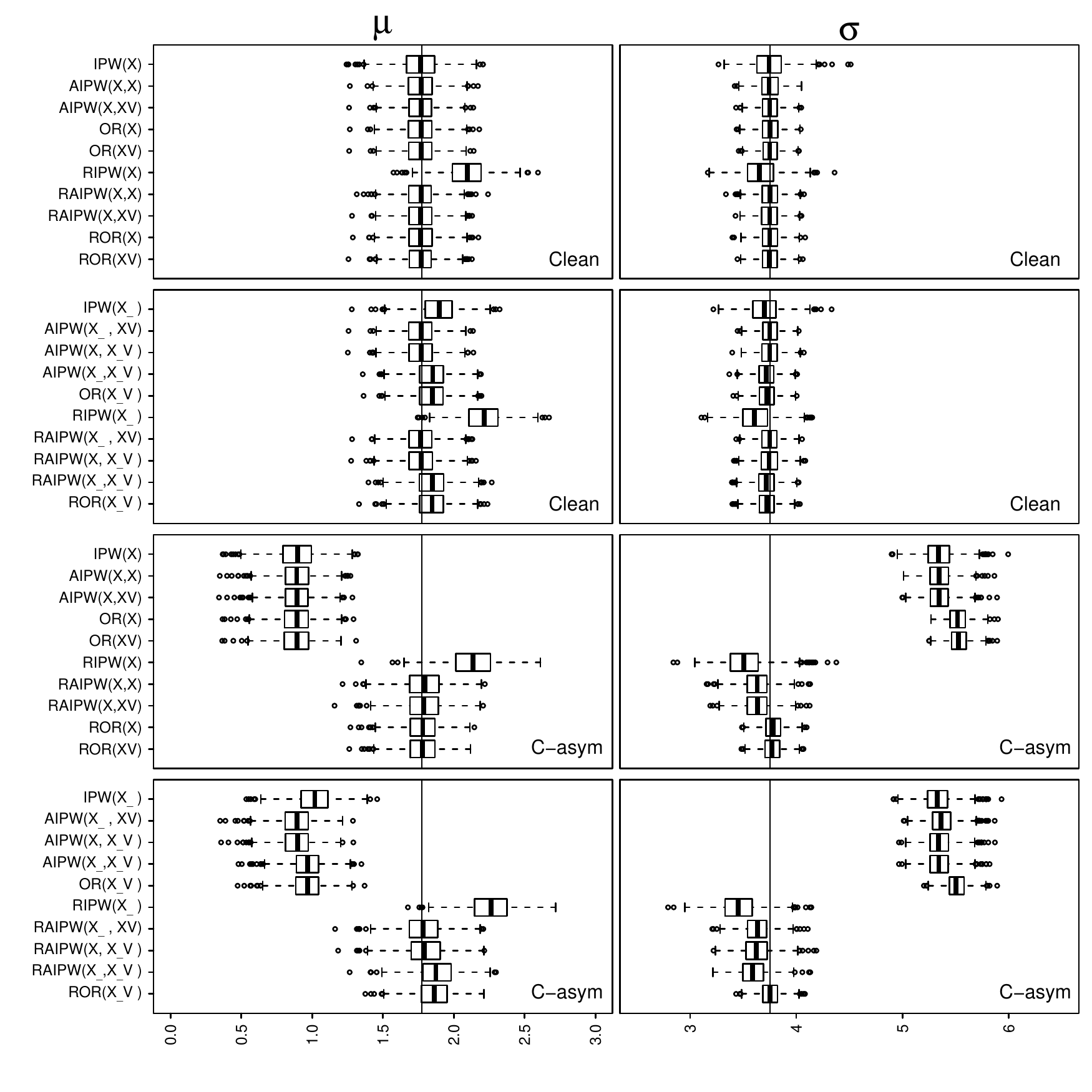}}
\caption{\label{CleanContam2XiModerateGammaModerateFig}Estimates of $\mu$  (left) and $\sigma$ (right) for the $\gamma$  moderate-$\xi$ moderate scenario for clean data and under the C-asym contamination.  The vertical lines represent the true underlying values.}
\end{figure}

The second row of panels confirms some other well known properties of the (A)IPW estimators: the bias due to misspecification of the missingness mechanism for IPW, the double robustness property of AIPW (i.e.,\ unbiasedness if only one of the auxiliary models is misspecified) and the sensitivity of the OR estimator to the misspecification of the outcome regression model. Essentially, these properties are preserved for the robust versions introduced herein. These results are also summarised numerically in Table~\ref{CleanModerateModerateTable} (Appendix~\ref{simulationmoderatemoderate.section}), where bias, standard deviations and root mean squared error of the estimators are reported. 
The results for the other five combinations of parameters (\eqref{csi.value} - \eqref{gamma.value}) deliver a similar general message,  with different magnitudes. The corresponding figures and tables supporting this claim are provided in Appendix~\ref{simulationothers.section}. 

\subsection{Results under contamination}
\label{simcontam.section}
With the result above that expected behaviours are obtained with clean data, we study now the effect on estimation of deviations from the data generating mechanism of interest.
To generate a contaminated sample, $5\%$ of the observed responses (i.e.\  data points for which $R_i=1$) issued from model~\eqref{sim.model} were randomly chosen and changed to the realization of:
\begin{description}
\item[C-asym] $U \sim {\mathcal U}(-20,-12)$, 
\item[C-sym] $W=BU-(1-B)U$, where $B$ is Bernoulli(probability=0.5),
\item[C-hidden] $N \sim {\mathcal N}(-10,0.4)$.
\end{description}
For the C-asym case, the range of the uniform distribution has been set such that it falls approximately outside the observed range of clean $Z_{2i}$. C-sym is the symmetric version of C-asym, and the C-hidden case is such that the contamination is not clearly visible when looking at the observed $Z_{2i}$ marginally. Figure~\ref{ContamSchemesFig} displays a realization of each scheme for the scenario with $\gamma$ moderate and $\xi$ moderate: the values of $Z_{2i}$ are plotted against $E(Z_{2i} \mid Z_{1i})$, the linear predictor of the outcome model~\eqref{sim.model}, with a histogram of the marginal distribution of $Z_{2i}$.

\begin{center}
\begin{figure}

\vskip 0.3cm
\hspace*{-0.5cm}
\begin{tabular}{ccc}
C-asym & C-sym & C-hidden  \\
\includegraphics[width=0.33\textwidth]{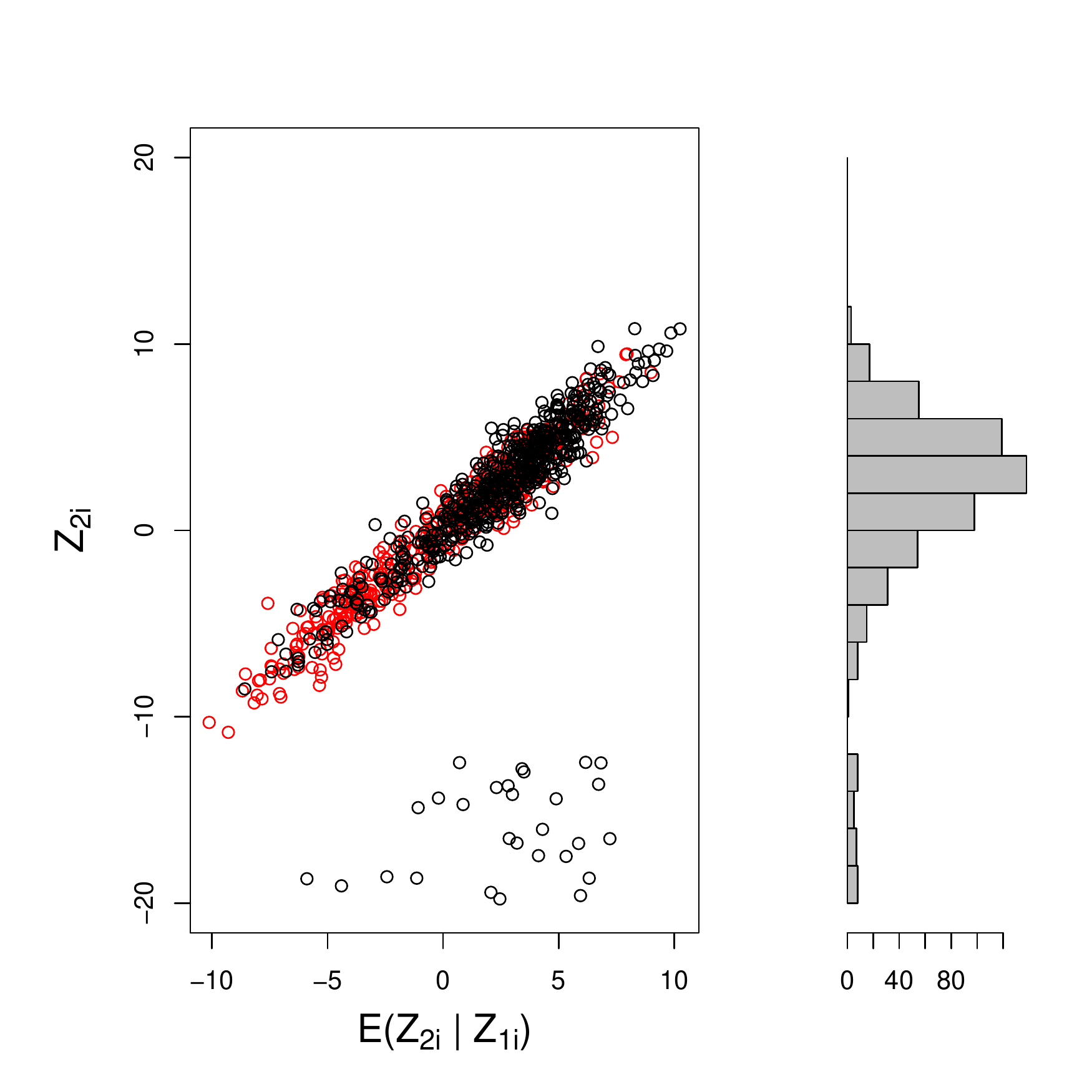} & \includegraphics[width=0.33\textwidth]{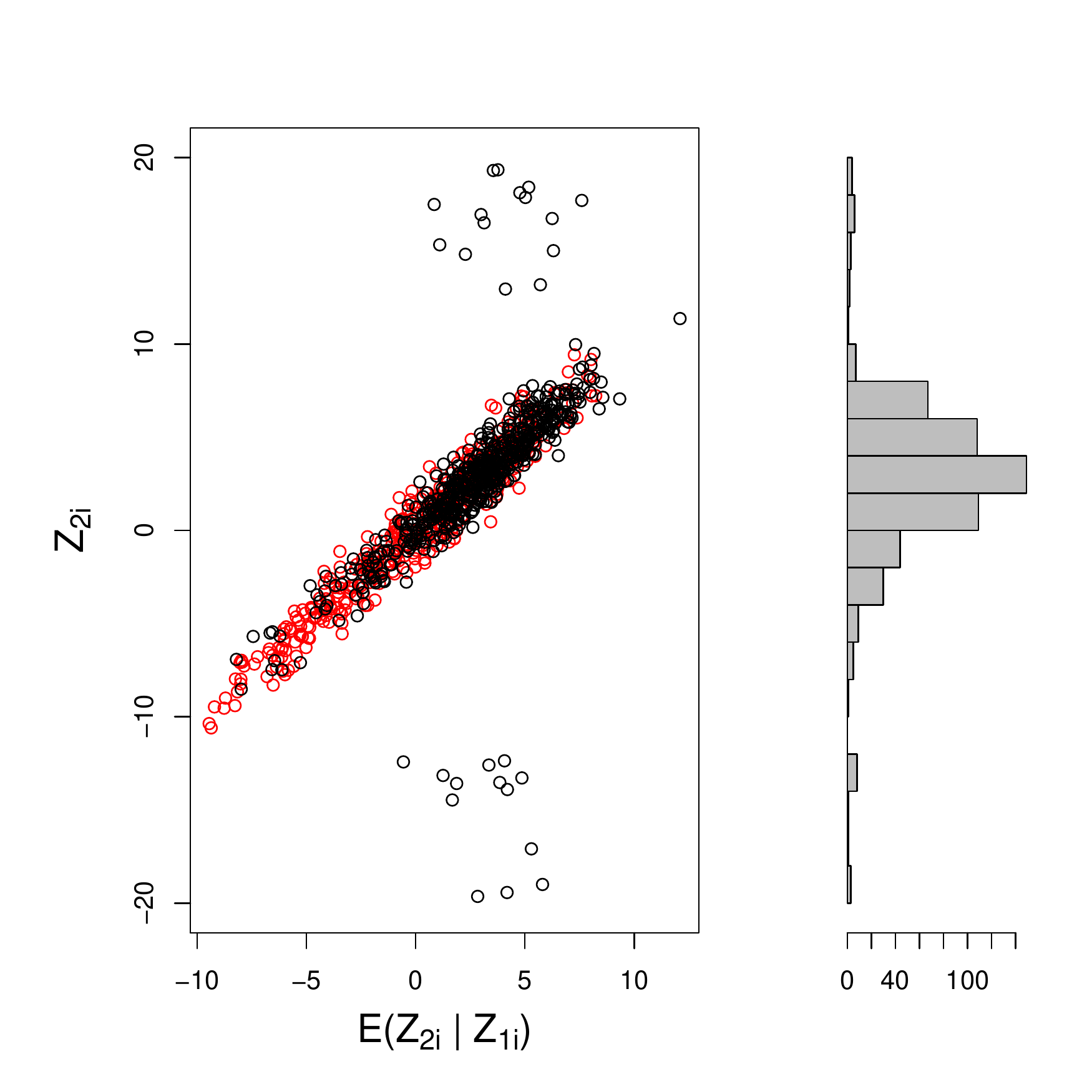} &
\includegraphics[width=0.33\textwidth]{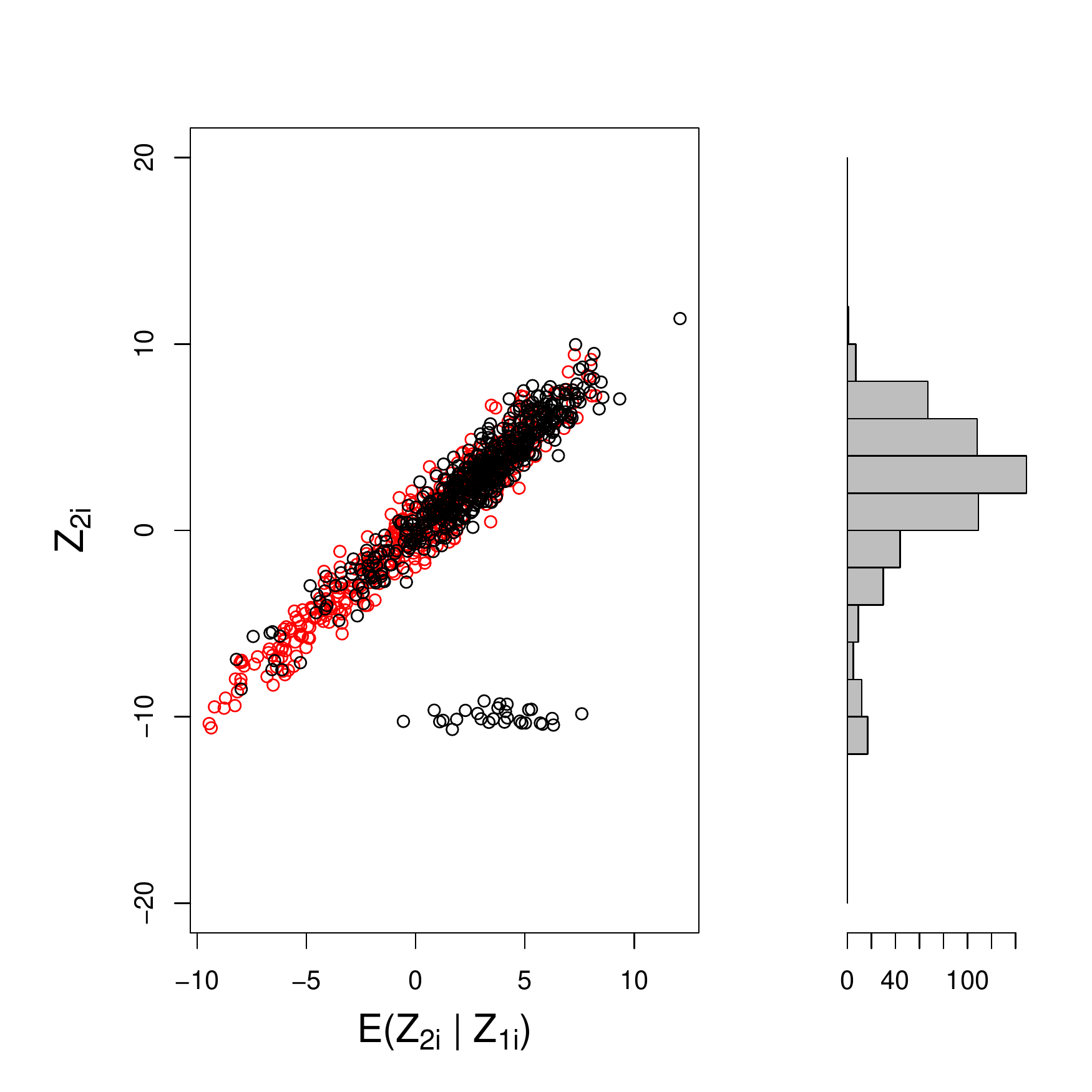}\\
\end{tabular}
\caption{One realization of size 1000 for the three contamination schemes considered: black circles observed outcomes ($R_i=1$) and circles unobserved outcomes ($R_i=0$). The histograms are over the observed outcomes $Z_{2i}$.} 
\label{ContamSchemesFig}
\end{figure}
\end{center}

We present the results for the $\gamma$ moderate and $\xi$ moderate design and contamination C-asym in bottom half of Figure~\ref{CleanContam2XiModerateGammaModerateFig}. The results for the other $\gamma$-$\xi$ combinations are given in Appendix~\ref{simulationothers.section}. The third row of panels of Figure~\ref{CleanContam2XiModerateGammaModerateFig} displays the results for the correctly specified models. We can see that for $\mu$ all the classical methods suffer a negative bias (underestimation) due to the presence of contamination, and these bias are of similar magnitude. For $\sigma$, the biases of the classical methods are positive (overestimation), with OR estimators being even more affected than (A)IPW estimators. On the other hand, RAIPW and ROR perform well, producing estimates in target with the true underlying values. A slight negative bias remains for RAIPW(X,XV) for $\sigma$. However, the size of this bias is negligible compared to the bias induced by contamination on the classical estimators. 

The fourth row of panels of  Figure~\ref{CleanContam2XiModerateGammaModerateFig}  shows the effects of both misspecification and contamination. We observe that the bias due to contamination is more severe than bias due to misspecification in the setting simulated (compare with the second row of the same figure, i.e. clean data). 

Root mean squared errors (RMSE) are given in Table \ref{Contam2ModerateModerateTable}, yielding further insights. In particular, while AIPW and OR with correct model specification were comparable in terms of empirical RMSE in the clean data designs (see RMSE tables in Appendix~\ref{simulationmoderatemoderate.section}), we observe that ROR outperforms RAIPW in this contamination case, particularly so when estimating $\sigma$ (upper half of Table \ref{Contam2ModerateModerateTable}). In fact we see that ROR has both lower bias and variance in this setting. When model misspecification occurs (lower half of Table \ref{Contam2ModerateModerateTable}), ROR also outperforms RAIPW if the latter misspecifies both auxiliary models, otherwise RAIPW has lowest RMSE for estimation of $\mu$ when one of the model used is correct. 

Results for the other parameter combinations under C-asym contamination carry similar messages, with different magnitudes; see figures and tables in Appendix~\ref{simulationothers.section}.

\begin{table}

\caption{\label{Contam2ModerateModerateTable} Bias, standard deviation and root mean squared error (times 10) of the estimates across simulations for the $\gamma$  moderate-$\xi$ moderate scenario under C-asym contamination.}
\centering

\begin{tabular}{lrrrrrr}\hline
($\times 10$)  & bias($\hat{\mu}$) & sd($\hat{\mu})$ & $\sqrt{\mbox{mse}(\hat{\mu})}$ & bias($\hat{\sigma}$) & sd($\hat{\sigma})$ & $\sqrt{\mbox{mse}(\hat{\sigma})}$ \\ 
  \hline
IPW(X) & -8.835 & 1.506 & 8.962 & 15.903 & 1.511 & 15.974 \\ 
  AIPW(X,X) & -8.864 & 1.266 & 8.954 & 15.941 & 1.308 & 15.994 \\ 
  AIPW(X,XV) & -8.858 & 1.254 & 8.946 & 15.947 & 1.297 & 15.999 \\ 
  OR(X) & -8.873 & 1.292 & 8.967 & 17.684 & 1.023 & 17.714 \\ 
  OR(XV) & -8.866 & 1.278 & 8.958 & 17.774 & 1.008 & 17.803 \\ 
  RIPW(X) & 3.583 & 1.702 & 3.966 & -2.370 & 2.051 & 3.134 \\ 
  RAIPW(X,X) & 0.182 & 1.477 & 1.487 & -1.233 & 1.392 & 1.859 \\ 
  RAIPW(X,XV) & 0.165 & 1.452 & 1.460 & -1.207 & 1.357 & 1.815 \\ 
  ROR(X) & 0.017 & 1.298 & 1.297 & 0.295 & 1.021 & 1.062 \\ 
  ROR(XV) & 0.026 & 1.274 & 1.274 & 0.194 & 0.980 & 0.999 \\ 
   \hline
   IPW($X_{\_}$) & -7.618 & 1.443 & 7.754 & 15.815 & 1.451 & 15.881 \\ 
  AIPW($X_{\_}, XV$) & -8.865 & 1.261 & 8.954 & 16.153 & 1.292 & 16.204 \\ 
  AIPW($(X,X_{\_}V$) & -8.845 & 1.261 & 8.934 & 15.939 & 1.297 & 15.991 \\ 
  AIPW($X_{\_},X_{\_}V$) & -8.088 & 1.247 & 8.184 & 15.947 & 1.283 & 15.999 \\ 
  OR($X_{\_}$) & -8.091 & 1.264 & 8.189 & 17.559 & 1.003 & 17.588 \\ 
  RIPW($X_{\_}$) & 4.834 & 1.656 & 5.109 & -2.911 & 1.898 & 3.474 \\ 
  RAIPW($X_{\_}, XV$) & 0.113 & 1.447 & 1.451 & -1.193 & 1.330 & 1.786 \\ 
  RAIPW($X,X_{\_}V$) & 0.244 & 1.480 & 1.499 & -1.256 & 1.459 & 1.924 \\ 
  RAIPW($X_{\_},X_{\_}V$) & 1.034 & 1.452 & 1.782 & -1.616 & 1.386 & 2.128 \\ 
  ROR($X_{\_}V$) & 0.853 & 1.291 & 1.547 & 0.041 & 1.040 & 1.041 \\ \hline
\end{tabular} 
\end{table}

\begin{figure}
\centering
\includegraphics[width=\textwidth]{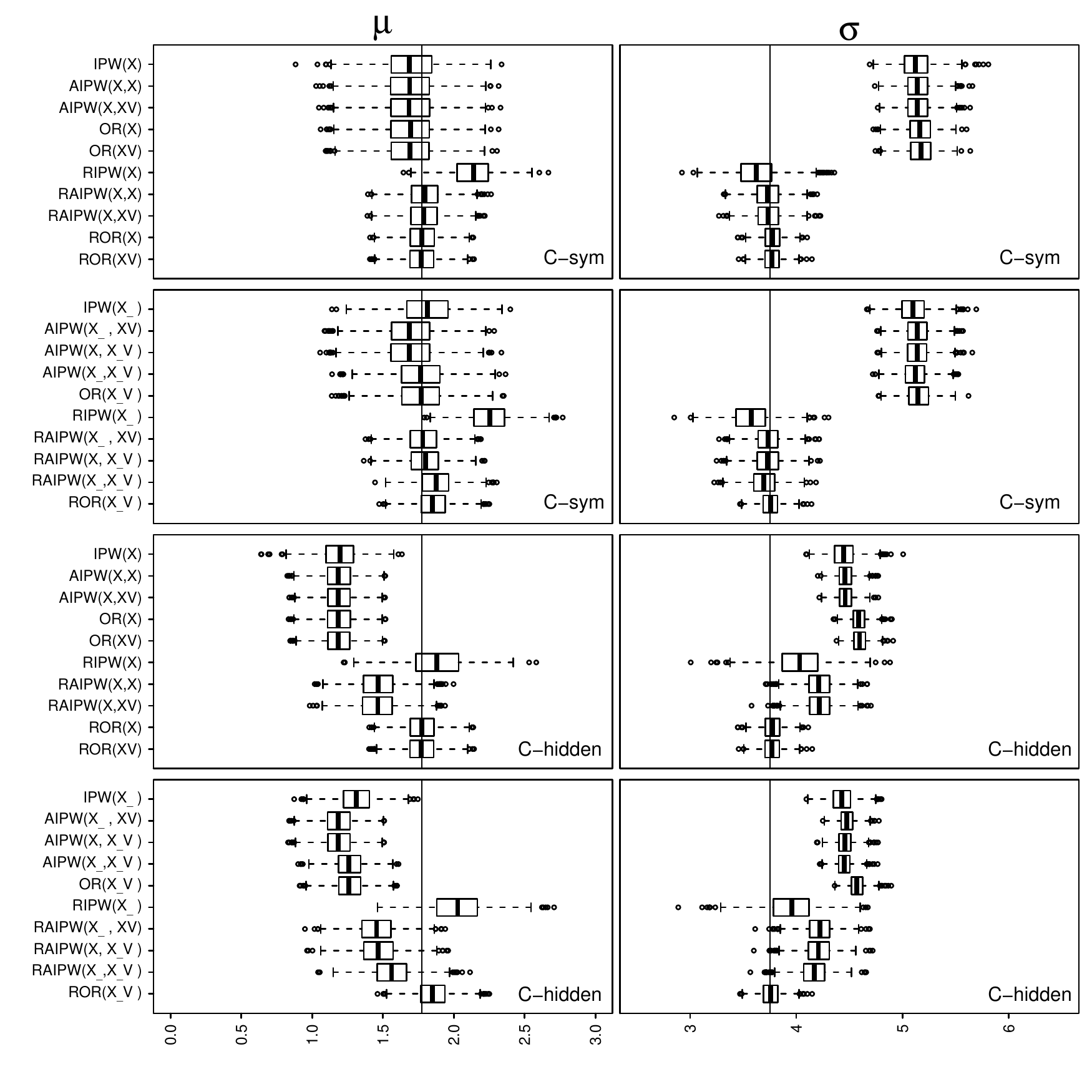} 
\caption{Estimates of $\mu$  (left) and $\sigma$ (right) for the $\gamma$  moderate-$\xi$ moderate scenario under the C-asym and C-hidden contamination.  The vertical lines represent the true underlying values.}
\label{Contam3ContamXiModerateGammaModerateFig}
\end{figure}

Results for the C-sym contamination are summarized in the top half of Figure~\ref{Contam3ContamXiModerateGammaModerateFig}, which shows the same patterns as C-asym in Figure~\ref{CleanContam2XiModerateGammaModerateFig}, with a major difference that the biases of RAIPW(X) and RAIPW(XV) in the estimation of $\sigma$ have now disappeared. Finally, the C-hidden configuration, whose results are summarized in the bottom half of Figure~\ref{Contam3ContamXiModerateGammaModerateFig}, confirms the expectation that this contamination scheme is most challenging. Here RAIPW is 
clearly biased for both $\mu$ and $\sigma$, and ROR behaves best both in terms of bias and efficiency. This is due to the fact that ROR only considers the conditional distribution of $Z_{2i}$ given $Z_{1i}$, through the outcome regression model, where the contamination is most visible, while RAIPW considers this conditional distribution but also the marginal of $Z_{2i}$ where the contamination is hidden. RMSE tables provided in Appendix~\ref{simulationmoderatemoderate.section} confirm the visual impression of the figures.

\section{Application: BMI change}\label{bmi.sec}

\begin{figure}
\centering
\includegraphics[width=0.45\textwidth]{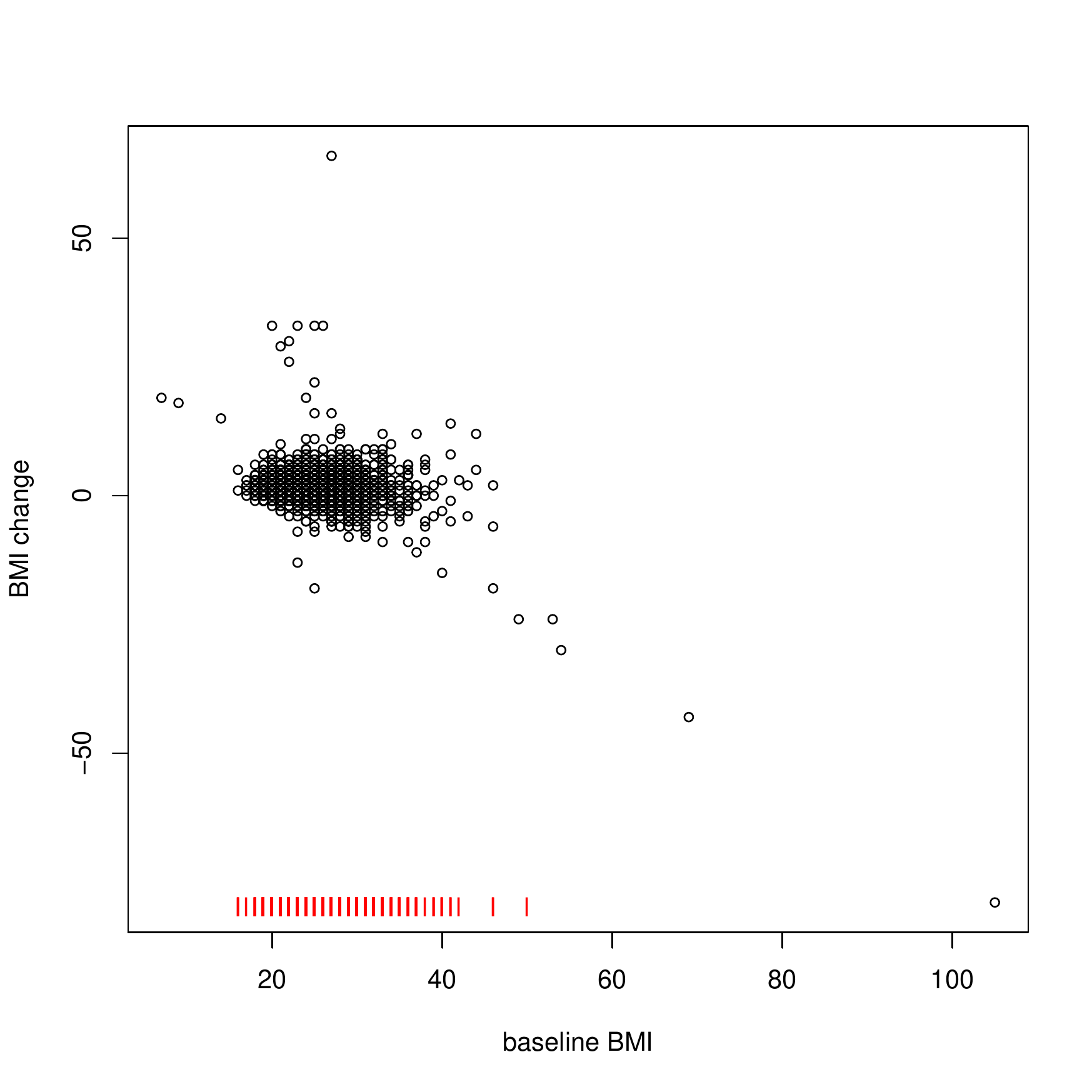}
\caption{BMI change after ten years versus BMI measured at 40 years of age for 
5553 men living in the county of V\"asterbotten in Sweden and born 1950-58. In red baseline BMI observed for the 2002 men not returning at follow up. }
\label{BMI.fig}
\end{figure}

We illustrate the methods presented in this paper with a population based 10 year follow up study of body mass index (kg/m$^2$, BMI).
 The analysis is performed on data from the Ume\aa\  SIMSAM Lab database
 \citep{simsam:2016}, which makes available record linkage information from
 several population based registers. In particular the database includes BMI
 data from an intervention program where all individuals living in the county of
 V\"asterbotten in Sweden and turning 40 and 50 years of age are called for 
 a health examination. Thanks to the Swedish individual identification number, this
 collected health data can be record linked to population wide health and
 administrative registers, which allows us to retrieve useful auxiliary
 information on individuals hospitalisation and socio-economy adding to self
 reported variables available from the intervention program.

We consider men born between 1950 and 1958 and observe
their BMI when they turn 40 years of age as well as at a 10 year follow
up. Figure \ref{BMI.fig} displays  a scatter plot of BMI change versus BMI at
baseline for the 5553 men who came back at the follow up examination when they
turned 50 out of the 7555 that were measured at baseline (40 years of
age). Extreme BMI values are observed (both at baseline and changes at follow up), giving an 
interesting case to illustrate the robust estimators introduced herein.
 
The set of baseline covariates used in the auxiliary models fitted are: (from the health
examinations) measured BMI, self reported health, and tobacco use; (from Statistics
Sweden registers) education level, number of children under 3 years of age, log 
earnings, parental benefits, sick leave benefits, unemployment benefits,
urban living; (from the hospitalisation register)
annual hospitalisation days.  These variables are used to explain dropout ($R_i$)
using a logistic regression and BMI change ($Z_2$) using linear
regression. Estimation of these auxiliary models is performed using maximum
likelihood or robust regression methods depending on the estimators
used.
More details on baseline covariates and results on the estimation of the auxiliary models are provided and discussed in Appendix~\ref{applicationsupp.section}. 


\begin{table}

\caption{\label{results.tab} Estimates and their standard errors (s.e.) for $\mu$ and $\sigma$ using the different  estimators defined in the article, and where "CC" stands for complete case sample moments.}
\centering

\begin{tabular}{cccccccc}\hline
	           & CC     &  IPW  & AIPW &  OR   & RIPW & RAIPW  & ROR   \\ \hline
 $\hat\mu$    & 1.43   & 1.36 & 1.38 & 1.41  & 1.34 & 1.34   & 1.34   \\ 
    s.e.      & 0.039  & 0.073 & 0.062 & 0.044  & 0.024 & 0.024   & 0.026  \\
 $\hat\sigma$ & 2.87   & 3.67 & 3.64 & 2.88  & 1.56 & 1.58   & 1.66   \\
   s.e.       & 1.524  & 0.700 & 0.665 & 0.264  & 0.025 & 0.025   & 0.024 \\ \hline 
	\end{tabular}
\end{table}

Table \ref{results.tab} displays estimated mean ($\hat\mu$) and standard deviation ($\hat\sigma$) in BMI change. The estimators used are the naive complete case (CC) sample mean and standard deviation (i.e., not
taking into account selective dropout and outlier contamination);
IPW, OR, and AIPW taking into
account selective dropout but not outlier contamination; and RIPW, ROR and RAIPW, taking into
account both selective dropout and contamination. 

\begin{figure} [htbp]
\centering
\includegraphics[width=0.6\textwidth]{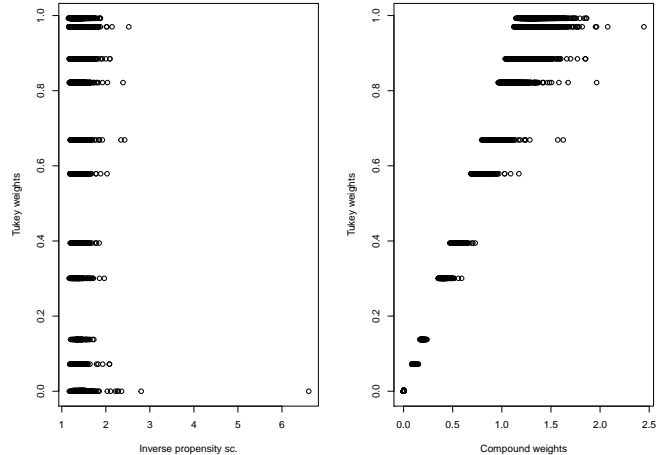}
\caption{Tukey's weights versus inverse of fitted propensity scores and compound weights (product of Tukey's weights and inverse propensity scores) used in RIPW.}
\label{BMIweights.fig}
\end{figure}

All three introduced robust versions (RIPW, RAIPW, ROR) give similar estimates of mean and
variance BMI change. Robust estimation seems to be here most relevant for the scale
parameter $\sigma$, where the robust versions are two BMI units smaller than the non-robust versions. This has also consequence for the estimation of standard errors for $\hat\mu$, IPW, OR and AIPW having standard errors 50\% larger than their robust counterparts. In summary, while robust estimation does not yield notably different results in mean BMI change, the results indicate a clear overestimation of the variability in BMI change if the non-robust estimators are used. Furthermore, correcting for selective dropout without taking into account contamination yields even larger overestimation of the variability of BMI change than using the naive estimator.

Taking into account selective dropout and contamination can both be seen as
weighting schemes as described in Remark \ref{weight.rem}. Thus, the propensity score weighting and the $\psi$ Tukey weighting, as well as the compound weights, are plotted in Figure \ref{BMIweights.fig} against each other. This plot highlights which individuals are downweighted due to outlying BMI change and overweighted due to selective dropout. 
 We observe that one observation has very high inverse propensity score and zero Tukey weight. This observation corresponds to the outlying individual with BMI at baseline close to 100 and large BMI decrease (see Figure \ref{BMI.fig}). Thus, while IPW and AIPW give it a large weight because it lies in a region with high probability of missingness, its outlying nature is noticed by the robust estimators which discard it from estimation. Its overweight in IPW and AIPW estimation is a contributing factor to their seemingly overestimation of $\sigma$ noticed above.

\section{Discussion}
In this paper we have studied semiparametric inference when outcome data is missing at random (ignorable given observed covariates) and the observed data is possibly contaminated by a nuisance process. We have proposed estimators which have bounded bias for an arbitrary large contamination. Many alternative estimators have been proposed in the literature \cite[for a review]{RotnitzkyStijn:15}. In order to obtain bounded influence function versions of those, the approach presented here can be followed. In particular, an interesting family of AIPW estimators are those which are bounded in the sense that they cannot produce an estimate outside the range of the observed outcomes \citep{Tan:2010,Grub:vand:2010}, in order to avoid the inverse probability weighting to have too drastic consequences. Such estimators are still not robust to contamination (unbounded influence function) and may therefore be robustified as proposed herein. Moreover, while we have focused on the location-scale parameter of the marginal distribution of the outcome of interest, the framework can readily be extended to other parameters, e.g., parameters of the conditional distribution of the outcome given some covariates, and to causal parameters defined using the potential outcome framework \citep{RR:74}. 

Finally, an interesting aspect of our results is that the auxiliary outcome regression model is not only useful to improve efficiency over IPW estimation but it is also useful for the robustness properties of the RAIPW and ROR estimators. This is the case in two ways, first because it allows us to relax assumptions (otherwise commonly made in robust estimation of location and scale) on the marginal distribution of the outcome, and second because it allows us to deal with contaminations which may be hidden when looking at the marginal distribution of the outcome, but become apparent in the conditional distribution of the outcome given the covariates.

\section*{Acknowledgements}
We are grateful to Ingeborg Waernbaum and Chris Field for comments that have improved the paper. We acknowledge the support of the Marianne and Marcus Wallenberg Foundation, the CRoNoS COST Action IC1408 and the Einar and Per Wikstr\"om Foundation. The Ume{\aa} SIMSAM Lab data infrastructure used in the application section was developed with support from the Swedish Research Council and by strategic funds from Ume{\aa} University. The simulations were performed at University of Geneva
on the Baobab cluster.


\appendix


\section{Proposition~2}
\label{proof.section}
We give below proofs of consistency and asymptotic normality for the RAIPW estimator. Assumptions made in Section \ref{theory.sec} on the data generating mechanism hold throughout.
Let us use the simpler notation $\varphi_{i}(\mu,\sigma;\gamma,\xi)$
for \\
$\varphi_{RAIPW}(Z_{i},R_i;\beta,\gamma,\xi)$ given in (\ref{robusts.aipw.eq}),
where the dependence on the data is shown merely by the index $i$. For convenience, the estimator $(\hat\mu_{RAIPW},\hat\sigma_{RAIPW})'$ defined as solving (\ref{robusts.aipw.eq}) is instead (and equivalently when such a solution exists) defined as a minimum distance estimator (between the empirical moment and zero):
\begin{equation}
\label{min.problem}
(\hat\mu_{RAIPW},\hat\sigma_{RAIPW})'=\arg\min_{\mu,\sigma} \hat Q(\mu,\sigma;\hat\gamma,\hat\xi), 
\end{equation}
\noindent where $\hat Q(\mu,\sigma;\hat\gamma,\hat\xi)=n^{-2}(\sum_{i=1}^{n}\varphi_{i}(\mu,\sigma;\hat\gamma,\hat\xi))'\sum_{i=1}^{n}\varphi_{i}(\mu,\sigma;\hat\gamma,\hat\xi)$. This allows us to utilize some general asymptotic results given in \cite{newey:mcfadden:94} as specified below. Proposition \ref{raipw.prop} given earlier is a summary of the two following propositions (Proposition 2 (consistency) and Proposition 2 (asymptotic normality)).

\subsection{Consistency}

\noindent\textbf{Regularity conditions}
\begin{itemize}
\item [A.1)] i) $\plim_{n\rightarrow\infty}\hat\gamma=\gamma^*$, and ii) $\plim_{n\rightarrow\infty}\hat\xi=\xi^*$;
\item [A.2)] \begin{itemize} 
\item[i)] $\pi(Z_{1i};\gamma)$ is differentiable with respect to $\gamma$ on the open interval with its derivative continuous on the closed interval between $\tilde\gamma$ and $\gamma$, where $\tilde\gamma\in[\gamma^*,\hat\gamma]$;
\item[ii)] $h_1(Z_{1i};\beta,\xi)$ and $h_2(Z_{1i};\beta,\xi)$ are differentiable with respect to $\xi$ on the open interval with their derivatives continuous on the closed interval between $\tilde\xi$ and $\xi$, where $\tilde\xi\in[\xi^*,\hat\xi]$;
\end{itemize}
\item [A.3)] $\beta=(\mu,\sigma)'\in \cal B$ where $\cal B$ is compact, and the equality
\begin{align*}
&E 
\left\{\begin{array}{c}
 \psi_{c_{\mu}}\left( \frac{Z_{2i}- \mu}{\sigma}\right)-A  \\
 \psi_{c_{\sigma}}^2\left( \frac{Z_{2i}- \mu}{\sigma}
\right)- B 
\end{array}\right\}=0
\end{align*}
holds only for $(\mu,\sigma)'=(\mu_0,\sigma_0)'\in\cal B$;
\item[A.4)] $1\geq \pi(Z_{1i};\gamma^*)>\varepsilon$ for $\varepsilon>0$ with probability (wp) 1;
 \item[A.5)] $\psi_c\big( \sigma^{-1}
(Z_{2i}- \mu)\big)$ is continuous at each $(\mu,\sigma)'\in \cal B$ wp 1;
\item[A.6)] $h_1(Z_{1i};\beta,\xi^*)$ and $h_2(Z_{1i};\beta,\xi^*)$ are continuous at each $\beta\in \cal B$ wp 1;
\item [A.7)] 

$$\hspace*{-1cm} E\big\{\sup_{(\mu,\sigma)'\in \cal B} \big( \big|
\psi_{c_{\mu}}\big( \sigma^{-1} (Z_{2i}- \mu)\big)\big|^2\big)\big\} + 
E\big\{\sup_{(\mu,\sigma)'\in \cal B}\big(\big| \psi_{c_{\sigma}}^2\big( \sigma^{-1}
(Z_{2i}- \mu)\big)\big|^2\big)\big\} <\infty ;$$

\item[A.8)] 

\hspace*{-1cm}
$$E\big\{\sup_{(\mu,\sigma)'\in \cal B}\big(\big( h_1(Z_{1i};\beta,\xi^*)\big)^2
\big)\big\} + 
E\big\{\sup_{(\mu,\sigma)'\in \cal B}\big(\big( h_2(Z_{1i};\beta,\xi^*)\big)^2\big)\big\}<\infty.$$
\end{itemize}

Condition A.3) is an identification condition. Compactness of $\cal B$ may be considered restrictive and can be relaxed at the cost of other assumptions \citep{newey:mcfadden:94}. For Huber's $\psi$ function, compactness is, for instance, not necessary \citep{Huber:1964}, while for Tukey's $\psi$ function identification holds only locally.

\setcounter{proposition}{1}
\begin{proposition}[consistency]
Let either 
 $\pi(Z_{1i};\gamma)$ be correctly specified with $\gamma^*=\gamma_0$ and/or $h(Z_{1i};\beta,\xi)=(h_1(Z_{1i};\beta,\xi),h_2(Z_{1i};\beta,\xi))'$ be correctly specified with $\xi^*=\xi_0$. 
 Then, under regularity conditions A.1) to A.8) given above,
 $$\plim_{n\rightarrow\infty} (\hat\mu_{RAIPW},\hat\sigma_{RAIPW})'=(\mu_0,\sigma_0)'.$$
\end{proposition}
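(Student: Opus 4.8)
The plan is to derive consistency from the standard consistency theorem for extremum (here, minimum-distance) estimators, e.g.\ Theorem~2.1 of \cite{newey:mcfadden:94}, using the reformulation \eqref{min.problem}. Write $\bar\varphi_n(\mu,\sigma;\gamma,\xi)=n^{-1}\sum_{i=1}^n\varphi_i(\mu,\sigma;\gamma,\xi)$, so that $\hat Q(\mu,\sigma;\hat\gamma,\hat\xi)=\bar\varphi_n(\mu,\sigma;\hat\gamma,\hat\xi)'\,\bar\varphi_n(\mu,\sigma;\hat\gamma,\hat\xi)$. It suffices to verify: (i) $\hat Q(\cdot\,;\hat\gamma,\hat\xi)$ converges in probability, uniformly over the compact set $\cal B$, to a nonstochastic limit $Q_0(\mu,\sigma)$; (ii) $Q_0$ is continuous on $\cal B$; and (iii) $Q_0$ has a unique minimizer on $\cal B$, located at $(\mu_0,\sigma_0)'$.

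For (i) I would split $\bar\varphi_n(\mu,\sigma;\hat\gamma,\hat\xi)=\bar\varphi_n(\mu,\sigma;\gamma^*,\xi^*)+\big[\bar\varphi_n(\mu,\sigma;\hat\gamma,\hat\xi)-\bar\varphi_n(\mu,\sigma;\gamma^*,\xi^*)\big]$. To the first term apply a uniform law of large numbers: by A.4) the weights $1/\pi(Z_{1i};\gamma^*)$ are bounded; by A.5)--A.6) the summands are a.s.\ continuous in $\beta$; and by A.7)--A.8) (with A.4)) they are dominated, uniformly over $\cal B$, by an integrable envelope. Hence $\sup_{\beta\in\cal B}\|\bar\varphi_n(\mu,\sigma;\gamma^*,\xi^*)-g(\mu,\sigma)\|\to 0$ in probability, with $g(\mu,\sigma)=E[\varphi_i(\mu,\sigma;\gamma^*,\xi^*)]$. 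For the second (plug-in) term, a mean-value expansion in $(\gamma,\xi)$ is legitimate by the differentiability in A.2), giving $\partial\bar\varphi_n/\partial\gamma'|_{\tilde\gamma}(\hat\gamma-\gamma^*)+\partial\bar\varphi_n/\partial\xi'|_{\tilde\xi}(\hat\xi-\xi^*)$ at intermediate points $\tilde\gamma,\tilde\xi$; since $\hat\gamma-\gamma^*=o_p(1)$ and $\hat\xi-\xi^*=o_p(1)$ by A.1), and the derivative averages are $O_p(1)$ uniformly over $\cal B$ (another ULLN step, using A.4) to bound derivatives of $1/\pi$), this term is $o_p(1)$ uniformly. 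Thus $\hat Q(\mu,\sigma;\hat\gamma,\hat\xi)\to Q_0(\mu,\sigma):=g(\mu,\sigma)'g(\mu,\sigma)$ in probability, uniformly on $\cal B$; continuity of $g$, hence of $Q_0$ (point (ii)), follows from dominated convergence via A.5)--A.8).

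The heart of the argument is (iii), where the doubly robust construction pays off. I would show that whenever \emph{either} $\gamma^*=\gamma_0$ \emph{or} $\xi^*=\xi_0$ one has $g(\mu,\sigma)=E[m_\psi(Z_{2i};\mu,\sigma)]$ for every $\beta\in\cal B$. If $\gamma^*=\gamma_0$, conditioning each component of $\varphi_i$ on $(Z_{1i},Z_{2i})$ and using the MAR identity $E[R_i\mid Z_{1i},Z_{2i}]=\pi(Z_{1i};\gamma_0)$, the inverse-probability piece collapses to $E[\psi_{c_\mu}(\cdot)-A]$ (resp.\ $E[\psi_{c_\sigma}^2(\cdot)-B]$), while the augmentation piece, being $\big(R_i-\pi(Z_{1i};\gamma_0)\big)/\pi(Z_{1i};\gamma_0)$ times a function of $Z_{1i}$ only, has conditional mean zero given $Z_{1i}$. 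If instead $\xi^*=\xi_0$, rewrite the first component of $\varphi_i$ as $\frac{R_i}{\pi(Z_{1i};\gamma^*)}\big[(\psi_{c_\mu}(\cdot)-A)-h_1(Z_{1i};\beta,\xi_0)\big]+h_1(Z_{1i};\beta,\xi_0)$ (and analogously the second); conditioning on $Z_{1i}$, using $R_i\perp Z_{2i}\mid Z_{1i}$ and the correct-specification identity $h_1(Z_{1i};\beta,\xi_0)=E[\psi_{c_\mu}(\cdot)-A\mid Z_{1i}]$, the first summand vanishes in expectation, leaving $E[h_1(Z_{1i};\beta,\xi_0)]=E[\psi_{c_\mu}(\cdot)-A]$. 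In both cases $g(\mu,\sigma)=0$ iff $E[m_\psi(Z_{2i};\mu,\sigma)]=0$, which by the identification condition A.3) holds exactly at $(\mu,\sigma)'=(\mu_0,\sigma_0)'$; since $g(\mu_0,\sigma_0)=0$ and $Q_0\ge0$, the minimizer is unique and equals $(\mu_0,\sigma_0)'$. Invoking Theorem~2.1 of \cite{newey:mcfadden:94} with (i)--(iii) then gives $\plim_{n\to\infty}(\hat\mu_{RAIPW},\hat\sigma_{RAIPW})'=(\mu_0,\sigma_0)'$.

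I expect the main obstacle to be the uniform (in $\beta\in\cal B$) control of the plug-in term: making $\big[\bar\varphi_n(\cdot\,;\hat\gamma,\hat\xi)-\bar\varphi_n(\cdot\,;\gamma^*,\xi^*)\big]=o_p(1)$ uniformly requires the derivative processes $\partial\bar\varphi_n/\partial\gamma'$ and $\partial\bar\varphi_n/\partial\xi'$ to be stochastically bounded uniformly over $\cal B$, which in turn needs integrable envelopes for the $\gamma$- and $\xi$-derivatives of $\varphi_i$ (this is what A.2) and A.4) are really buying, but the bookkeeping is the delicate part). By comparison, the identification computation in (iii) is a short conditioning argument once the double-robust cancellation is spotted, and (ii) is routine dominated convergence.
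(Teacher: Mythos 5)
Your proposal is correct and takes essentially the same route as the paper: the same minimum-distance reformulation \eqref{min.problem}, a Taylor/mean-value expansion in $(\gamma,\xi)$ together with A.1)--A.2) to dispose of the plug-in of $(\hat\gamma,\hat\xi)$, a uniform law of large numbers based on the envelopes in A.4), A.7), A.8), the same doubly robust conditioning argument (MAR when $\gamma^*=\gamma_0$, correct specification of $h$ when $\xi^*=\xi_0$) for identification via A.3), and finally Theorem~2.1 of Newey and McFadden. The only cosmetic difference is that the paper delegates the fixed-$(\gamma^*,\xi^*)$ uniform convergence and identification step to their Theorem~2.6, whose conditions (NMi)--(NMiv) coincide with what you verify directly.
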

\begin{proof}
This consistency result holds if the general Theorem 2.1 in \cite{newey:mcfadden:94} can be applied. A central assumption of this theorem is the following uniform convergence $$\plim_{n\rightarrow\infty} \sup_{(\mu,\sigma)'\in \cal B}||\hat Q(\hat\gamma,\hat\xi)-Q_0(\gamma^*,\xi^*)||=0,$$ 
where we use the simplified notation $\hat Q(\hat\gamma,\hat\xi)$ for $\hat Q(\mu,\sigma;\hat\gamma,\hat\xi)$,
and where
 $Q_0(\gamma^*,\xi^*)=E(\varphi_{i}(\mu,\sigma;\gamma^*,\xi^*)'\varphi_{i}(\mu,\sigma;\gamma^*,\xi^*))$ and $||\cdot||$ denotes the Euclidean norm.

In order to show this uniform convergence result, consider the Taylor expansion (using Assumption A.2)) of 
$\hat Q(\mu,\sigma;\hat\gamma,\hat\xi)$ as a function of $(\hat\gamma,\hat\xi)$ around $(\gamma^*,\xi^*)$:
\begin{align*}
\hat Q(\hat\gamma,\hat\xi)=\hat Q(\gamma^*,\xi^*)+
(\hat\lambda-\tilde\lambda)'D(\tilde\lambda),
\end{align*}
where $D(\tilde\lambda)=\frac{\partial\hat Q}{\partial\lambda}(\tilde\lambda)$, $\lambda=(\gamma,\xi)'$, and $\tilde\lambda\in[\lambda^*,\hat\lambda]$. We can then write:
\begin{align*}
\hat Q(\hat\gamma,\hat\xi)-Q_0(\gamma^*,\xi^*)=\hat Q(\gamma^*,\xi^*)-Q_0(\gamma^*,\xi^*)+
(\hat\lambda-\tilde\lambda)'D(\tilde\lambda).
\end{align*}
Thus, we have,
\begin{eqnarray*}
\lefteqn{\sup_{(\mu,\sigma)'\in \cal B}||\hat
Q(\hat\gamma,\hat\xi)-Q_0(\gamma^*,\xi^*)||} \\
& \leq & \sup_{(\mu,\sigma)'\in \cal B}||\hat Q(\gamma^*,\xi^*)-Q_0(\gamma^*,\xi^*)||+
\sup_{(\mu,\sigma)'\in \cal B}||(\hat\lambda-\tilde\lambda)'D(\tilde\lambda)|| 
\\
& \leq & \sup_{(\mu,\sigma)'\in \cal B}||\hat Q(\gamma^*,\xi^*)-Q_0(\gamma^*,\xi^*)||+ 
\bar D\sup_{(\mu,\sigma)'\in \cal B}||(\hat\lambda-\tilde\lambda)'||, 
\end{eqnarray*}
where $\bar D<\infty$ is the the supremum over ${(\mu,\sigma)' \in \cal B}$ of all the elements of the vector $ D(\tilde\lambda)$. By Assumption A.1), $\plim ||\hat\lambda-\tilde\lambda||=0$. It remains to show that 
$\plim \sup_{(\mu,\sigma)'\in \cal B}||\hat Q(\gamma^*,\xi^*)-Q_0(\gamma^*,\xi^*)||=0$.
This is a consequence of Theorem 2.6 in \cite{newey:mcfadden:94}, whose assumptions, referred below by (NMi-NMiv), are now verified. 

Assumption (NMi) says that $E(\varphi_{i}(\mu,\sigma;\gamma,\xi))=0$ should hold only for $(\mu,\sigma)'=(\mu_0,\sigma_0)'$. This is the case by Assumption A.3), and because either $\gamma^*=\gamma_0$ and/or $\xi^*=\xi_0$, so that either $E \left\{ \frac{R_i} {\pi(Z_{1i};\gamma^*)}\mid Z_{1i} \right\}=1$, when $\gamma^*=\gamma_0$, or $h(Z_{1i};\beta,\xi^*)=E(m_{\psi}(Z_{2i};\mu,\sigma)\mid Z_{1i})$, when $\xi^*=\xi_0$. 

Assumption (NMii) holds by Assumption A.3). 
 
Condition (NMiii) says that $\varphi_{i}(\mu,\sigma;\gamma^*,\xi^*)$ is continuous at each $(\mu,\sigma)'\in B$ with probability one. This holds by Assumptions A.4), A.5) and A.6).

We now show that condition (NMiv) holds:  
\begin{align*}
E&\left\{\sup_{(\mu,\sigma)'\in \cal B} \left(\left\Vert\varphi_i(\mu,\sigma;\gamma,\xi)\right\Vert\right) \right\}\\
&=E\left\{\sup_{(\mu,\sigma)'\in \cal B}\left(\left| \frac{R_i}{\pi(Z_{1i};\gamma^*)}\big( \psi_{c_{\mu}}\big( \sigma^{-1}
(Z_{2i}- \mu)\big)-A\big)- \frac{R_i-\pi(Z_{1i};\gamma^*)}{\pi(Z_{1i};\gamma^*)}h_1(Z_{1i};\beta,\xi^*) \right|^2 \right.\right.\\
&\ \ \ \ \ +\left.\left.\left| \frac{R_i}{\pi(Z_{1i};\gamma^*)}(\psi_{c_{\sigma}}^2\big( \sigma^{-1} 
(Z_{2i}- \mu)\big)- B)- \frac{R_i-\pi(Z_{1i};\gamma^*)}{\pi(Z_{1i};\gamma^*)}h_2(Z_{1i};\beta,\xi^*) \right|^2\right)\right\} \\
&\leq 2E\left\{\frac{R_i}{\pi(Z_{1i};\gamma^*)^2}\left(\sup_{(\mu,\sigma)'\in \cal B}\left(\left|\big( \psi_{c_{\mu}}\big( \sigma^{-1}
(Z_{2i}- \mu)\big)-A\big)\right|^2\right) \right.\right. \\
&\ \ \ \ \ \left.\left. +\sup_{(\mu,\sigma)'\in \cal B}\left(\left|\big( \psi_{c_{\sigma}}^2\big( \sigma^{-1}
(Z_{2i}- \mu)\big)-B\big)\right|^2 \right)\right)   \right\} \\
&\ \ \ \ \  + 2E\left\{ \left(\frac{R_i-\pi(Z_{1i};\gamma^*)}{\pi(Z_{1i};\gamma^*)}\right)^2 \left(\sup_{(\mu,\sigma)'\in
\cal B}\left(\big( h_1(Z_{1i};\beta,\xi^*)\big)^2\right) \right. \right. \\
& \ \ \ \ \  + 
\left. \left. \sup_{(\mu,\sigma)'\in \cal B}
\left(\big( h_2(Z_{1i};\beta,\xi^*)\big)^2 \right)  \right) \right\} \\
&=2E\left\{\frac{\pi(Z_{1i};\gamma_0)}{\pi(Z_{1i};\gamma^*)^2}\right\}\left(E\left\{\sup_{(\mu,\sigma)'\in \cal B}\left(\left|\big( \psi_{c_{\mu}}\big( \sigma^{-1}
(Z_{2i}- \mu)\big)-A\big)\right|^2\right)\right\} \right. \\
&\ \ \ \ \  \left.+E\left\{\sup_{(\mu,\sigma)'\in \cal B}\left(\left|\big( \psi_{c_{\sigma}}^2\big( \sigma^{-1}
(Z_{2i}- \mu)\big)-B\big)\right|^2 \right) \right\} \right)   \\
&\ \ \ \ \  + 2E\left\{\frac{(\pi(Z_{1i};\gamma_0)-\pi(Z_{1i};\gamma^*))^2}{\pi(Z_{1i};\gamma^*)^2}\right\}\left(E\left\{\sup_{(\mu,\sigma)'\in \cal B}\left(\big( h_1(Z_{1i};\beta,\xi^*)\big)^2\right)\right\} \right.\\
&\ \ \ \ \  \left.+
E\left\{\sup_{(\mu,\sigma)'\in \cal B}
\left(\big( h_2(Z_{1i};\beta,\xi^*)\big)^2\right)\right\}\right)  
<\infty,
\end{align*}
where we have used $(a+b)^2\leq 2(a^2+b^2)$, $\sup f+g\leq \sup f + \sup g$ for
$f,g$ positive, that $Z_{2i}$ and and that $R_i$ are independent conditional on
$Z_{1i}$. The last inequality holds   
by Assumptions A.4), A.7) and A.8).        
Finally, noting that (NMi-NMiv) also ensures that the remaining conditions of Theorem 2.1  hold \citep[Theorem 2.6]{newey:mcfadden:94} completes the proof.             
\end{proof}

\subsection{Asymptotic normality}
\noindent\textbf{Regularity conditions}

Let $\tilde\varphi_i(\theta)$ be the vector stacking $\varphi_i(\mu,\sigma)$, $m_\gamma(R_i,Z_{1i};\gamma)$ and $m_\xi(Z_i;\xi)$, with $\theta=(\mu,\sigma,\gamma',\xi')'$.
\begin{itemize}
	\item[A.9)] 
	$\theta_0\in \mbox{interior of }\cal T$, where $\theta_0=(\mu_0,\sigma_0,{\gamma^*}',{\xi^*}')'$ ;
		\item[A.10)] $E\big\{\tilde\varphi_i(\theta_0)\big\}=0$ and $E\big\{||\tilde\varphi_i(\theta_0)||^2\big\}$ is finite;
		\item[A.11)]
 $\tilde\varphi_i(\theta)$ is continuously differentiable in a neighborhood ${\cal N}$ of $\theta_0$;

\item[A.12)]

$E\{\sup_{\theta\in {\cal N}} ||\nabla_{\theta}\tilde\varphi_i(\theta)||\}<\infty$;

\item[A.13)] $E\big\{\nabla_{\theta} \tilde\varphi_i(\theta)\big\}'E\big\{\nabla_{\theta} \tilde\varphi_i(\theta)\big\}$ is nonsingular.

\end{itemize}

\setcounter{proposition}{1}
\begin{proposition}[asymptotic normality]
Let $\plim (\hat\mu_{RAIPW},\hat\sigma_{RAIPW})'=(\mu_0,\sigma_0)'$ and A.1) hold. Then, under regularity conditions A.9)-A.13) given above, $(\hat\mu_{RAIPW},\newline \hat\sigma_{RAIPW})'$ has the following asymptotic multivariate normal distribution as $n\rightarrow\infty$
$$\sqrt n \Big( (\hat\mu_{RAIPW},\hat\sigma_{RAIPW})'-(\mu_0,\sigma_0)'\Big )\overset{d}\rightarrow N\big(0,E\big\{IF_{RAIPW}(IF_{RAIPW})'\}\big),$$
where $IF_{RAIPW}(Z_{i},R_i; \beta)$ is given by (\ref{ifaipw.eq}).
\end{proposition}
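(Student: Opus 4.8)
The plan is to regard $\hat\theta=(\hat\mu_{RAIPW},\hat\sigma_{RAIPW},\hat\gamma',\hat\xi')'$ as a just-identified Z-estimator solving $\sum_{i=1}^n\tilde\varphi_i(\theta)=0$, to obtain its limiting law from the usual mean-value expansion, and to read off the $(\mu,\sigma)$-block of the resulting sandwich formula. First I would show that the minimum-distance characterisation (\ref{min.problem}) is asymptotically equivalent to the exact estimating equation for $(\mu,\sigma)$. On the event that $(\hat\mu_{RAIPW},\hat\sigma_{RAIPW})$ lies in the interior of ${\cal B}$ --- which has probability tending to one by the assumed consistency of $(\hat\mu_{RAIPW},\hat\sigma_{RAIPW})$ and A.9 --- the first-order conditions of (\ref{min.problem}) read $\big(n^{-1}\sum_i\partial\varphi_i/\partial\beta'\big)'\big(n^{-1}\sum_i\varphi_i\big)=0$. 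By A.11--A.12 and a uniform law of large numbers, $n^{-1}\sum_i\partial\varphi_i/\partial\beta'$ converges in probability to $E[\partial m_\psi(Z_{2i};\beta_0)/\partial\beta']$ --- using, by the same MAR and one-model-correct argument used in the consistency proof, that $E[\partial\varphi_i/\partial\beta']=E[\partial m_\psi/\partial\beta']$ whenever at least one of the two auxiliary models is correctly specified --- which is nonsingular by A.13; hence this $2\times 2$ matrix is invertible with probability tending to one and the first-order conditions reduce to $n^{-1}\sum_i\varphi_i(\hat\mu_{RAIPW},\hat\sigma_{RAIPW};\hat\gamma,\hat\xi)=0$. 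Combined with (\ref{estim.gamma}) and (\ref{estim.xi}) this gives $\sum_{i=1}^n\tilde\varphi_i(\hat\theta)=0$.

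Next I would expand each coordinate of $n^{-1}\sum_i\tilde\varphi_i(\hat\theta)=0$ about $\theta_0=(\mu_0,\sigma_0,{\gamma^*}',{\xi^*}')'$ using the mean value theorem (A.11):
\[
0=\frac1n\sum_{i=1}^n\tilde\varphi_i(\theta_0)+\Big(\frac1n\sum_{i=1}^n\nabla_\theta\tilde\varphi_i(\bar\theta)\Big)(\hat\theta-\theta_0),
\]
with $\bar\theta$ lying, coordinatewise, on the segment joining $\hat\theta$ to $\theta_0$. By the assumed consistency and A.1, $\bar\theta\overset{p}{\to}\theta_0$; by A.11, A.12 and a uniform law of large numbers over the neighbourhood ${\cal N}$, $n^{-1}\sum_i\nabla_\theta\tilde\varphi_i(\bar\theta)\overset{p}{\to}G:=E[\nabla_\theta\tilde\varphi_i(\theta_0)]$, which is square and nonsingular by A.13. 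By A.10 and the central limit theorem for i.i.d.\ summands, $n^{-1/2}\sum_i\tilde\varphi_i(\theta_0)\overset{d}{\to}N\big(0,E[\tilde\varphi_i(\theta_0)\tilde\varphi_i(\theta_0)']\big)$. Slutsky's theorem then gives $\sqrt n(\hat\theta-\theta_0)\overset{d}{\to}N\big(0,\,G^{-1}E[\tilde\varphi_i(\theta_0)\tilde\varphi_i(\theta_0)'](G^{-1})'\big)$.

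It remains to extract the $(\mu,\sigma)$-block and match it to (\ref{ifaipw.eq}). Here I would use that $m_\gamma(R_i,Z_{1i};\gamma)$ does not involve $(\mu,\sigma,\xi)$ and that the estimating function for $\xi$ does not involve $(\mu,\sigma,\gamma)$, so the block rows of $G$ attached to these two equations vanish outside their own diagonal blocks $E[\partial m_\gamma/\partial\gamma']$ and $E[\partial m_\xi/\partial\xi']$. Writing $G$ in the bordered form $\left(\begin{smallmatrix}G_{\beta\beta}&B\\0&D\end{smallmatrix}\right)$ with $G_{\beta\beta}=E[\partial m_\psi/\partial\beta']$, $B=\big(E[\partial\varphi_i/\partial\gamma'],\,E[\partial\varphi_i/\partial\xi']\big)$ and $D=\mathrm{diag}\big(E[\partial m_\gamma/\partial\gamma'],\,E[\partial m_\xi/\partial\xi']\big)$, one has $G^{-1}=\left(\begin{smallmatrix}G_{\beta\beta}^{-1}&-G_{\beta\beta}^{-1}BD^{-1}\\0&D^{-1}\end{smallmatrix}\right)$, so the first two coordinates of $-G^{-1}\tilde\varphi_i(\theta_0)$ are exactly $IF_{RAIPW}(Z_i,R_i;\beta)$ as in (\ref{ifaipw.eq}). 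Hence $\sqrt n\big((\hat\mu_{RAIPW},\hat\sigma_{RAIPW})'-(\mu_0,\sigma_0)'\big)\overset{d}{\to}N\big(0,E[IF_{RAIPW}IF_{RAIPW}']\big)$, the variance being the top-left $2\times 2$ block of the sandwich above.

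I expect the main obstacle to be the uniform-convergence arguments: the uniform law of large numbers for $n^{-1}\sum_i\nabla_\theta\tilde\varphi_i$ evaluated at the random intermediate point $\bar\theta$, which relies on the envelope bound A.12 and on A.11, together with the closely related step of reducing the minimum-distance first-order conditions to the exact estimating equation --- in particular establishing that the limiting $\beta$-Jacobian is the clean $E[\partial m_\psi/\partial\beta']$ under correct specification of one auxiliary model, which is precisely what makes (\ref{ifaipw.eq}) take the displayed form.
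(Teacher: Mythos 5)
Your argument is correct and follows essentially the same route as the paper: the paper proves this proposition by a direct appeal to Theorem 6.1 of Newey and McFadden (1994), whose content is exactly your stacked estimating-equation expansion (mean-value expansion of $n^{-1}\sum_i\tilde\varphi_i(\theta)$ at $\theta_0$, CLT plus uniform LLN, then extraction of the $(\mu,\sigma)$-block of the sandwich), with conditions A.9)--A.13) playing the role of that theorem's hypotheses. Your additional steps---reducing the minimum-distance formulation \eqref{min.problem} to the exact estimating equation, and checking that $E[\partial\varphi_{RAIPW}/\partial\beta']=E[\partial m_\psi/\partial\beta']$ when at least one auxiliary model is correct---are details the paper leaves implicit, and they are handled correctly.
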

\begin{proof}
This result is obtained by a direct application of Theorem 6.1 in \cite{newey:mcfadden:94}.	
\end{proof}
The latter result assumes that $\gamma$ and $\xi$ are estimated through M-estimators, and regularity conditions on the moment conditions $m_\gamma$ and $m_\xi$ are made. As pointed out in \citet[p.~2178]{newey:mcfadden:94}, the same result may be obtained for the more general situations where we have asymptotically linear estimators for $\gamma$ and $\xi$.

\section{Propositions 1 and 3}\label{prop1and3.sec}

Regularity conditions for Proposition \ref{ripw.prop} are as follows: for consistency, A.1)i), A.2)i), A.3), A.4), A.5) and A.7); for asymptotic normality, A.9)-A.13) where $\tilde\varphi_i(\theta)$ is the vector stacking $\varphi_i(\mu,\sigma)$ and $m_\gamma(R_i,Z_{1i};\gamma)$, and with $\theta=(\mu,\sigma,\gamma')'$.

Regularity conditions for Proposition \ref{ror.prop} are as follows: for consistency, A.1)ii), A.2)ii), A.3), A.6), and A.8); for asymptotic normality,  A.9)-A.13) where $\tilde\varphi_i(\theta)$ is the vector stacking $\varphi_i(\mu,\sigma)$ and $m_\xi(Z_i;\xi)$, and with $\theta=(\mu,\sigma,\xi')'$.

Proofs are similar to the above and are omitted.

\section{Implementation details}
\label{implementation.section}

We use the free software R  for our implementation. To obtain
RAIPW (given $\hat{\gamma}$ and $\hat{\xi}$),
instead of solving the system of equations
(\ref{robusts.aipw.eq}), we minimize
\eqref{min.problem} numerically with the function \texttt{optim}. We proceed
similarly for RIPW.

We use the set up of Example \ref{example.sls}, and
the conditional expectations 
$$E\Big( \psi_{c_{\mu}}(\sigma^{-1} (\tilde
h(Z_{1i};\xi_1)+\xi_2\nu- \mu)\big) | Z_{1i} \Big) \mbox{ and } E\Big(
\psi_{c_{\sigma}}^2(\sigma^{-1} 
(\tilde h(Z_{1i};\xi_1)+\xi_2\nu- \mu)\big) | Z_{1i} \Big)$$
defining the working model for RAIPW 
are obtained by numerical integration (function \texttt{integrate}), and A and B
by Monte Carlo simulations.

\subsection{Robust logistic regression}\label{roblogistic.sec}

The robust logistic regression estimator we consider to fit
model~\eqref{estim.gamma} 
is the
proposal by \cite{Cant:Ronc:2001} for generalized linear model (GLM). 
In the
case of a logistic model
$$ \log\left(\frac{\pi(Z_{1i},\gamma)}{1-\pi(Z_{1i},\gamma)} \right) =  Z_{1i}'
\gamma $$ 
it solves
\begin{equation}
\label{robGLM.estim}
\sum_{i=1}^n \Big[ \psi_c^L(r_i) w(Z_{1i}) \frac{1}{\sqrt{v_{\mu_i}}} 
\mu_i^{\prime}  - a(\gamma) \Big] =  0, 
\end{equation}
where $\mu_i = E(R_i | Z_{1i}) = \pi(Z_{1i},\gamma)$, $Var(R_i | Z_{1i}) = v_{\mu_i} =
\pi(Z_{1i},\gamma) (1-\pi(Z_{1i},\gamma))$, $r_i=
\frac{(R_i-\mu_i)}{\sqrt{v_{\mu_i}}}$ and $\mu_i^{\prime} = \partial \mu_i /
\partial \gamma$. The constant 
$$a(\gamma) = \frac{1}{n} \sum_{i=1}^n
E[\psi_c^L(r_i)] w(Z_{1i}) / \sqrt{v_{\mu_i}} \ \mu_i^{\prime}$$ 
is a correction
term to ensure   
Fisher consistency.

Estimator~\eqref{robGLM.estim} is implemented in the \texttt{R} function
\texttt{glmrob} in package \texttt{robustbase}, where $\psi_c^L(r_i)$ is
the Huber function. In our simulations and application we use the default
settings, namely $c=1.35$ and equal weights $w(Z_{1i})=1$ on the design.

\subsection{Robust regression}
\label{robreg.section}

For the robust fit of the auxiliary regression model, we consider $m_{\xi}(Z_i; \xi)$ in the estimating
equations~\eqref{estim.xi} as the joint M-estimator
of regression and scale \cite[Sec.~4.4.3]{Maro:Mart:Yoha:2006}
defined by solving  
\begin{equation}
\label{Mestim}
\sum_{i=1}^n \left(
\begin{array}{c}
\xi_2 \ \psi_{c_1}^R\left( \frac{Z_{2i} - Z_{1i}^T \xi_{1}}{\xi_2} \right) Z_{1i} \\
\xi_2^2 \left(  \psi_{c_2}^{HP2} \left( \frac{Z_{2i} - Z_{1i}^T \xi_{1}}{\xi_2}
\right)  \right)^2 -  \xi_2^2  \ a(\xi)
\end{array}
\right)= 0.
\end{equation}

In this work, we consider the Tukey function for $\psi_{c_1}^R$, because of its improved breakdown properties over the Huber
version. In addition, the redescending nature of the Tukey function also protect
against leverage points (outliers in the design space). The price to pay when
using a redescending estimator is the fact that the resulting estimating
equations admit more than one minimum. Careful implementation is therefore
required, in 
particular regarding the starting point of the algorithm. The function $
\psi_{c_2}^{HP2}$ is the Huber function, and $a(\xi) = E \left(   \left(
\psi_{c_2}^{HP2} \left( \frac{Z_{2i}- Z_{1i}^T \xi_{1}}{\xi_2} \right)
\right)^2 \right)$ is a consistency correction term. It is calibrated under the
Gaussian assumption, in which case it equals $2 \Phi(c_2) -1 - 2 c_2 \phi(c_2) +
2 c_2^2(1-\Phi(c_2))$, where $\phi$ and $\Phi$ are the density and cumulative
distribution function of a ${\mathcal N}(0,1)$ distribution, respectively. The
default value for $c_2$ (aiming at $95\%$ efficiency for the estimator of the
scale parameter) is  $1.345$.

\subsection{Standard errors of $\hat{\beta}$}
In what follows, we give the expression of the standard errors of $\hat{\beta}$
for the RAIPW defined by
\eqref{robusts.aipw.eq}, based on $\hat{\gamma}$ and
$\hat{\xi}$. The standard errors of the RIPW and ROR estimators can be derived including the straightfoward simplification
and are therefore omitted.

Let $\theta = (\beta^\prime,\gamma^\prime,\xi^\prime)^\prime$ be the vector of
all the parameters. Our proposal jointly solves  
\begin{equation}
\sum_{i=1}^n \Psi(Z_i, R_i, \theta) = \sum_{i=1}^n 
\left( \begin{array}{ccc}
 \varphi_{RAIPW}(Z_i,R_i; \beta, \gamma, \xi) \\
 m_\gamma (R_i,Z_{1i};\gamma)\\
 R_i m_\xi (Z_{i};\xi)
\end{array} \right) =0.
\end{equation}

An alternative expression for the asymptotic variance \citep{Stef:Boos:2002}  is  given by $V_n(\theta) = A_n^{-1}(\theta) B_n(\theta)
A_n^{-1}(\theta)$, where 
$$A_n(\theta) = - \frac{1}{n} \sum_{i=1}^n  \frac{\partial
\Psi(Z_i, R_i, \theta)}{\partial \theta} \ \ \mbox{and} \ \ B_n(\theta) = \frac{1}{n}  
\sum_{i=1}^n \Psi(Z_i, R_i, \theta) \Psi(Z_i, R_i, \theta)^\prime,$$
with
\begin{equation}
\label{An}
A_n(\theta) =  \left(
\begin{array}{c|cc}
\frac{1}{n} \sum_{i=1}^n \frac{\partial \varphi_{RAIPW,i}}{\partial \beta} & \frac{1}{n} \sum_{i=1}^n\frac{\partial \varphi_{RAIPW,i}}{\partial \gamma} & \frac{1}{n} \sum_{i=1}^n \frac{\partial \varphi_{RAIPW,i}}{\partial \xi}  \\
\hline
0_{p \times 2} &  \frac{1}{n} \sum_{i=1}^n\frac{\partial m_{\gamma,i}}{\partial \gamma}  & 0_{p \times dim(\xi)} \\
0_{p \times 2} & 0_{p \times p} & \frac{1}{n} \sum_{i=1}^n R_i \frac{\partial m_{\xi,i}}{\partial \xi}  \\
\end{array} \right) 
\end{equation}

\begin{eqnarray*}
\lefteqn{B_n(\theta) =} \\ 
& & \left(
\begin{array}{c|cc}
 \frac{1}{n} \sum_{i=1}^n \varphi_{RAIPW,i} \varphi_{RAIPW,i}^\prime &  \frac{1}{n} \sum_{i=1}^n \varphi_{RAIPW,i} m_{\gamma,i}^\prime &  \frac{1}{n} \sum_{i=1}^n R_i \varphi_{RAIPW,i} m_{\xi,i}^\prime \\ 
\hline
 \frac{1}{n} \sum_{i=1}^n  m_{\gamma,i} \varphi_{RAIPW,i}^\prime  &  \frac{1}{n} \sum_{i=1}^n  m_{\gamma,i} m_{\gamma,i}^\prime &   \frac{1}{n} \sum_{i=1}^n m_{\gamma,i} m_{\xi,i}^\prime \\
 \frac{1}{n} \sum_{i=1}^n  R_i m_{\xi,i} \varphi_{RAIPW,i}^\prime &  \frac{1}{n} \sum_{i=1}^n R_i m_{\xi,i} m_{\gamma,i}^\prime &  \frac{1}{n} \sum_{i=1}^n R_i m_{\xi,i} m_{\xi,i}^\prime
\end{array} \right) 
\end{eqnarray*}
where $\varphi_{RAIPW,i} = \varphi_{RAIPW}(Z_i,R_i; \beta, \gamma, \xi)$, $ m_{\gamma,i} = m_\gamma (R_i,Z_{1i};\gamma) $ and $m_{\xi,i} = m_\xi (Z_{i};\xi)$.

We estimate each matrix by plugging-in $\hat{\theta}$.
We use this sandwich estimator (rather than the formula involving the influence function, see Proposition~\ref{raipw.prop}) as it has been shown to be more stable in finite samples.

\section{Simulation complements}
We give in this Section some additional details pertaining to our simulation
setting of Section~\ref{simsetting.section}.  

\subsection{Tuning constant details for robust methods}
\label{tuning.section}

To make them comparable, we have tuned the robust methods to
have approximately $95\%$ efficiency at the correctly specified models for clean
data across simulations. This amounts to 
\begin{description}
\item[for RAIPW ] choosing $c_{\mu}$ and $c_{\sigma}$ in
Equation~\eqref{robusts.aipw.eq} (with 
$c= 4.685$ and $c_2 = 1.345$ in \eqref{Mestim}, and $c=1.345$ in
\eqref{robGLM.estim} kept fixed) .
\item[for ROR] choosing separately two values of $c_1$  in Equation \eqref{Mestim} (with $c_2 = 1.345$) to produce two  estimates $\hat{\xi}^{\mu}$ and $\hat{\xi}^{\sigma}$ of $\xi$, to be plugged-in  the solution of \eqref{imp.eq}.
\end{description}
 
Direct tuning of $c_{\mu}$ and $c_{\sigma}$ in
Equation~\eqref{robustmu.ipw.eq} for the robust IPW
estimator is impaired by the large bias, compared to variance, observed for the
estimator due to the difficulty in computing the Fisher consistency correction
term.  We have therefore taken the same values
of $c_{\mu}$ and $c_{\sigma}$  as for the robust AIPW (with $c=1.345$ in
\eqref{robGLM.estim} kept fixed).  
Table~\ref{tuningconstants.table} gives the values 
of the tuning constants obtained for the six considered designs and for each method.
 
 \begin{table}
 \caption{  \label{tuningconstants.table}  Tuning constants used in the simulation study. The values of $\xi$ and
  $\gamma$ under each scenario are given in \eqref{csi.value} and \eqref{gamma.value}.   }
 \centering 
\begin{tabular}{lcccc}\hline 
 & \multicolumn{2}{c}{RAIPW} &
 \multicolumn{2}{c}{ROR} \\ 
 & $c_{\mu}$ & $c_{\sigma}$  & $c_{\mu}$ & $c_{\sigma}$ \\ 
\hline 
$\gamma$ strong, $\xi$ strong & 3.7 & 4.5 & 3.3 & 3.7 \\
$\gamma$ strong, $\xi$ moderate & 3.9 & 4.5 & 3.4 & 3.7 \\
$\gamma$ strong, $\xi$ no & 4 & 4.5 & 3.6 & 4.2\\
\hline 
$\gamma$ moderate, $\xi$ strong & 3.2 & 5.3 & 2.6 & 2.8 \\
$\gamma$ moderate, $\xi$ moderate  & 3.9 & 5.4 & 3 & 3.1\\
$\gamma$ moderate, $\xi$ no  & 4.2 & 5.3 & 3.4 & 3.6\\ \hline
\end{tabular}  
\end{table}

\subsection{Computation of $\beta = (\mu,\sigma)'$}
We need to deduce the values of  $\beta = (\mu,\sigma)'$ for the designs simulated in Section~\ref{sim.section}.
To ease the computation of $\mu$ and $\sigma^2$, we rearrange
model~\eqref{sim.model} as follows: 
$$Z_{2i} = \xi_{10} + \tau^\prime \tilde{\xi} +\xi_{13}X_{3i}+ \xi_{16}
V_{3i} + \epsilon_i ,$$ 
where $\tau = (X_{1}, V_{1}, X_{2},  V_{2})'$ and $\tilde{\xi} =
(\xi_{11},\xi_{14},\xi_{12},\xi_{14} )'$. 

The derivation proceeds by conditioning on $X_{3}$ and using the law of total
expectation and variance. We have
\begin{align*}
\mu & =  \xi_{10} +  E_{X_3}\left( E_{\tau|X_3}(\tau^\prime \tilde{\xi})\right) +
E_{X_3}(\xi_{13} X_{3i}) +  E_{X_3}\left( E_{V_3 | X_3} \left( \xi_{16} V_{3i}
\right) \right) \\
 & =  \xi_{10} + E_{X_3}\left(\tau_{X_3}^\prime \tilde{\xi} \right) + \xi_{13}
 E_{X_3}(X_{3}) +  \xi_{16} E_{X_3}\left( 0.75 X_{3i} + 0.25 (1-X_{3i})\right)  \\
 & =  \xi_{10} + 0.2 \tau_1^\prime \tilde{\xi} +  0.8 \tau_0^\prime  \tilde{\xi}
 + 0.2 (\xi_{13}+  0.5  \xi_{16} )+ 0.25 \xi_{16}, 
\end{align*}
and
\begin{eqnarray*}
\sigma^2 & =  & Var\left( \tau^\prime \tilde{\xi} + \xi_{13} X_{3i} + xi_{16} V_{3i}
\right) + \xi_2^2 \\
& = & Var_{X_3}\left(E_{(\tau|,V)X_3}( \tau^\prime \tilde{\xi} + \xi_{13} X_{3i}
+ \xi_{16} V_{3i})\right) + \\
& & E_{X_3}\left( Var_{(\tau,V)|X_{3i}} \left(  \tau^\prime \tilde{\xi} + \xi_{13} X_{3i} + \xi_{16} V_{3i}\right) \right)  + \xi_2^2 \\
& =&  Var_{X_3}\left( \tau_{X_3}^\prime \tilde{\xi}   + \xi_{13} X_{3i} + 0.5 \xi_{16}  X_{3i} \right) +   \\
 & & E_{X_3}\left( \tilde{\xi}^\prime \Sigma_{X_3}    \tilde{\xi} + \xi_{16}^2  (0.5  X_{3i}  + 0.25)  (0.75 - 0.5  X_{3i}  )  \right) + \xi_2^2  \\
& = & 0.8 \left( \tau_0^\prime \tilde{\xi} - E_{X_3}(\tau_{X_3}^\prime \tilde{\xi} +  X_{3i} + 0.5 \xi_{16}  X_{3i} ) \right)^2 + \\
& & 0.2 \left( \tau_1^\prime \tilde{\xi} +  X_{3i} +0.5 \xi_{16} - E_{X_3}(\tau_{X_3}^\prime \tilde{\xi} +  \xi_{13} X_{3i} + 0.5 \xi_{16}  X_{3i} \right)^2  +\\
& & 0.8\left( \tilde{\xi}^\prime \Sigma_0 \tilde{\xi} + 0.25 \cdot 0.75 \xi_{16}^2 \right) + 0.2\left( \tilde{\xi}^\prime \Sigma_1 \tilde{\xi} + 0.25 \cdot 0.75 \xi_{16}^2 \right)  + \xi_2^2 \\
& = & 0.8 \left( \tau_0^\prime \tilde{\xi} - 0.8 \tau_{0}^\prime \tilde{\xi} - 0.2(\tau_{1}^\prime \tilde{\xi}  + \xi_{13} + 0.5 \xi_{16} ) \right)^2 + \\
& & 0.2 \left( \tau_1^\prime \tilde{\xi} +  \xi_{13} +0.5 \xi_{16} - 0.8 \tau_{0}^\prime \tilde{\xi} - 0.2(\tau_{1}^\prime \tilde{\xi}  + \xi_{13} + 0.5 \xi_{16}  )\right)^2  +\\
& & 0.8\left( \tilde{\xi}^\prime \Sigma_0 \tilde{\xi} + 0.25\cdot  0.75 \xi_{16}^2 \right) + 0.2\left( \tilde{\xi}^\prime \Sigma_1 \tilde{\xi} + 0.25 \cdot 0.75 \xi_{16}^2 \right)  + \xi_2^2.
\end{eqnarray*}

\clearpage
\subsection{Supplementary results for the $\gamma$ moderate-$\xi$ moderate design}
\label{simulationmoderatemoderate.section}

\begin{table}
\caption{\label{CleanModerateModerateTable} Bias, standard deviation and root mean squared error (times 10) of the estimates across simulations for the $\gamma$  moderate-$\xi$ moderate scenario for clean data. } 

\centering 
\begin{tabular}{lrrrrrr}\hline
($\times 10$)  & bias($\hat{\mu}$) & sd($\hat{\mu})$ & $\sqrt{\mbox{mse}(\hat{\mu})}$ & bias($\hat{\sigma}$) & sd($\hat{\sigma})$ & $\sqrt{\mbox{mse}(\hat{\sigma})}$ \\ 
  \hline
IPW(X) & -0.116 & 1.513 & 1.516 & -0.043 & 1.725 & 1.725 \\ 
  AIPW(X,X) & -0.119 & 1.243 & 1.248 & -0.015 & 1.079 & 1.079 \\ 
  AIPW(X,XV) & -0.113 & 1.217 & 1.222 & -0.020 & 1.014 & 1.014 \\ 
  OR(X) & -0.119 & 1.238 & 1.243 & 0.006 & 1.024 & 1.023 \\ 
  OR(XV) & -0.113 & 1.214 & 1.219 & 0.003 & 0.983 & 0.982 \\ 
  RIPW(X) & 3.149 & 1.500 & 3.488 & -0.904 & 1.794 & 2.008 \\ 
  RAIPW(X,X) & -0.118 & 1.284 & 1.289 & -0.021 & 1.111 & 1.111 \\ 
  RAIPW(X,XV) & -0.103 & 1.258 & 1.262 & -0.027 & 1.058 & 1.058 \\ 
  ROR(X) & -0.113 & 1.253 & 1.258 & -0.010 & 1.060 & 1.059 \\ 
  ROR(XV) & -0.109 & 1.230 & 1.235 & -0.007 & 1.005 & 1.005 \\ 
   \hline
IPW($X_{\_}$) & 1.169 & 1.468 & 1.876 & -0.514 & 1.595 & 1.675 \\ 
  AIPW($X_{\_}, XV$) & -0.114 & 1.217 & 1.221 & -0.016 & 1.001 & 1.001 \\ 
  AIPW($X,X_{\_} V$) & -0.105 & 1.227 & 1.231 & -0.027 & 1.081 & 1.081 \\ 
  AIPW($X_{\_},X_{\_}V$) & 0.692 & 1.232 & 1.412 & -0.334 & 1.053 & 1.105 \\ 
  OR($X_{\_}$) & 0.692 & 1.230 & 1.411 & -0.299 & 1.020 & 1.063 \\ 
  RIPW($X_{\_}$) & 4.381 & 1.510 & 4.634 & -1.389 & 1.725 & 2.214 \\ 
  RAIPW($X_{\_}, XV$) & -0.102 & 1.258 & 1.262 & -0.026 & 1.053 & 1.053 \\ 
  RAIPW($X,X_{\_}V$) & -0.084 & 1.286 & 1.289 & -0.056 & 1.137 & 1.137 \\ 
  RAIPW($X_{\_},X_{\_}V$) & 0.698 & 1.292 & 1.468 & -0.354 & 1.123 & 1.177 \\ 
  ROR($X_{\_}V$) & 0.681 & 1.268 & 1.439 & -0.290 & 1.055 & 1.094 \\ \hline
\end{tabular}
\end{table}

\begin{table}
\caption{\label{Contam3ModerateModerateTable} Bias, standard deviation and root mean squared error (times 10) of the estimates across simulations for the $\gamma$  moderate-$\xi$ moderate scenario under C-sym contamination. }

\centering
\begin{tabular}{lrrrrrr}\hline
($\times 10$)  & bias($\hat{\mu}$) & sd($\hat{\mu})$ & $\sqrt{\mbox{mse}(\hat{\mu})}$ & bias($\hat{\sigma}$) & sd($\hat{\sigma})$ & $\sqrt{\mbox{mse}(\hat{\sigma})}$ \\ 
  \hline
IPW(X) & -0.806 & 2.153 & 2.298 & 13.808 & 1.659 & 13.907 \\ 
  AIPW(X,X) & -0.880 & 2.011 & 2.194 & 13.902 & 1.390 & 13.972 \\ 
  AIPW(X,XV) & -0.878 & 1.997 & 2.181 & 13.898 & 1.375 & 13.966 \\ 
  OR(X) & -0.854 & 1.989 & 2.164 & 14.100 & 1.340 & 14.164 \\ 
  OR(XV) & -0.859 & 1.975 & 2.153 & 14.181 & 1.327 & 14.243 \\ 
  RIPW(X) & 3.567 & 1.619 & 3.917 & -1.209 & 2.191 & 2.501 \\ 
  RAIPW(X,X) & 0.217 & 1.407 & 1.423 & -0.208 & 1.478 & 1.492 \\ 
  RAIPW(X,XV) & 0.165 & 1.375 & 1.384 & -0.178 & 1.427 & 1.437 \\ 
  ROR(X) & 0.032 & 1.266 & 1.265 & 0.239 & 1.017 & 1.044 \\ 
  ROR(XV) & 0.008 & 1.253 & 1.253 & 0.188 & 0.975 & 0.993 \\ 
     \hline
     IPW($X{\_}$) & 0.379 & 2.091 & 2.124 & 13.521 & 1.576 & 13.612 \\ 
  AIPW($X_{\_}, XV$) & -0.872 & 1.983 & 2.166 & 13.900 & 1.354 & 13.966 \\ 
  AIPW($(X,X_{\_}V$) & -0.863 & 2.002 & 2.179 & 13.871 & 1.385 & 13.940 \\ 
  AIPW($X_{\_},X_{\_}V$) & -0.113 & 1.985 & 1.987 & 13.678 & 1.358 & 13.745 \\ 
  OR($X_{\_}$) & -0.103 & 1.976 & 1.978 & 13.954 & 1.329 & 14.017 \\ 
  RIPW($X_{\_}$) & 4.766 & 1.592 & 5.025 & -1.732 & 2.047 & 2.681 \\ 
  RAIPW($X_{\_}, XV$) & 0.111 & 1.368 & 1.371 & -0.161 & 1.403 & 1.412 \\ 
  RAIPW($X,X_{\_}V$) & 0.225 & 1.385 & 1.402 & -0.236 & 1.469 & 1.487 \\ 
  RAIPW($X_{\_},X_{\_}V$) & 0.989 & 1.374 & 1.692 & -0.577 & 1.442 & 1.553 \\ 
  ROR($X_{\_}V$) & 0.799 & 1.274 & 1.504 & 0.023 & 1.015 & 1.014 \\ \hline
\end{tabular} 
\end{table}

\begin{table}
\caption{\label{ContamModerateModerateTable}  Bias, standard deviation and root mean squared error (times 10) of the estimates across simulations for the $\gamma$  moderate-$\xi$ moderate scenario under C-hidden contamination.  }

\centering
\begin{tabular}{lrrrrrr}\hline
($\times 10$)  & bias($\hat{\mu}$) & sd($\hat{\mu})$ & $\sqrt{\mbox{mse}(\hat{\mu})}$ & bias($\hat{\sigma}$) & sd($\hat{\sigma})$ & $\sqrt{\mbox{mse}(\hat{\sigma})}$ \\ 
  \hline
IPW & -5.799 & 1.461 & 5.980 & 7.004 & 1.292 & 7.122 \\ 
  AIPW-X & -5.868 & 1.192 & 5.987 & 7.101 & 0.878 & 7.155 \\ 
  AIPW-XV & -5.873 & 1.170 & 5.989 & 7.102 & 0.847 & 7.153 \\ 
  OR-X & -5.864 & 1.204 & 5.986 & 8.390 & 0.817 & 8.430 \\ 
  OR-XV & -5.875 & 1.184 & 5.993 & 8.451 & 0.798 & 8.489 \\ 
  RIPW & 1.053 & 2.116 & 2.363 & 2.786 & 2.617 & 3.821 \\ 
  RAIPW-X & -3.031 & 1.559 & 3.408 & 4.581 & 1.488 & 4.816 \\ 
  RAIPW-XV & -3.089 & 1.519 & 3.442 & 4.615 & 1.440 & 4.835 \\ 
  ROR-X & 0.025 & 1.268 & 1.267 & 0.254 & 1.023 & 1.053 \\ 
  ROR-XV & 0.004 & 1.255 & 1.254 & 0.197 & 0.979 & 0.999 \\ 
   \hline
   IPW($X_{\_}$) & -4.617 & 1.405 & 4.826 & 6.808 & 1.185 & 6.910 \\ 
  AIPW($X_{\_}, XV$) & -5.868 & 1.174 & 5.985 & 7.238 & 0.832 & 7.285 \\ 
  AIPW($(X,X_{\_}V$) & -5.858 & 1.177 & 5.975 & 7.073 & 0.860 & 7.126 \\ 
  AIPW($X_{\_},X_{\_}V$) & -5.110 & 1.179 & 5.244 & 6.985 & 0.839 & 7.035 \\ 
  OR($X_{\_}$) & -5.118 & 1.186 & 5.254 & 8.207 & 0.801 & 8.246 \\ 
  RIPW($X_{\_}$) & 2.548 & 2.078 & 3.287 & 1.968 & 2.573 & 3.238 \\ 
  RAIPW($X_{\_}, XV$) & -3.157 & 1.522 & 3.504 & 4.643 & 1.426 & 4.857 \\ 
  RAIPW($X,X_{\_}V$) & -3.013 & 1.531 & 3.379 & 4.544 & 1.493 & 4.783 \\ 
  RAIPW($X_{\_},X_{\_}V$) & -2.088 & 1.560 & 2.607 & 4.084 & 1.522 & 4.358 \\ 
  ROR($X_{\_}V$) & 0.789 & 1.277 & 1.500 & 0.042 & 1.021 & 1.021 \\ \hline
\end{tabular}
\end{table}

\clearpage
\subsection{ results for the other $\gamma$ - $\xi$ combinations}
\label{simulationothers.section}
\begin{center}
\begin{figure}
\centering
\includegraphics[width=\textwidth]{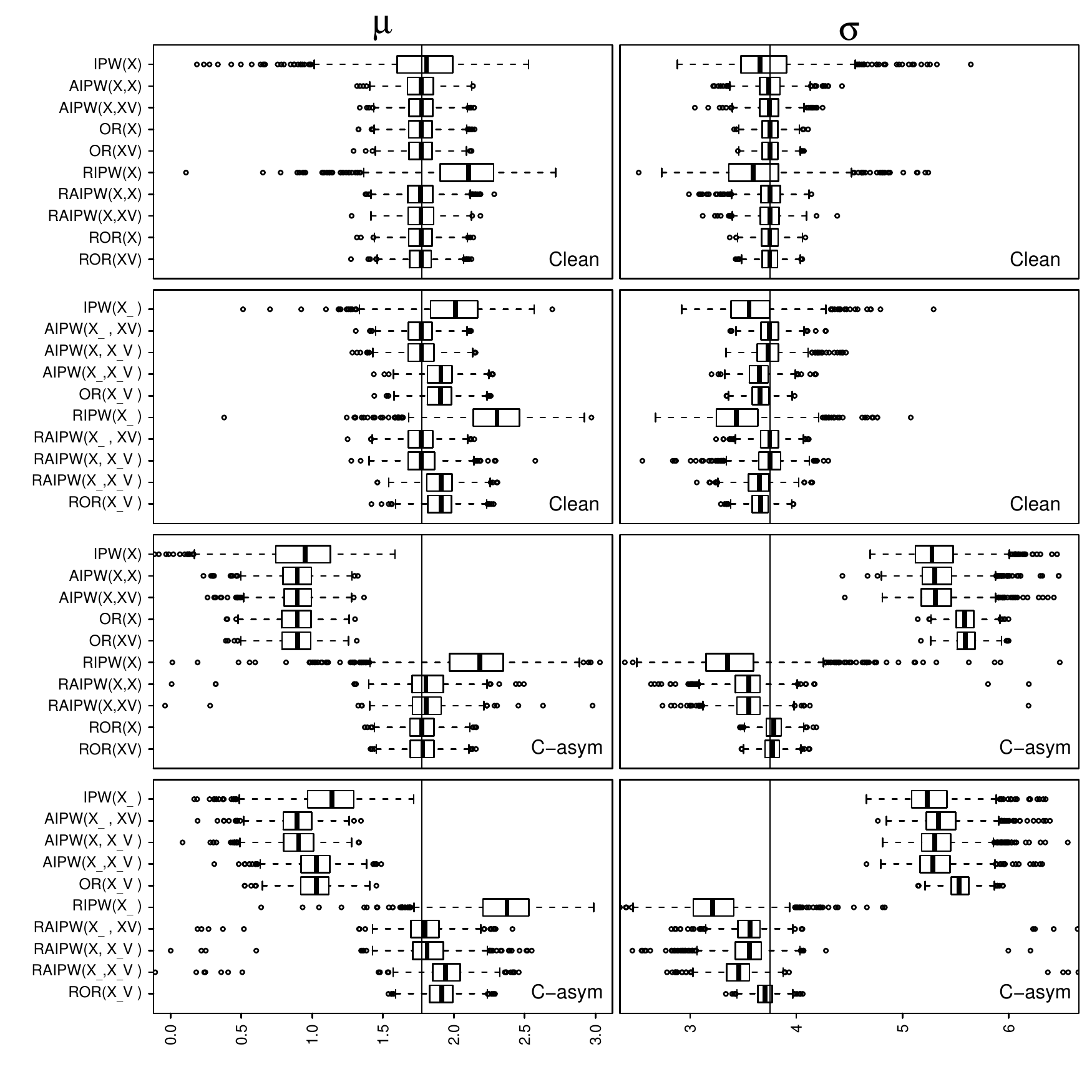} 
\caption{Estimates of $\mu$  (left) and $\sigma$ (right) for the $\gamma$  strong-$\xi$ moderate scenario for clean data and under the C-asym contamination.  The vertical lines represent the true underlying values.}
\label{CleanContam2XiModerateGammaStrongFig}
\end{figure}
\end{center}

\begin{center}
\begin{figure}
\centering
\includegraphics[width=\textwidth]{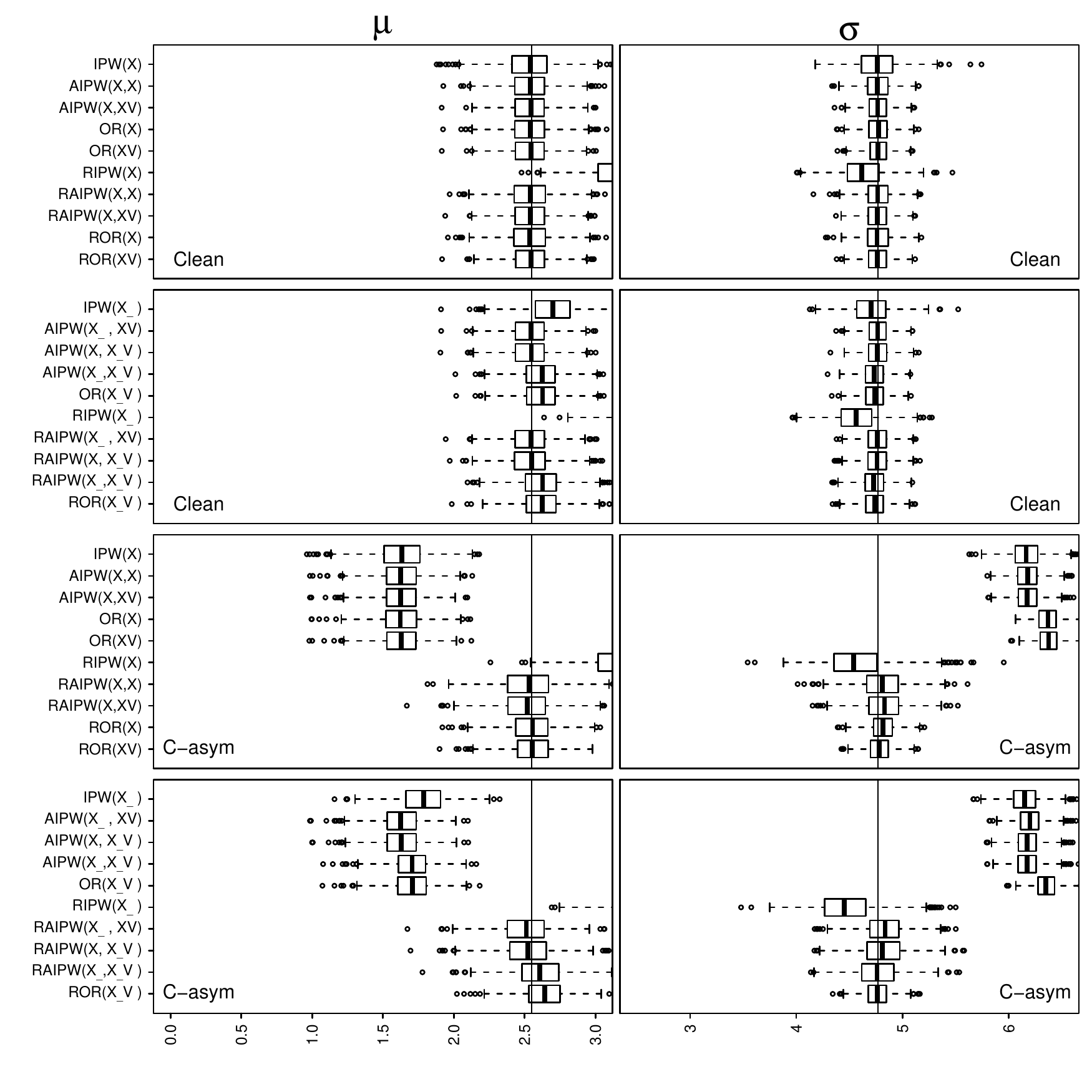} 
\caption{Estimates of $\mu$  (left) and $\sigma$ (right) for the $\gamma$  moderate-$\xi$ strong scenario for clean data and under the C-asym contamination.  The vertical lines represent the true underlying values.}
\label{CleanContam2XiStrongGammaModerateFig}
\end{figure}
\end{center}

\begin{center}
\begin{figure}
\centering
\includegraphics[width=\textwidth]{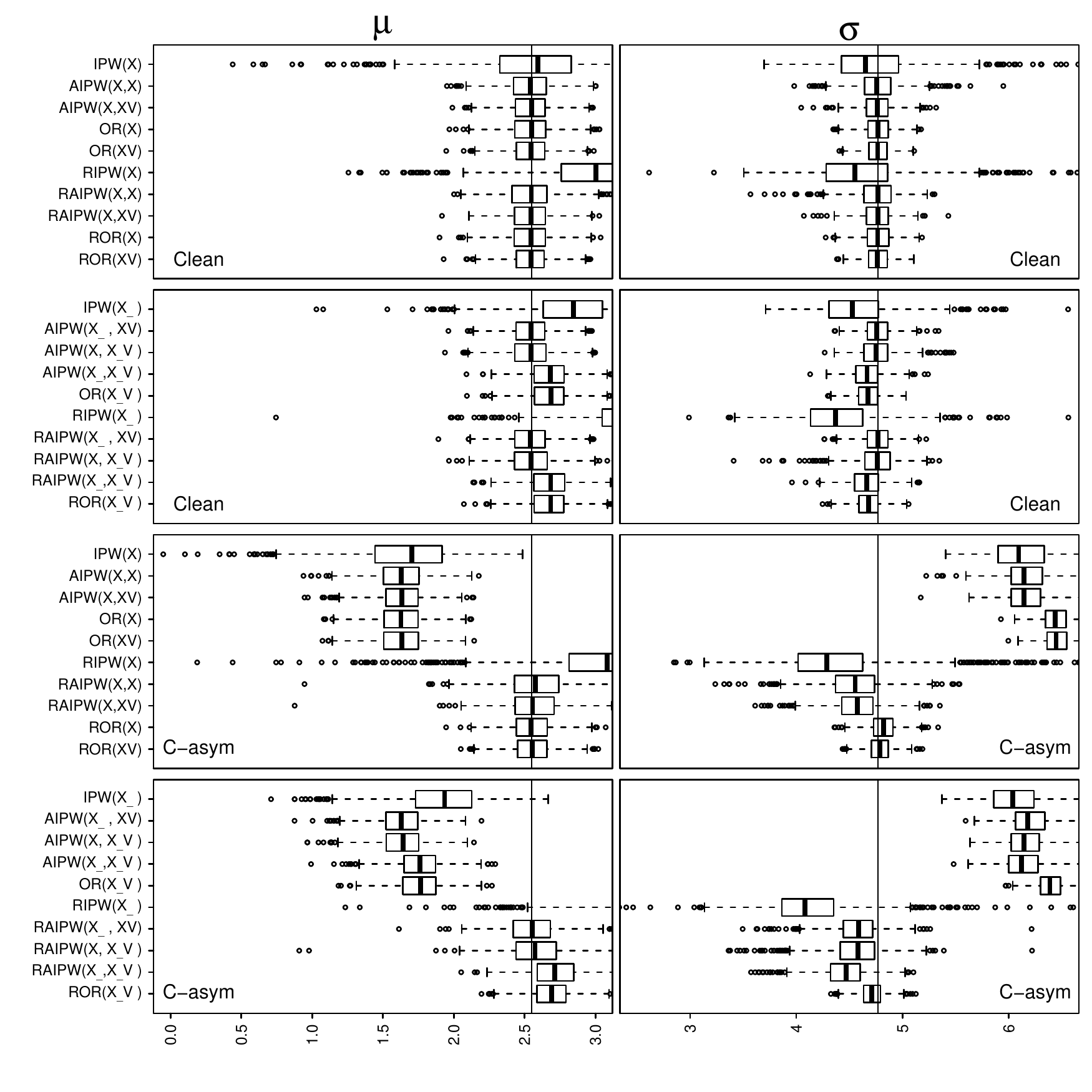} 
\caption{Estimates of $\mu$  (left) and $\sigma$ (right) for the $\gamma$  strong-$\xi$ strong scenario for clean data and under the C-asym contamination.  The vertical lines represent the true underlying values.}
\label{CleanContam2XiStrongGammaStrongFig}
\end{figure}
\end{center}

\begin{center}
\begin{figure}
\centering
\includegraphics[width=\textwidth]{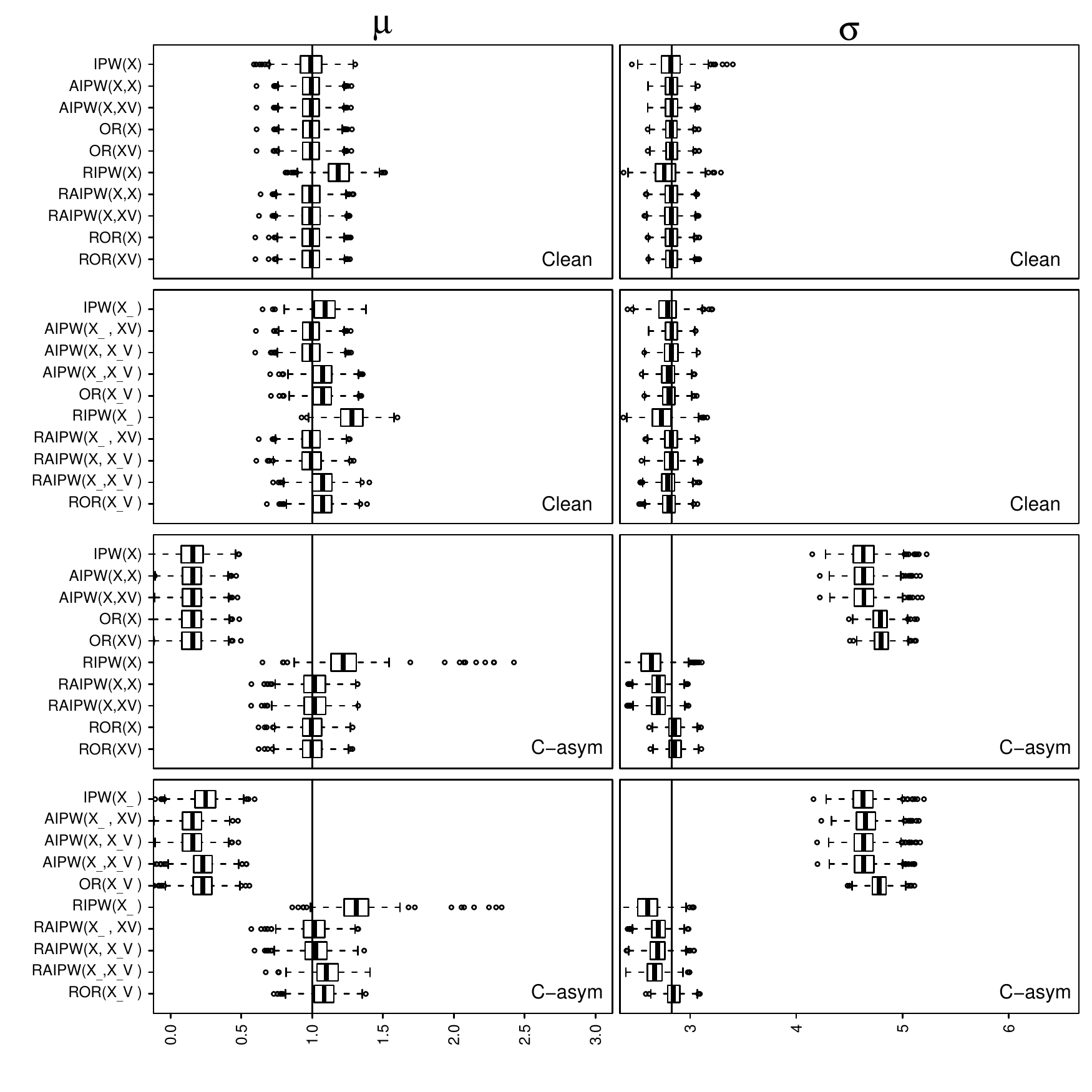} 
\caption{Estimates of $\mu$  (left) and $\sigma$ (right) for the $\gamma$  moderate-$\xi$ no scenario for clean data and under the C-asym contamination.  The vertical lines represent the true underlying values.}
\label{CleanContam2XiNoGammaModerateFig}
\end{figure}
\end{center}

\begin{center}
\begin{figure}
\centering
\includegraphics[width=\textwidth]{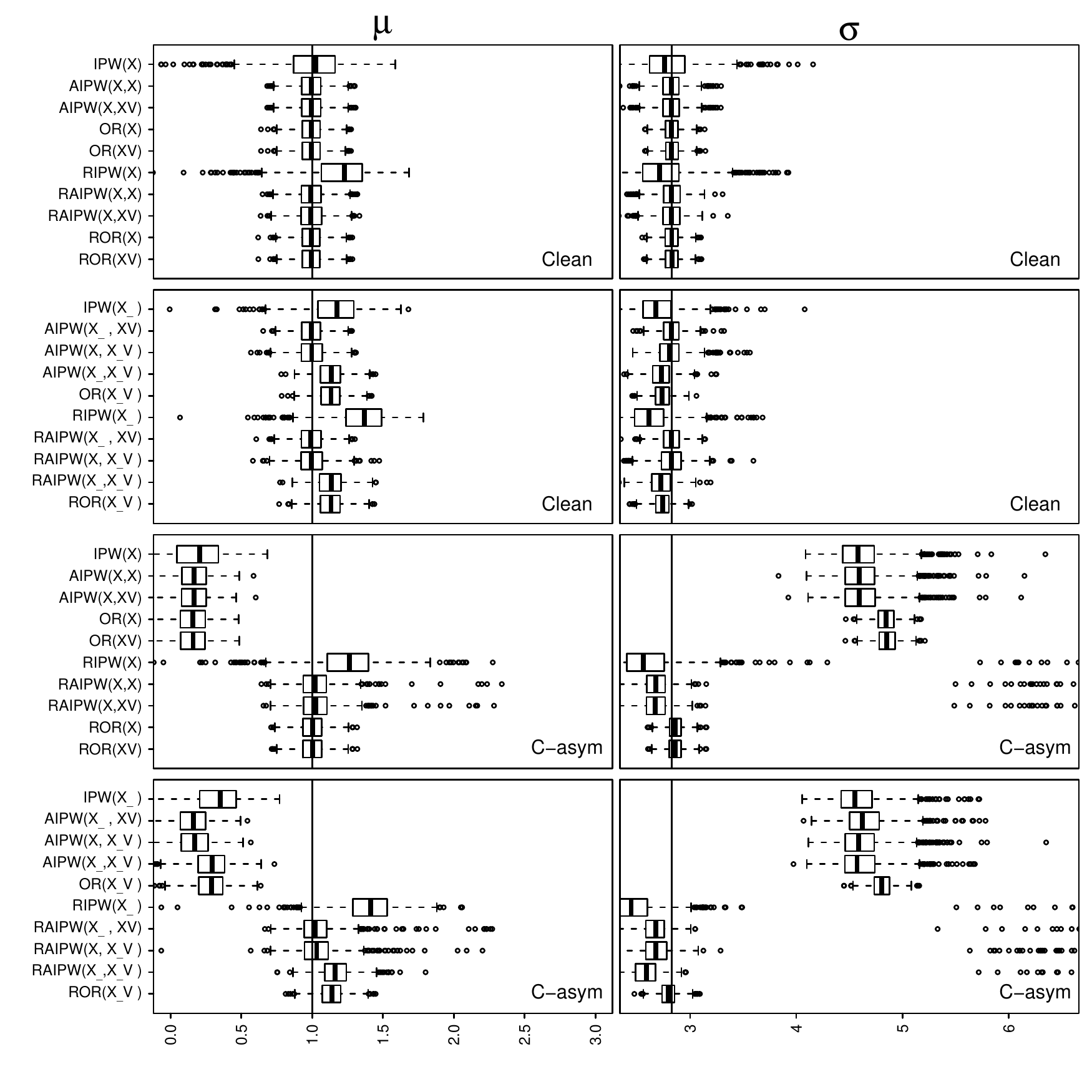} 
\caption{Estimates of $\mu$  (left) and $\sigma$ (right) for the $\gamma$ strong-$\xi$ no scenario for clean data and under the C-asym contamination.  The vertical lines represent the true underlying values.}
\label{CleanContam2XiNoGammaStrongFig}
\end{figure}
\end{center}

\begin{table}
\caption{Bias, standard deviation and root mean squared error (times 10) of the estimates across simulations for the $\gamma$  strong-$\xi$ moderate scenario for clean data.} 
\centering
\begin{tabular}{lrrrrrr} \hline
($\times 10$)  & bias($\hat{\mu}$) & sd($\hat{\mu})$ & $\sqrt{\mbox{mse}(\hat{\mu})}$ & bias($\hat{\sigma}$) & sd($\hat{\sigma})$ & $\sqrt{\mbox{mse}(\hat{\sigma})}$ \\ \hline
IPW(X) & -0.045 & 3.245 & 3.243 & -0.354 & 3.540 & 3.556 \\ 
  AIPW(X,X) & -0.101 & 1.315 & 1.318 & -0.052 & 1.514 & 1.514 \\ 
  AIPW(X,XV) & -0.085 & 1.269 & 1.271 & -0.076 & 1.348 & 1.349 \\ 
  OR(X) & -0.106 & 1.259 & 1.263 & 0.000 & 1.107 & 1.106 \\ 
  OR(XV) & -0.097 & 1.227 & 1.230 & -0.007 & 1.048 & 1.047 \\ 
  RIPW(X) & 2.856 & 3.179 & 4.272 & -1.170 & 3.870 & 4.041 \\ 
  RAIPW(X,X) & -0.090 & 1.415 & 1.417 & -0.059 & 1.545 & 1.546 \\ 
  RAIPW(X,XV) & -0.086 & 1.343 & 1.345 & -0.048 & 1.396 & 1.396 \\ 
  ROR(X) & -0.104 & 1.271 & 1.274 & -0.013 & 1.141 & 1.140 \\ 
  ROR(XV) & -0.092 & 1.235 & 1.238 & -0.014 & 1.069 & 1.069 \\ 
   \hline
   IPW($X_{\_}$) & 2.162 & 2.550 & 3.343 & -1.702 & 2.832 & 3.303 \\ 
  AIPW($X_{\_}, XV$) & -0.094 & 1.249 & 1.252 & -0.049 & 1.216 & 1.216 \\ 
  AIPW($(X,X_{\_}V$) & -0.053 & 1.336 & 1.336 & -0.126 & 1.590 & 1.594 \\ 
  AIPW($X_{\_},X_{\_}V$) & 1.296 & 1.277 & 1.819 & -1.020 & 1.359 & 1.699 \\ 
  OR($X_{\_}$) & 1.280 & 1.239 & 1.781 & -0.924 & 1.103 & 1.439 \\ 
  RIPW($X_{\_}$) & 5.081 & 2.636 & 5.723 & -2.942 & 3.209 & 4.352 \\ 
  RAIPW($X_{\_}, XV$) & -0.088 & 1.315 & 1.317 & -0.041 & 1.296 & 1.296 \\ 
  RAIPW($X,X_{\_}V$) & 0.009 & 1.461 & 1.460 & -0.145 & 1.762 & 1.767 \\ 
  RAIPW($X_{\_},X_{\_}V$) & 1.305 & 1.361 & 1.885 & -1.091 & 1.478 & 1.836 \\ 
  ROR($X_{\_}V$) & 1.274 & 1.267 & 1.796 & -0.921 & 1.141 & 1.465 \\ \hline
\end{tabular}
\end{table}

\begin{table}
\caption{Bias, standard deviation and root mean squared error (times 10) of the estimates across simulations for the $\gamma$  moderate-$\xi$ strong scenario for clean data.} 
\centering
\begin{tabular}{lrrrrrr}\hline
($\times 10$)  & bias($\hat{\mu}$) & sd($\hat{\mu})$ & $\sqrt{\mbox{mse}(\hat{\mu})}$ & bias($\hat{\sigma}$) & sd($\hat{\sigma})$ & $\sqrt{\mbox{mse}(\hat{\sigma})}$ \\ \hline
IPW(X) & -0.141 & 1.930 & 1.934 & -0.041 & 2.179 & 2.178 \\ 
  AIPW(X,X) & -0.145 & 1.596 & 1.602 & -0.007 & 1.401 & 1.400 \\ 
  AIPW(X,XV) & -0.133 & 1.520 & 1.525 & -0.016 & 1.212 & 1.211 \\ 
  OR(X) & -0.145 & 1.587 & 1.593 & 0.023 & 1.313 & 1.312 \\ 
  OR(XV) & -0.134 & 1.518 & 1.523 & 0.003 & 1.188 & 1.187 \\ 
  RIPW(X) & 6.032 & 1.938 & 6.336 & -1.411 & 2.234 & 2.641 \\ 
  RAIPW(X,X) & -0.147 & 1.715 & 1.720 & -0.024 & 1.455 & 1.454 \\ 
  RAIPW(X,XV) & -0.120 & 1.586 & 1.589 & -0.026 & 1.279 & 1.279 \\ 
  ROR(X) & -0.169 & 1.635 & 1.643 & -0.021 & 1.392 & 1.392 \\ 
  ROR(XV) & -0.128 & 1.538 & 1.543 & -0.013 & 1.215 & 1.215 \\ 
   \hline
  IPW($X_{\_}$) & 1.464 & 1.876 & 2.379 & -0.622 & 2.017 & 2.109 \\ 
  AIPW($X_{\_}, XV$) & -0.135 & 1.520 & 1.525 & -0.012 & 1.201 & 1.200 \\ 
  AIPW($(X,X_{\_}V$) & -0.125 & 1.528 & 1.532 & -0.022 & 1.271 & 1.271 \\ 
  AIPW($X_{\_},X_{\_}V$) & 0.672 & 1.530 & 1.670 & -0.331 & 1.247 & 1.289 \\ 
  OR($X_{\_}$) & 0.672 & 1.528 & 1.669 & -0.299 & 1.219 & 1.255 \\ 
  RIPW($X_{\_}$) & 7.509 & 1.976 & 7.764 & -1.993 & 2.161 & 2.939 \\ 
  RAIPW($X_{\_}, XV$) & -0.118 & 1.587 & 1.590 & -0.025 & 1.273 & 1.273 \\ 
  RAIPW($X,X_{\_}V$) & -0.101 & 1.625 & 1.627 & -0.060 & 1.352 & 1.352 \\ 
  RAIPW($X_{\_},X_{\_}V$) & 0.670 & 1.628 & 1.760 & -0.353 & 1.337 & 1.382 \\ 
  ROR($X_{\_}V$) & 0.658 & 1.581 & 1.712 & -0.293 & 1.263 & 1.296 \\ \hline
\end{tabular}
\end{table}

\begin{table}
\caption{Bias, standard deviation and root mean squared error (times 10) of the estimates across simulations for the $\gamma$  strong-$\xi$ strong scenario for clean data.} 
\centering
\begin{tabular}{lrrrrrr}\hline
($\times 10$)  & bias($\hat{\mu}$) & sd($\hat{\mu})$ & $\sqrt{\mbox{mse}(\hat{\mu})}$ & bias($\hat{\sigma}$) & sd($\hat{\sigma})$ & $\sqrt{\mbox{mse}(\hat{\sigma})}$ \\ \hline
IPW(X) & -0.064 & 4.109 & 4.107 & -0.408 & 4.464 & 4.480 \\ 
  AIPW(X,X) & -0.138 & 1.713 & 1.718 & -0.029 & 2.032 & 2.032 \\ 
  AIPW(X,XV) & -0.105 & 1.561 & 1.564 & -0.069 & 1.495 & 1.496 \\ 
  OR(X) & -0.136 & 1.624 & 1.629 & 0.024 & 1.425 & 1.424 \\ 
  OR(XV) & -0.118 & 1.527 & 1.531 & -0.007 & 1.241 & 1.241 \\ 
  RIPW(X) & 4.024 & 3.920 & 5.617 & -1.556 & 4.858 & 5.099 \\ 
  RAIPW(X,X) & -0.106 & 1.894 & 1.896 & -0.099 & 2.085 & 2.086 \\ 
  RAIPW(X,XV) & -0.112 & 1.639 & 1.642 & -0.043 & 1.589 & 1.588 \\ 
  ROR(X) & -0.154 & 1.652 & 1.659 & -0.006 & 1.484 & 1.483 \\ 
  ROR(XV) & -0.112 & 1.531 & 1.534 & -0.015 & 1.259 & 1.258 \\ 
   \hline
   IPW($X_{\_}$) & 2.695 & 3.233 & 4.208 & -2.076 & 3.559 & 4.119 \\ 
  AIPW($X_{\_}, XV$) & -0.114 & 1.545 & 1.548 & -0.043 & 1.382 & 1.382 \\ 
  AIPW($(X,X_{\_}V$) & -0.074 & 1.616 & 1.617 & -0.117 & 1.717 & 1.720 \\ 
  AIPW($X_{\_},X_{\_}V$) & 1.276 & 1.563 & 2.017 & -1.018 & 1.504 & 1.815 \\ 
  OR($X_{\_}$) & 1.259 & 1.532 & 1.982 & -0.920 & 1.283 & 1.579 \\ 
  RIPW($X_{\_}$) & 6.730 & 3.317 & 7.503 & -3.679 & 4.015 & 5.445 \\ 
  RAIPW($X_{\_}, XV$) & -0.112 & 1.614 & 1.618 & -0.035 & 1.489 & 1.488 \\ 
  RAIPW($X,X_{\_}V$) & -0.011 & 1.752 & 1.751 & -0.148 & 1.957 & 1.961 \\ 
  RAIPW($X_{\_},X_{\_}V$) & 1.274 & 1.653 & 2.086 & -1.078 & 1.662 & 1.980 \\ 
  ROR($X_{\_}V$) & 1.254 & 1.555 & 1.996 & -0.920 & 1.314 & 1.604 \\ \hline
\end{tabular}
\end{table}

\begin{table}
\caption{Bias, standard deviation and root mean squared error (times 10) of the estimates across simulations for the $\gamma$  moderate-$\xi$ no scenario for clean data.} 
\centering
\begin{tabular}{lrrrrrr}\hline
($\times 10$)  & bias($\hat{\mu}$) & sd($\hat{\mu})$ & $\sqrt{\mbox{mse}(\hat{\mu})}$ & bias($\hat{\sigma}$) & sd($\hat{\sigma})$ & $\sqrt{\mbox{mse}(\hat{\sigma})}$ \\ \hline
IPW(X) & -0.090 & 1.143 & 1.146 & -0.046 & 1.295 & 1.295 \\ 
  AIPW(X,X) & -0.092 & 0.947 & 0.951 & -0.026 & 0.837 & 0.837 \\ 
  AIPW(X,XV) & -0.092 & 0.947 & 0.951 & -0.025 & 0.837 & 0.837 \\ 
  OR(X) & -0.093 & 0.944 & 0.948 & -0.006 & 0.796 & 0.796 \\ 
  OR(XV) & -0.093 & 0.944 & 0.948 & 0.004 & 0.797 & 0.797 \\ 
  RIPW(X) & 1.867 & 1.138 & 2.187 & -0.603 & 1.356 & 1.483 \\ 
  RAIPW(X,X) & -0.093 & 0.982 & 0.986 & -0.026 & 0.866 & 0.866 \\ 
  RAIPW(X,XV) & -0.083 & 0.983 & 0.986 & -0.033 & 0.870 & 0.870 \\ 
  ROR(X) & -0.091 & 0.958 & 0.962 & -0.013 & 0.815 & 0.814 \\ 
  ROR(XV) & -0.090 & 0.957 & 0.961 & 0.003 & 0.814 & 0.813 \\ 
   \hline
   IPW($X_{\_}$) & 0.874 & 1.108 & 1.411 & -0.396 & 1.200 & 1.263 \\ 
  AIPW($X_{\_}, XV$) & -0.094 & 0.946 & 0.950 & -0.021 & 0.822 & 0.822 \\ 
  AIPW($(X,X_{\_}V$) & -0.085 & 0.960 & 0.963 & -0.033 & 0.908 & 0.909 \\ 
  AIPW($X_{\_},X_{\_}V$) & 0.712 & 0.969 & 1.202 & -0.327 & 0.878 & 0.936 \\ 
  OR($X_{\_}$) & 0.713 & 0.967 & 1.201 & -0.289 & 0.839 & 0.888 \\ 
  RIPW($X_{\_}$) & 2.802 & 1.138 & 3.024 & -0.967 & 1.304 & 1.623 \\ 
  RAIPW($X_{\_}, XV$) & -0.082 & 0.983 & 0.986 & -0.033 & 0.866 & 0.866 \\ 
  RAIPW($X,X_{\_}V$) & -0.064 & 1.010 & 1.012 & -0.054 & 0.952 & 0.953 \\ 
  RAIPW($X_{\_},X_{\_}V$) & 0.720 & 1.022 & 1.249 & -0.347 & 0.939 & 1.001 \\ 
  ROR($X_{\_}V$) & 0.704 & 0.997 & 1.220 & -0.274 & 0.864 & 0.906 \\ \hline
\end{tabular}
\end{table}

\begin{table}
\caption{Bias, standard deviation and root mean squared error (times 10) of the estimates across simulations for the $\gamma$  strong-$\xi$ no scenario for clean data.} 
\centering
\begin{tabular}{lrrrrrr}\hline
($\times 10$)  & bias($\hat{\mu}$) & sd($\hat{\mu})$ & $\sqrt{\mbox{mse}(\hat{\mu})}$ & bias($\hat{\sigma}$) & sd($\hat{\sigma})$ & $\sqrt{\mbox{mse}(\hat{\sigma})}$ \\ \hline
IPW(X) & -0.025 & 2.420 & 2.419 & -0.293 & 2.623 & 2.638 \\ 
  AIPW(X,X) & -0.064 & 1.012 & 1.014 & -0.085 & 1.216 & 1.219 \\ 
  AIPW(X,XV) & -0.065 & 1.013 & 1.014 & -0.084 & 1.215 & 1.217 \\ 
  OR(X) & -0.077 & 0.961 & 0.964 & -0.018 & 0.873 & 0.873 \\ 
  OR(XV) & -0.077 & 0.961 & 0.964 & -0.008 & 0.873 & 0.873 \\ 
  RIPW(X) & 1.870 & 2.382 & 3.027 & -0.737 & 2.873 & 2.965 \\ 
  RAIPW(X,X) & -0.067 & 1.077 & 1.079 & -0.055 & 1.206 & 1.207 \\ 
  RAIPW(X,XV) & -0.062 & 1.080 & 1.081 & -0.058 & 1.200 & 1.201 \\ 
  ROR(X) & -0.072 & 0.975 & 0.978 & -0.021 & 0.891 & 0.891 \\ 
  ROR(XV) & -0.073 & 0.974 & 0.977 & -0.005 & 0.891 & 0.890 \\ 
   \hline
   IPW($X_{\_}$) & 1.630 & 1.909 & 2.51 & -1.289 & 2.115 & 2.476 \\ 
  AIPW($X_{\_}, XV$) & -0.073 & 0.989 & 0.992 & -0.057 & 1.067 & 1.068 \\ 
  AIPW($(X,X_{\_}V$) & -0.033 & 1.097 & 1.097 & -0.135 & 1.461 & 1.467 \\ 
  AIPW($X_{\_},X_{\_}V$) & 1.316 & 1.032 & 1.672 & -0.988 & 1.224 & 1.573 \\ 
  OR($X_{\_}$) & 1.300 & 0.985 & 1.631 & -0.898 & 0.939 & 1.299 \\ 
  RIPW($X_{\_}$) & 3.525 & 1.968 & 4.036 & -2.010 & 2.352 & 3.093 \\ 
  RAIPW($X_{\_}, XV$) & -0.067 & 1.049 & 1.050 & -0.051 & 1.110 & 1.111 \\ 
  RAIPW($X,X_{\_}V$) & 0.019 & 1.179 & 1.178 & -0.109 & 1.507 & 1.510 \\ 
  RAIPW($X_{\_},X_{\_}V$) & 1.325 & 1.104 & 1.724 & -1.056 & 1.276 & 1.656 \\ 
  ROR($X_{\_}V$) & 1.295 & 1.015 & 1.645 & -0.887 & 0.968 & 1.312 \\  \hline
\end{tabular}
\end{table}

\begin{table}
\caption{Bias, standard deviation and root mean squared error (times 10) of the estimates across simulations for the $\gamma$  strong-$\xi$ moderate under C-sym contamination.} 
\centering
\begin{tabular}{lrrrrrr} \hline
  ($\times 10$)  & bias($\hat{\mu}$) & sd($\hat{\mu})$ & $\sqrt{\mbox{mse}(\hat{\mu})}$ & bias($\hat{\sigma}$) & sd($\hat{\sigma})$ & $\sqrt{\mbox{mse}(\hat{\sigma})}$ \\ \hline
IPW(X) & -8.659 & 3.197 & 9.230 & 15.638 & 2.824 & 15.891 \\ 
  AIPW(X,X) & -8.839 & 1.576 & 8.978 & 15.866 & 2.355 & 16.039 \\ 
  AIPW(X,XV) & -8.831 & 1.551 & 8.966 & 15.864 & 2.331 & 16.035 \\ 
  OR(X) & -8.848 & 1.470 & 8.969 & 18.357 & 1.244 & 18.399 \\ 
  OR(XV) & -8.844 & 1.458 & 8.964 & 18.44 & 1.222 & 18.481 \\ 
  RIPW(X) & 3.505 & 3.877 & 5.226 & -3.540 & 5.056 & 6.170 \\ 
  RAIPW(X,X) & 0.326 & 2.077 & 2.101 & -2.011 & 3.159 & 3.744 \\ 
  RAIPW(X,XV) & 0.315 & 2.095 & 2.117 & -2.050 & 3.503 & 4.058 \\ 
  ROR(X) & 0.015 & 1.280 & 1.280 & 0.328 & 1.097 & 1.145 \\ 
  ROR(XV) & 0.015 & 1.251 & 1.250 & 0.231 & 1.015 & 1.040 \\ 
   \hline
   IPW($X_{\_}$) & -6.586 & 2.466 & 7.032 & 15.127 & 2.494 & 15.331 \\ 
  AIPW($X_{\_}, XV$) & -8.865 & 1.528 & 8.996 & 16.217 & 2.212 & 16.367 \\ 
  AIPW($(X,X_{\_}V$) & -8.806 & 1.617 & 8.953 & 15.846 & 2.317 & 16.014 \\ 
  AIPW($X_{\_},X_{\_}V$) & -7.570 & 1.53 & 7.723 & 15.636 & 2.235 & 15.795 \\ 
  OR($X_{\_}$) & -7.559 & 1.441 & 7.695 & 17.888 & 1.216 & 17.929 \\ 
  RIPW($X_{\_}$) & 5.760 & 2.793 & 6.401 & -5.254 & 3.817 & 6.493 \\ 
  RAIPW($X_{\_}, XV$) & 0.164 & 1.921 & 1.927 & -1.863 & 2.864 & 3.415 \\ 
  RAIPW($X,X_{\_}V$) & 0.417 & 2.096 & 2.136 & -2.052 & 3.172 & 3.776 \\ 
  RAIPW($X_{\_},X_{\_}V$) & 1.618 & 2.046 & 2.607 & -2.860 & 3.260 & 4.335 \\ 
  ROR($X_{\_}V$) & 1.359 & 1.262 & 1.854 & -0.493 & 1.059 & 1.168 \\ \hline
\end{tabular}
\end{table}

\begin{table}
\caption{Bias, standard deviation and root mean squared error (times 10) of the estimates across simulations for the $\gamma$  moderate-$\xi$ strong under C-sym contamination.} 
\centering
\begin{tabular}{lrrrrrr} \hline
  ($\times 10$)  & bias($\hat{\mu}$) & sd($\hat{\mu})$ & $\sqrt{\mbox{mse}(\hat{\mu})}$ & bias($\hat{\sigma}$) & sd($\hat{\sigma})$ & $\sqrt{\mbox{mse}(\hat{\sigma})}$ \\ \hline
IPW(X) & -9.203 & 1.891 & 9.395 & 14.010 & 1.716 & 14.114 \\ 
  AIPW(X,X) & -9.240 & 1.594 & 9.376 & 14.064 & 1.358 & 14.13 \\ 
  AIPW(X,XV) & -9.226 & 1.544 & 9.354 & 14.077 & 1.299 & 14.137 \\ 
  OR(X) & -9.247 & 1.619 & 9.388 & 15.988 & 1.184 & 16.032 \\ 
  OR(XV) & -9.233 & 1.567 & 9.365 & 16.074 & 1.134 & 16.114 \\ 
  RIPW(X) & 6.327 & 2.284 & 6.726 & -2.057 & 3.164 & 3.773 \\ 
  RAIPW(X,X) & -0.233 & 2.023 & 2.035 & 0.451 & 2.245 & 2.289 \\ 
  RAIPW(X,XV) & -0.359 & 1.899 & 1.932 & 0.574 & 2.079 & 2.156 \\ 
  ROR(X) & 0.001 & 1.691 & 1.690 & 0.442 & 1.331 & 1.402 \\ 
  ROR(XV) & 0.049 & 1.594 & 1.594 & 0.127 & 1.185 & 1.191 \\ 
   \hline
   IPW($X_{\_}$) & -7.678 & 1.819 & 7.890 & 13.834 & 1.626 & 13.929 \\ 
  AIPW($X_{\_}, XV$) & -9.233 & 1.550 & 9.362 & 14.301 & 1.308 & 14.361 \\ 
  AIPW($(X,X_{\_}V$) & -9.213 & 1.548 & 9.342 & 14.069 & 1.300 & 14.129 \\ 
  AIPW($X_{\_},X_{\_}V$) & -8.456 & 1.534 & 8.594 & 14.074 & 1.298 & 14.134 \\ 
  OR($X_{\_}$) & -8.457 & 1.551 & 8.598 & 15.841 & 1.129 & 15.881 \\ 
  RIPW($X_{\_}$) & 7.900 & 2.238 & 8.211 & -2.929 & 2.902 & 4.122 \\ 
  RAIPW($X_{\_}, XV$) & -0.428 & 1.898 & 1.945 & 0.610 & 2.054 & 2.141 \\ 
  RAIPW($X,X_{\_}V$) & -0.254 & 1.942 & 1.957 & 0.496 & 2.216 & 2.270 \\ 
  RAIPW($X_{\_},X_{\_}V$) & 0.559 & 1.907 & 1.986 & 0.001 & 2.132 & 2.130 \\ 
  ROR($X_{\_}V$) & 0.878 & 1.614 & 1.836 & -0.052 & 1.245 & 1.245 \\ \hline
\end{tabular}
\end{table}

\begin{table}
\caption{Bias, standard deviation and root mean squared error (times 10) of the estimates across simulations for the $\gamma$  strong-$\xi$ strong under C-sym contamination.} 
\centering
\begin{tabular}{lrrrrrr}\hline
  ($\times 10$)  & bias($\hat{\mu}$) & sd($\hat{\mu})$ & $\sqrt{\mbox{mse}(\hat{\mu})}$ & bias($\hat{\sigma}$) & sd($\hat{\sigma})$ & $\sqrt{\mbox{mse}(\hat{\sigma})}$ \\ \hline
IPW(X) & -9.009 & 3.966 & 9.842 & 13.714 & 3.312 & 14.108 \\ 
  AIPW(X,X) & -9.236 & 1.859 & 9.421 & 14.043 & 2.359 & 14.240 \\ 
  AIPW(X,XV) & -9.218 & 1.751 & 9.383 & 14.033 & 2.203 & 14.205 \\ 
  OR(X) & -9.230 & 1.765 & 9.397 & 16.753 & 1.474 & 16.817 \\ 
  OR(XV) & -9.226 & 1.704 & 9.382 & 16.829 & 1.382 & 16.886 \\ 
  RIPW(X) & 4.446 & 5.061 & 6.734 & -3.718 & 6.399 & 7.398 \\ 
  RAIPW(X,X) & 0.337 & 2.369 & 2.392 & -2.219 & 3.233 & 3.920 \\ 
  RAIPW(X,XV) & 0.161 & 2.101 & 2.106 & -1.978 & 2.717 & 3.360 \\ 
  ROR(X) & 0.007 & 1.663 & 1.663 & 0.494 & 1.447 & 1.528 \\ 
  ROR(XV) & 0.022 & 1.561 & 1.560 & 0.177 & 1.202 & 1.215 \\ 
   \hline
   IPW($X_{\_}$) & -6.404 & 3.051 & 7.094 & 12.913 & 2.78 & 13.208 \\ 
  AIPW($X_{\_}, XV$) & -9.250 & 1.742 & 9.413 & 14.407 & 2.127 & 14.563 \\ 
  AIPW($(X,X_{\_}V$) & -9.193 & 1.808 & 9.369 & 14.015 & 2.183 & 14.184 \\ 
  AIPW($X_{\_},X_{\_}V$) & -7.955 & 1.735 & 8.141 & 13.766 & 2.150 & 13.932 \\ 
  OR($X_{\_}$) & -7.940 & 1.678 & 8.115 & 16.224 & 1.373 & 16.282 \\ 
  RIPW($X_{\_}$) & 7.559 & 3.654 & 8.395 & -6.501 & 4.926 & 8.155 \\ 
  RAIPW($X_{\_}, XV$) & 0.050 & 2.018 & 2.018 & -1.921 & 2.374 & 3.053 \\ 
  RAIPW($X,X_{\_}V$) & 0.280 & 2.292 & 2.308 & -2.051 & 3.222 & 3.818 \\ 
  RAIPW($X_{\_},X_{\_}V$) & 1.637 & 1.942 & 2.539 & -3.196 & 2.310 & 3.943 \\ 
  ROR($X_{\_}V$) & 1.366 & 1.564 & 2.076 & -0.574 & 1.233 & 1.360 \\ \hline
\end{tabular}
\end{table}

\begin{table}
\caption{Bias, standard deviation and root mean squared error (times 10) of the estimates across simulations for the $\gamma$  moderate-$\xi$ no under C-sym contamination.} 
\centering
\begin{tabular}{lrrrrrr}\hline
 ($\times 10$)  & bias($\hat{\mu}$) & sd($\hat{\mu})$ & $\sqrt{\mbox{mse}(\hat{\mu})}$ & bias($\hat{\sigma}$) & sd($\hat{\sigma})$ & $\sqrt{\mbox{mse}(\hat{\sigma})}$ \\ \hline
IPW(X) & -8.468 & 1.178 & 8.549 & 18.104 & 1.437 & 18.161 \\ 
  AIPW(X,X) & -8.489 & 1.007 & 8.548 & 18.126 & 1.358 & 18.177 \\ 
  AIPW(X,XV) & -8.489 & 1.013 & 8.550 & 18.129 & 1.357 & 18.179 \\ 
  OR(X) & -8.499 & 1.029 & 8.561 & 19.621 & 0.938 & 19.643 \\ 
  OR(XV) & -8.499 & 1.034 & 8.562 & 19.713 & 0.935 & 19.736 \\ 
  RIPW(X) & 2.265 & 1.537 & 2.737 & -2.104 & 2.567 & 3.318 \\ 
  RAIPW(X,X) & 0.187 & 1.098 & 1.113 & -1.274 & 0.954 & 1.591 \\ 
  RAIPW(X,XV) & 0.192 & 1.105 & 1.121 & -1.276 & 0.956 & 1.594 \\ 
  ROR(X) & 0.003 & 0.987 & 0.986 & 0.287 & 0.792 & 0.842 \\ 
  ROR(XV) & 0.003 & 0.990 & 0.990 & 0.295 & 0.794 & 0.847 \\ 
   \hline
   IPW($X_{\_}$) & -7.559 & 1.125 & 7.642 & 18.094 & 1.391 & 18.148 \\ 
  AIPW($X_{\_}, XV$) & -8.498 & 1.019 & 8.559 & 18.308 & 1.334 & 18.357 \\ 
  AIPW($(X,X_{\_}V$) & -8.476 & 1.022 & 8.538 & 18.121 & 1.356 & 18.171 \\ 
  AIPW($X_{\_},X_{\_}V$) & -7.721 & 1.010 & 7.787 & 18.137 & 1.326 & 18.185 \\ 
  OR($X_{\_}$) & -7.725 & 1.024 & 7.792 & 19.527 & 0.931 & 19.550 \\ 
  RIPW($X_{\_}$) & 3.180 & 1.408 & 3.478 & -2.412 & 2.322 & 3.348 \\ 
  RAIPW($X_{\_}, XV$) & 0.157 & 1.100 & 1.111 & -1.271 & 0.936 & 1.578 \\ 
  RAIPW($X,X_{\_}V$) & 0.268 & 1.126 & 1.157 & -1.319 & 1.036 & 1.677 \\ 
  RAIPW($X_{\_},X_{\_}V$) & 1.052 & 1.108 & 1.527 & -1.607 & 0.995 & 1.890 \\ 
  ROR($X_{\_}V$) & 0.828 & 1.009 & 1.305 & 0.192 & 0.857 & 0.877 \\ \hline
\end{tabular}
\end{table}

\begin{table}
\caption{Bias, standard deviation and root mean squared error (times 10) of the estimates across simulations for the $\gamma$ strong-$\xi$ no under C-sym contamination.} 
\centering
\begin{tabular}{lrrrrrr}\hline
 ($\times 10$)  & bias($\hat{\mu}$) & sd($\hat{\mu})$ & $\sqrt{\mbox{mse}(\hat{\mu})}$ & bias($\hat{\sigma}$) & sd($\hat{\sigma})$ & $\sqrt{\mbox{mse}(\hat{\sigma})}$ \\ \hline
IPW(X) & -8.295 & 2.476 & 8.656 & 17.833 & 2.532 & 18.012 \\ 
  AIPW(X,X) & -8.427 & 1.396 & 8.542 & 17.967 & 2.432 & 18.131 \\ 
  AIPW(X,XV) & -8.430 & 1.406 & 8.546 & 17.977 & 2.432 & 18.141 \\ 
  OR(X) & -8.464 & 1.255 & 8.556 & 20.175 & 1.078 & 20.204 \\ 
  OR(XV) & -8.461 & 1.265 & 8.555 & 20.261 & 1.082 & 20.290 \\ 
  RIPW(X) & 2.102 & 3.607 & 4.173 & -2.039 & 6.092 & 6.421 \\ 
  RAIPW(X,X) & -0.162 & 3.504 & 3.506 & -0.974 & 6.999 & 7.063 \\ 
  RAIPW(X,XV) & -0.137 & 3.527 & 3.528 & -1.079 & 7.302 & 7.378 \\ 
  ROR(X) & 0.010 & 0.977 & 0.977 & 0.311 & 0.839 & 0.894 \\ 
  ROR(XV) & 0.011 & 0.977 & 0.976 & 0.318 & 0.840 & 0.898 \\ 
   \hline
   IPW($X_{\_}$) & -6.758 & 1.948 & 7.033 & 17.606 & 2.367 & 17.764 \\ 
  AIPW($X_{\_}, XV$) & -8.469 & 1.370 & 8.580 & 18.303 & 2.298 & 18.446 \\ 
  AIPW($(X,X_{\_}V$) & -8.405 & 1.477 & 8.533 & 17.958 & 2.415 & 18.119 \\ 
  AIPW($X_{\_},X_{\_}V$) & -7.177 & 1.384 & 7.309 & 17.819 & 2.316 & 17.968 \\ 
  OR($X_{\_}$) & -7.177 & 1.262 & 7.287 & 19.791 & 1.079 & 19.821 \\ 
  RIPW($X_{\_}$) & 3.716 & 3.021 & 4.788 & -3.129 & 5.575 & 6.391 \\ 
  RAIPW($X_{\_}, XV$) & 0.108 & 2.945 & 2.946 & -1.724 & 6.497 & 6.719 \\ 
  RAIPW($X,X_{\_}V$) & -0.031 & 3.472 & 3.471 & -1.088 & 7.157 & 7.236 \\ 
  RAIPW($X_{\_},X_{\_}V$) & 1.321 & 2.923 & 3.206 & -1.913 & 5.474 & 5.796 \\ 
  ROR($X_{\_}V$) & 1.354 & 0.999 & 1.683 & -0.332 & 0.893 & 0.953 \\ \hline
\end{tabular}
\end{table}

\clearpage
\section{Application: supplementary results}
\label{applicationsupp.section}

Table \ref{auxmodel.tab} describes the fitted auxiliary models explaining dropout ($R_i$) and BMI change (outcome $Z_{2i}$) with covariates: (from the health
examinations) measured BMI (bbmi, kg/m$^2$), self reported health (srh, 1 if positive self reported health, zero otherwise) and tobacco use (tob, 1 if cigaretes and/or snus user, 0 otherwise); (from Statistics
Sweden registers) education level (educ, 1 if more than 9 years education, 0 otherwise), number of children under 3 years of age (nrchild3), log annual 
earnings (logearn), annual parental benefits (parbenef), annual sick leave benefits (sickbenef), annual unemployment benefits (unempbenef),
urban living (urban, 1 if urban living area, 0 otherwise); (from the hospitalisation register)
hospitalisation days (no hospitalisation is reference, hosp13 for 1 to 3 days hospitalisation during baseline year, hosp4M for more than 3 days hospitalisation). 

Both OLS/Maximum likelihood (using logit link for the binary indicator $R_i$) and robust estimators are used; see Sections \ref{roblogistic.sec} and \ref{robreg.section}.   
The logistic regression fit and its robust GLM version give similar results. On the other hand, there are clear difference between OLS and its robust version fits. In particular, "self reported health" is significant at 1 \% level in the OLS fit explaining BMI change while it is not anymore (at the 10\% level) with the robust fit. Conversely "sick benefits" becomes clearly significant (1\% level) with the robust estimation. BMI at baseline while significant in both cases, has 10 times lower explaining effect in the robust fit.

The implementation of RIPW, RAIPW and ROR are based on these robust fits of auxiliary models. We use $c_\mu=4$ and $c_\sigma=5$ for the results of Table \ref{results.tab}. These corresponds to values tuned in the simulation study, see Table \ref{tuningconstants.table}, and varying $c_\mu$ and $c_\sigma$ within [2,6] and [3,7] respectively for $c_\mu$ and $c_\sigma$ did not change the results of Table \ref{results.tab} notably.


  \begin{table}
\caption{  \label{auxmodel.tab}  Estimated auxiliary models explaining dropout and change in BMI using the covariates listed in Section  \ref{bmi.sec} of main paper (s.e. in parantheses).} 
  \centering 
\begin{tabular}{@{\extracolsep{5pt}}lcccc}  
\\[-1.8ex]
\hline 
 & \multicolumn{4}{c}{{Dependent variable}} \\ 
\cline{2-5} 
\\[-2.8ex]  & \multicolumn{2}{c}{$R_i=0$: dropout} & \multicolumn{2}{c}{$Z_{2i}$: BMI change} \\ 
\\[-2.8ex] & \multicolumn{1}{c}{{logistic}} & \multicolumn{1}{c}{{robust}} & \multicolumn{1}{c}{{OLS}} & \multicolumn{1}{c}{{robust}} \\ 
\hline \\[-2.8ex] 
  bbmi & -0.033$^{***}$ & -0.033$^{***}$ & -0.203$^{***}$ & -0.029$^{***}$ \\ 
  & (0.007) & (0.007) & (0.010) & (0.007) \\ 
  srh & 0.146$^{**}$ & 0.153$^{**}$ & -0.271$^{***}$ & -0.045 \\ 
  & (0.064) & (0.064) & (0.093) & (0.058) \\ 
  tob & -0.119$^{**}$ & -0.120$^{**}$ & 0.310$^{***}$ & 0.233$^{***}$ \\ 
  & (0.053) & (0.054) & (0.075) & (0.047) \\ 
  educ & 0.026 & 0.028 & -0.188$^{*}$ & -0.040 \\ 
  & (0.069) & (0.070) & (0.101) & (0.063) \\ 
  nrchild3 & -0.156$^{**}$ & -0.168$^{**}$ & -0.046 & 0.047 \\ 
  & (0.072) & (0.073) & (0.105) & (0.065) \\ 
  logearn & 0.050$^{***}$ & 0.049$^{***}$ & -0.027 & 0.007 \\ 
  & (0.013) & (0.013) & (0.020) & (0.013) \\ 
  parbenef & 0.226$^{**}$ & 0.237$^{**}$ & -0.192 & -0.104 \\ 
  & (0.096) & (0.098) & (0.129) & (0.081) \\ 
  sickbenef & 0.366$^{***}$ & 0.356$^{***}$ & 0.167$^{*}$ & 0.258$^{***}$ \\ 
  & (0.068) & (0.069) & (0.091) & (0.057) \\ 
  unempbenef & -0.300$^{***}$ & -0.298$^{***}$ & 0.017 & -0.062 \\ 
  & (0.073) & (0.073) & (0.115) & (0.072) \\ 
  urban & -0.042 & -0.038 & 0.050 & 0.037 \\ 
  & (0.053) & (0.054) & (0.076) & (0.047) \\ 
  hosp13 & -0.433$^{***}$ & -0.436$^{***}$ & 0.179 & 0.036 \\ 
  & (0.138) & (0.139) & (0.220) & (0.137) \\ 
  hosp4M & -0.427$^{**}$ & -0.400$^{**}$ & 0.569$^{**}$ & -0.137 \\ 
  & (0.181) & (0.182) & (0.282) & (0.176) \\ 
  Constant & 1.466$^{***}$ & 1.454$^{***}$ & 6.860$^{***}$ & 1.923$^{***}$ \\ 
  & (0.227) & (0.229) & (0.334) & (0.209) \\ 
 \hline \\[-2.8ex] 
Observations & \multicolumn{1}{c}{7,555} & \multicolumn{1}{c}{7,555} & \multicolumn{1}{c}{5,553} & \multicolumn{1}{c}{5,553} \\ 
\hline  \\[-2.8ex] 
\textit{Note:}  & \multicolumn{4}{r}{$^{*}$p$<$0.1; $^{**}$p$<$0.05; $^{***}$p$<$0.01} \\ 
\end{tabular} 
\end{table}

\end{document}